\definecolor{dullmagenta}{rgb}{0.4,0,0.4}   
\definecolor{darkblue}{rgb}{0,0,0.4}
\newtheorem{theorem}{Theorem}[section]
\theoremstyle{definition}
\newtheorem{assumption}[theorem]{Assumption}
\theoremstyle{remark}
\newtheorem{remark}[theorem]{Remark}
\numberwithin{equation}{section}
\begin{document}

\title[Reductions from Schlesinger to Painlev\'e]{Geometric Analysis of Reductions from Schlesinger 
	Transformations to Difference Painlev\'e Equations}

	\author{Anton Dzhamay}
	\address{School of Mathematical Sciences\\ 
	The University of Northern Colorado\\ 
	Campus Box 122\\ 
	501 20th Street\\ 
	Greeley, CO 80639, USA}
	\email{\href{mailto:adzham@unco.edu}{\texttt{adzham@unco.edu}}}

	\author{Tomoyuki Takenawa}
	\address{Faculty of Marine Technology\\
	Tokyo University of Marine Science and Technology\\
	2-1-6 Etchu-jima, Koto-ku\\ 
	Tokyo, 135-8533, Japan}
	\email{\href{mailto:takenawa@kaiyodai.ac.jp}{\texttt{takenawa@kaiyodai.ac.jp}}}

	\keywords{Integrable systems, Painlev\'e equations, difference equations, isomonodromic transformations, birational transformations}
	\subjclass[2010]{34M55, 34M56, 14E07}

\keywords{Integrable systems, Painlev\'e equations, difference equations, isomonodromic transformations, birational transformations}
\subjclass[2010]{34M55, 34M56, 14E07}

\date{}

\begin{abstract}
	We present two examples of reductions from the evolution equations describing 
	discrete Schlesinger transformations of Fuchsian systems to difference Painlev\'e 
	equations: difference Painlev\'e equation d-$P\left({A}_{2}^{(1)*}\right)$ with the symmetry
	group ${E}^{(1)}_{6}$ and difference Painlev\'e equation d-$P\left({A}_{1}^{(1)*}\right)$ 
	with the symmetry group ${E}^{(1)}_{7}$. In both cases we describe in detail how to 
	compute their Okamoto space of the initial conditions and emphasize the role played by 
	geometry in helping us to understand the structure of the reduction, a choice of a good coordinate
	system describing the equation, and how to compare it with other instances of equations
	of the same type.
\end{abstract}

\maketitle


\section{Introduction} 
\label{sec:introduction}
It is well-known that many special functions, e.g.~the Airy, Bessel,
or Legendre functions, originate as series solutions of \emph{linear}
ordinary differential equations (ODEs). Such functions play a crucial
role in describing a wide range of important physical and mathematical phenomena.
An essential point here is that for linear ordinary differential equations singularities of solutions
can only occur at the points where the coefficients of the equation itself become singular. This makes it 
possible to talk about global properties of solutions (and hence, of the corresponding special functions), 
and the asymptotic behavior of solutions near those fixed singular points.

For nonlinear equations the situation is very different. Although the Cauchy existence theorem guarantees
\emph{local} existence of the solution to a given initial value problem at an ordinary (regular) point, in general 
the domain in which this solution is defined depends not just on the equation itself, but on the initial 
conditions as well --- solutions acquire \emph{movable} (i.e., dependent on the initial values or, equivalently, on the 
constants of integration) singularities. Such singularities are called \emph{critical} if a solution 
looses its single-valued character in a neighborhood of the singularity (e.g., when a singular point is a \emph{branch point}).   
An ODE is said to satisfy the \emph{Painlev\'e property} if its general solution is free of \emph{movable} 
critical singular points. Otherwise, the Riemann surface uniformizing such solution becomes dependent on the 
constants of integration, which prevents global analysis. Thus, equivalently, the Painlev\'e property of an ODE
is the uniformizability of its general solution, see \cite{Con:1999:TPATNODE} (as well as the other excellent
articles in the volume \cite{Cont:1999:TPP}) for a careful overview of these ideas. 

It is clear that liner equations satisfy Painlev\'e property. The importance of the Painlev\'e property for nonlinear 
equations is that, similar to the linear case, solutions of these equations give rise to new transcendental
functions. In that sense, according to M.Kruskal as quoted in \cite{GraRam:2014:DPEAIP}, nonlinear equations satisfying this property 
are on the border between trivially integrable linear
equations and nonlinear equations that are not integrable, and so the Painlev\'e property is essentially
equivalent to (and is a criterion of) integrability.

The search for new transcendental functions was the original motivation in the work of P.Painlev\'e who,
together with his student B.Gambier, had classified all of the rational second-order differential equations
that have the Painlev\'e property, \cite{Pai:1902:SLEDDSOEDSDLGEU}, \cite{Pai:1973:OPPT}, \cite{Gam:1910:SLEDDSOEDPDDLGEAPCF}.
Among 50 classes of equations that they found, only \emph{six} can not be reduced to linear equations or
integrated by quadratures. These equations are now known as $P_\text{I}-P_\text{VI}$. Solutions to 
these equations, the so-called \emph{Painlev\'e transcendents}, are playing an increasingly important 
role in describing a wide range of nonlinear phenomena in mathematics and physics \cite{IwaKimShiYos:1991:FGP}. 

Almost simultaneously with the work of P.Painlev\'e and B.Gambier, the most general Painlev\'e VI equation 
was obtained by R.Fuchs  \cite{Fuc:1905:SQEDLDSO} in the theory of \emph{isomonodromic deformations} of 
Fuchsian systems. This theory, developed in the works of R.Fuchs \cite{Fuc:1907:ULHDZOMDIEGWSS}, 
L.Schlesinger \cite{Sch:1912:UEKDBOFKP}, R.Garnier \cite{Gar:1926:SDPDRPLSDLDSO}, and then extended to the 
non-Fuchsian case by M.Jimbo, T.Miwa, and K.Ueno \cite{JimMiwUen:1981:MPDLODEWRCGTF,JimMiw:1982:MPDLODEWRC}, 
and also by H.Flaschka and A.Newell \cite{FlaNew:1980:MASDI}, as well as the related
Riemann-Hilbert approach \cite{ItsNov:1986:IDMTPE}, \cite{FokItsKapNov:2006:PT}, 
are now among the most powerful methods for studying the structure of the Painlev\'e 
transcendents.

Over the last thirty years a significant effort has been put towards understanding and generalizing results and
methods of the classical theory of completely integrable systems to the \emph{discrete} case. This is true for the 
theory of Painlev\'e equations as well. Discrete Painlev\'e equations were originally defined as second order nonlinear 
difference equations that have usual Painlev\'e equations as continuous limits \cite{BreKaz:1990:ESFTCS}, 
\cite{GroMig:1990:ANTOTQG}. A systematic study of discrete Painlev\'e equations was started by 
B.Grammaticos, J.Hietarinta,  F.Nijhoff, V.Papageorgiou and A.Ramani, \cite{NijPap:1991:SROILADAOTPRIE},
\cite{RamGraHie:1991:DVOTPE}, \cite{GraRamPap:1991:IMHPP}, and many different examples of such equations were 
obtained in a series of papers by Grammaticos, Ramani, and their collaborators by a systematic application 
of the singularity confinement criterion, see reviews \cite{GraRam:2004:DPER}, \cite{GraRam:2014:DPEAIP},
and many references therein. Discrete Painlev\'e equations
also appear in a broad spectrum of important nonlinear problems in mathematics and physics, among which are the 
theory of orthogonal polynomials, quantum gravity, determinantal random point processes, 
reductions of integrable lattice equations, and, notably, as B\"acklund transformations of differential 
Painlev\'e equations. Some of these problems are discrete anlogues or direct discretizations of the corresponding 
nonlinear problems, and some describe purely discrete phenomena. 

It turned out that classifying discrete Painlev\'e equations by their continuous limits, as well as the singularity 
confinement criterion, is not a very good approach,
since such correspondence is far from being bijective. It is both possible for
the same discrete equation to have different continuous Painlev\'e equations as continuous limits 
under different limiting procedures, and for different discrete equations to have the same continuous limit. 
In the seminal paper \cite{Sak:2001:RSAWARSGPE} H.Sakai showed that an effective 
way to understand and classify discrete Painlev\'e equations is through algebraic geometry. In this approach,
to each equation, if we consider it as a two-dimensional first-order nonlinear system, we put in correspondence a family of
algebraic surfaces $\mathcal{X}_{\mathbf{b}}$, where $\mathbf{b} = \{b_{i}\}$ is some collection of
parameters that change, depending on the type of the equation, 
in an additive, multiplicative, or elliptic fashion as functions of a discrete ``time'' parameter. This family $\mathcal{X}_{\mathbf{b}}$,
by a slight abuse of terminology, is called the \emph{Okamoto space of initial conditions} (that we often denote 
simply by $\mathcal{X}$ omitting explicit dependence on parameters $b_{i}$) and it is obtained by resolving the indeterminacy
points of the corresponding map $\varphi: \mathbb{P}^{2} \to \mathbb{P}^{2}$ via the blowing-up procedure. By Sakai's
theory, the complete resolution of indeterminacies is obtained by blowing up 9 (possibly infinitely close) points 
on $\mathbb{P}^{2}$ (or eight points on a birationally equivalent $\mathbb{P}^{1} \times \mathbb{P}^{1}$ compactification 
of $\mathbb{C}^{2}$). The resulting surface $\mathcal{X}$ has a special property that it admits a unique anti-canonical 
divisor $D\in |-K_{\mathcal{X}}|$ of \emph{canonical type}. The orthogonal complement of $-K_{\mathcal{X}}$ in the Picard lattice
$\operatorname{Pic}(\mathcal{X}) \simeq H^{2}(\mathcal{X};\mathbb{Z})$ is described by the affine Dynkin diagram ${E}^{(1)}_{8}$ that
has two intersecting root subsystems of affine type: $R$, that is generated by classes $D_{i}$ of irreducible components of the anticanonical
divisor $D$, and its orthogonal complement $R^{\perp}$ whose corresponding root lattice $Q(R^{\perp})$ is called the \emph{symmetry sub-lattice}.
Then the \emph{type} of the discrete Painlev\'e equation is the same as the \emph{type of its surface} $\mathcal{X}_{\mathbf{b}}$, 
which is just the type of an affine Dynkin diagram describing the root system $R$ of irreducible components $D_{i}$ of 
$D$ (essentially, the degeneration structure of the positions of the blowup points). 
Moreover, nonlinear Painlev\'e dynamic now becomes a translation in the symmetry sub-lattice $Q(R^{\perp})$, and so  
it becomes ``\emph{linearized}'' there, see \cite{Sak:2001:RSAWARSGPE} for details. This is somewhat similar to the algebro-geometric integrability of 
classical integrable systems and soliton equations, where nonlinear dynamic is mapped to commuting linear flows on the Jacobian
of the spectral curve of the associated linear problem via the Abel-Jacobi map.

One important observation from Sakai's geometric approach is that in addition to \emph{additive} (difference) and \emph{multiplicative} ($q$-difference)
\emph{discretizations} of continuous Painlev\'e equations, there are some purely discrete Painlev\'e equations. It also led to the 
discovery of the master \emph{elliptic} discrete Painlev\'e equation such that all of the other Painlev\'e equations can be  obtained from it
through degenerations (which corresponds to more and more special configurations of the blow-up points). This degenerations can be described
by the following scheme, where letters stand for the symmetry type of the equation (which is the type of the affine Dynkin diagram of the root 
subsystem $R^{\perp}$), and the subscripts $e$, $q$, $\delta$, and $c$ stand for elliptic, multiplicative, 
additive and differential Painlev\'e equations respectively, see Figure~\ref{fig:pain-diag}.

\begin{figure}[h]
	{\small
	\begin{equation*}
		\begin{tikzpicture}[>=stealth]
		  \matrix (m) [matrix of math nodes,column sep={1pt},row sep={15pt}]
		  {	
			\left(E^{(1)}_{8}\right)^{e} &&&&&&& \left(A_{1}^{(1)}\right)^{q}  \\
			\left(E^{(1)}_{8}\right)^{q} & \left(E^{(1)}_{7}\right)^{q} & \left(E^{(1)}_{6}\right)^{q} &  
			\node{\left(D^{(1)}_{5}\right)^{q}  };& 
			\left(A^{(1)}_{4}\right)^{q} & \left((A_{1}\!\!+\!\! A_{2})^{(1)}\right)^{q} &
			\left((A_{1}\!\!+\!\!A_{1})^{(1)}\right)^{q} & \left(A_{1}^{(1)}\right)^{q}  & \left(A_{0}^{(1)}\right)^{q} \\
			\left(E^{(1)}_{8}\right)^{\delta} & \left(E^{(1)}_{7}\right)^{\delta} & \left(E^{(1)}_{6}\right)^{\delta} &  
			\left(D^{(1)}_{4}\right)^{c,\delta} & \left(A^{(1)}_{3}\right)^{c,\delta} & \left((2A_{1})^{(1)}\right)^{c,\delta} &
			\left(A_{1}^{(1)}\right)^{c,\delta} & \left(A_{0}^{(1)}\right)^{c} \\
			&&&&&\left(A_{2}^{(1)}\right)^{c,\delta} &
			\left(A_{1}^{(1)}\right)^{c,\delta} & \left(A_{0}^{(1)}\right)^{c} \\		
		  };
			\node [blue] at ($(m-2-4.north) + (0,0.1)$) {q-$P_{\text{VI}}$};
			\node [blue] at ($(m-3-4.south) + (0,-0.1)$) {$P_{\text{VI}}$,\! d-$P_{\text{V}}$};
			\node [blue] at ($(m-2-5.north) + (0,0.1)$) {q-$P_{\text{V}}$};
			\node [blue] at ($(m-1-8.north) + (0,0.1)$) {q-$P_{\text{I}}$};			
			\node [blue] at ($(m-3-5.south) + (0,-0.1)$) {$P_{\text{V}}$,\! d-$P_{\text{IV}}$};
			\node [blue] at ($(m-3-5.south) + (0,-0.4)$) {d-$P_{\text{III}}$};
			\node [blue] at ($(m-2-6.north) + (0,0.1)$) {q-$P_{\text{IV}}$,\! q-$P_{\text{III}}$};	  
			\node [blue] at ($(m-3-6.south) + (0,-0.1)$) {$P_{\text{III}}$};
			\node [blue] at ($(m-3-6.south) + (0.1,-0.4)$) {alt.\! d-$P_{\text{II}}$};
			\node [blue] at ($(m-4-6.south) + (0,-0.1)$) {$P_{\text{IV}}$,\! d-$P_{\text{II}}$};
			\node [right,scale=0.8] at ($(m-1-8.south east) + (-0.3,0.15)$) {${|\alpha|^{2}=8}$};	  
			\node [blue] at ($(m-4-7.south) + (0,-0.1)$) {$P_{\text{II}}$,\! alt.d-$P_{\text{I}}$};
			\node [blue] at ($(m-4-8.south) + (0,-0.1)$) {$P_{\text{I}}$};
			\node [right,scale=0.8] at ($(m-2-7.south east) + (-0.95,0.02)$) {${|\alpha|^{2}=14}$};	  
			\node [right,scale=0.8] at ($(m-2-8.south east) + (-0.65,0.02)$) {${|\alpha|^{2}=4}$};	  
			\draw[->]
			(m-1-1) edge (m-2-1)
			(m-2-1) edge (m-3-1)
			(m-2-2) edge (m-3-2)
			(m-2-3) edge (m-3-3)
			(m-2-4) edge (m-3-4)
			(m-2-5) edge (m-3-5)
			(m-2-6) edge (m-3-6)
			(m-2-7) edge (m-3-7)
			(m-2-8) edge (m-3-8)
			(m-3-5) edge (m-4-6)
			(m-3-6) edge (m-4-7)
			(m-3-8) edge (m-4-8)
			(m-2-7) edge (m-1-8)
			(m-1-8) edge (m-2-9)
			(m-2-9) edge (m-4-8)
			(m-2-8) edge (m-4-7)
			($(m-2-1.east) + (-0.15,0)$) edge ($(m-2-2.west) + (0.05,0)$)
			($(m-2-2.east) + (-0.15,0)$) edge ($(m-2-3.west) + (0.05,0)$)
			($(m-2-3.east) + (-0.15,0)$) edge ($(m-2-4.west) + (0.05,0)$)
			($(m-2-4.east) + (-0.15,0)$) edge ($(m-2-5.west) + (0.05,0)$)
			($(m-2-5.east) + (-0.15,0)$) edge ($(m-2-6.west) + (0.05,0)$)
			($(m-2-6.east) + (-0.15,0)$) edge ($(m-2-7.west) + (0.05,0)$)
			($(m-2-7.east) + (-0.15,0)$) edge ($(m-2-8.west) + (0.05,0)$)
			%
			($(m-3-1.east) + (-0.15,0)$) edge ($(m-3-2.west) + (0.05,0)$)
			($(m-3-2.east) + (-0.15,0)$) edge ($(m-3-3.west) + (0.05,0)$)
			($(m-3-3.east) + (-0.15,0)$) edge ($(m-3-4.west) + (0.05,0)$)
			($(m-3-4.east) + (-0.15,0)$) edge ($(m-3-5.west) + (0.05,0)$)
			($(m-3-5.east) + (-0.15,0)$) edge ($(m-3-6.west) + (0.05,0)$)
			($(m-3-6.east) + (-0.15,0)$) edge ($(m-3-7.west) + (0.05,0)$)
			($(m-4-6.east) + (-0.15,0)$) edge ($(m-4-7.west) + (0.05,0)$)
			($(m-4-7.east) + (-0.15,0)$) edge ($(m-4-8.west) + (0.05,0)$)
			($(m-3-7.east) + (-0.15,0)$) -> ($(m-3-8.west) + (0.05,0)$)
			;
			\draw [-] ($(m-2-6.south) + (-0.48,0)$) to [bend right = 50] ($(m-4-6.north) + (-0.48,0.02)$);
		\end{tikzpicture}
	\end{equation*}}	
	\caption{Inclusion scheme for the symmetry sub-lattices and corresponding Painlev\'e equations}
	\label{fig:pain-diag}
\end{figure}
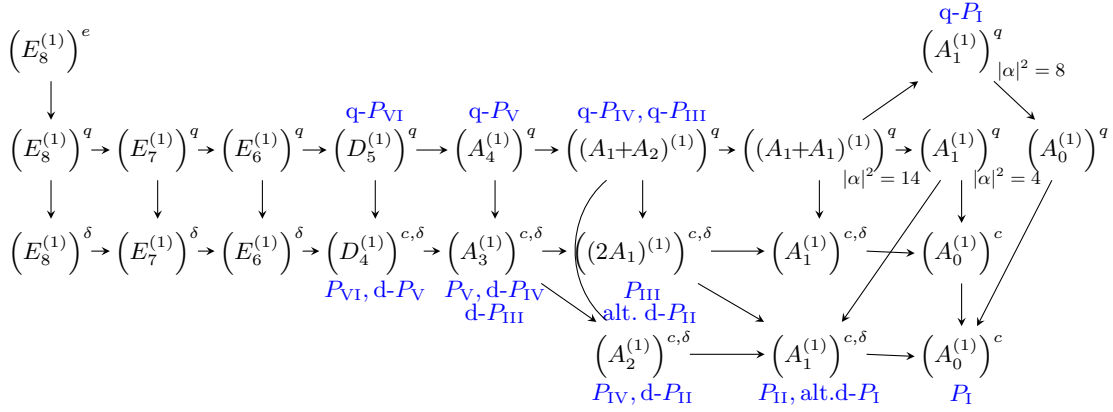

The following question then becomes natural and important: \emph{How to represent these new purely discrete
equations in the isomonodromic framework?} This question was posed by Sakai in \cite{Sak:2007:PDPEATLF} (\textbf{Problem A} 
for the difference case and \textbf{Problems B,C} for the $q$-difference case). 

More precisely, in both continuous and discrete difference case we start with some Fuchsian system and consider its 
isomonodromic deformations. In the continuous case, deformation parameters are
locations of singular points of the system. The resulting isomonodromic flows on the space of coefficients
of the Fuchsian system are given by Schlesinger \emph{equations}. In particular, for a
$2\times 2$ Fuchsian system with four poles, Schlesinger equations
reduce to the most general $P_{\text{VI}}$ equation. In the discrete difference case, deformation parameters
are the characteristic indices of the system and since the isomonodromy
condition requires that the indices change by integral shifts, the
resulting dynamic is indeed \emph{discrete}. It is expressed in the form
of \emph{difference} equations called Schlesinger \emph{transformations}. 
It is also possible to get the isomonodromic description of difference and $q$-difference Painlev\'e equations 
by studying deformations of linear difference and $q$-difference
analogues of Fuchsian systems, see \cite{JimSak:1996:AQOTSPE} and \cite{Bor:2004:ITLSDE} for details.

In \cite{DzhSakTak:2013:DSTTHFADPE} we studied in detail a particular class of Schlesinger 
transformations that are called \emph{elementary}. These transformations change only two of the
characteristic indices of the underlying Fuchsian system (and any other Schlesinger transformation
not involving characteristic indices with multiplicity can be represented as a composition of the 
elementary ones). In particular, we obtained explicit 
evolution equations governing the resulting discrete dynamic. Our objective for the present paper 
is to carefully and very explicitly describe reductions of these \emph{discrete Schlesinger evolution equations} 
to the \emph{difference} Painlev\'e equations.

Since the Painlev\'e equations are of second order, we focus on Fuchsian systems that have two-dimensional
moduli spaces (coordinates on such moduli space are known as accessory parameters). It turns out that, 
modulo two natural transformations called \emph{Katz's addition} and \emph{middle convolution} \cite{Kat:1996:RLS},  
that preserve the deformation equations \cite{HarFil:2007:MCADFFS}, there are only four such systems, \cite{Kos:2001:TDPFZIOR},
that have the \emph{spectral type} $(11,11,11,11)$, $(111,111,111)$, $(22,1111,1111)$, and $(33,222,111111)$ (spectral
type of a Fuchsian system encodes the degeneracies of the characteristic indices, or eigenvalues of residue
matrices at singular points, and it is carefully defined in the next section).  

Isomonodromic deformations of a $(11,11,11,11)$ spectral type Fuchsian system are well known --- continuous deformations reduce 
to Painlev\'e VI equation and Schlesinger transformations reduce to the difference Painlev\'e d-$P(D^{(1)}_{4})$ 
equation, also known as d-$P_{\text{V}}$, and in \cite{DzhSakTak:2013:DSTTHFADPE} we showed that 
in this case our \emph{discrete Schlesinger evolution equations} indeed can be reduced to the standard form of d-$P(D^{(1)}_{4})$.

In \cite{Boa:2009:QADPE} P.Boalch showed that for Fuchsian systems with the spectral types $(111,111,111)$, $(22,1111,1111)$, and 
$(33,222,111111)$ their Schlesinger transformations reduce to 
difference Painlev\'e equations with the required symmetry groups ${E}^{(1)}_{6}$, ${E}^{(1)}_{7}$, and ${E}^{(1)}_{8}$ respectively,
thus providing a theoretical answer to Sakai's \textbf{Problem A}. 

Our goal for the present paper is to make this statement
very concrete via explicit direct computation of reductions of discrete Schlesinger evolution to difference Painlev\'e equations 
with symmetry groups ${E}^{(1)}_{6}$ and ${E}^{(1)}_{7}$ (we plan to consider deformations of a Fuchsian system of the
spectral type $(33,222,111111)$ with the symmetry group ${E}^{(1)}_{8}$ elsewhere). In addition to establishing that the resulting 
difference Painlev\'e equations 
have the required types d-$P(A^{(1)*}_{2})$ and d-$P(A^{(1)*}_{1})$, we explicitly compare the resulting equations with the previously 
known instances of equations of the same type. We do so by finding an explicit identification between their Okamoto spaces of 
initial conditions, which allows us to compute and compare the translation directions for different equations w.r.t. the same root basis,
and also to match generic parameters to the characteristic indices of the Fuchsian system, which in turn allows us to 
see and compare these different equations via their actions on the Riemann scheme of our Fuchsian system. We show that in both examples 
elementary Schlesinger dynamic is indeed more elementary in the sense that standard examples of difference Painlev\'e equations 
can be realized as compositions of elementary Schlesinger transformations. Of particular interest here is
the ${E}^{(1)}_{7}$ case which has two characteristic indices of multiplicity $2$. We show that in that case the standard from 
of the equation can not be represented as a composition of elementary Schlesinger transformations of rank one. Thus we first generalize
our discrete Schlesinger evolution equations from \cite{DzhSakTak:2013:DSTTHFADPE} to elementary Schlesinger transformations of rank two,
and then show how to represent the standard dynamic as a composition of two such rank two transformations. 
We also provide a lot of details on how to compute the Okamoto space of initial conditions for our equations and how to identify 
two different instances of such spaces, hoping that this will be helpful 
for other researchers who are interested in the geometric approach to discrete Painlev\'e equations.

The paper is organized as follows. In Section~\ref{sec:preliminaries} we briefly describe our parameterization
of a Fuchsian system by its spectral and eigensystem data, define elementary Schlesinger transformations,
present evolution equations for elementary Schlesinger transformations as a dynamic on the space of coefficient of our
Fuchsian system, and then show how to split them to get the dynamic on the space of eigenvectors of the coefficient
matrices (this is a brief overview of our paper \cite{DzhSakTak:2013:DSTTHFADPE}). Next we show how to generalize 
this to elementary Schlesinger transformations of rank two, which is a new result.
In Section~\ref{sec:reductions}  we consider two examples of reductions of
the elementary Schlesinger transformation dynamic. The first example of a difference Painlev\'e equation of type 
d-$P\left(\tilde{A}_{2}^{*}\right)$ with the symmetry group $\tilde{E}_{6}$ was also briefly presented 
in \cite{DzhSakTak:2013:DSTTHFADPE}, here we go into a lot more detail and show how the choice of good coordinates, which was 
essentially guessed in \cite{DzhSakTak:2013:DSTTHFADPE}, is really forced on us by the geometric considerations. 
The next example of a $4\times4$ Fuchsian system of the \emph{spectral type} $22,1111,1111$ (i.e., with three poles and two double
eigenvalues at one pole) is a first example that we have which has degenerate eigenvalues, and this is a completely new result.
Finally, we give a brief summary in Section~\ref{sec:conclusion}.

\subsection*{Acknowledgements} 
Part of this work was done when A.D.~was visiting T.T.~at the Tokyo University of Marine Science and Technology and Nalini Joshi at 
the University of Sydney, and A.D.~would like to thank both Universities for the stimulating working environment and, together 
with the University of Northern Colorado, for the generous travel support. 


\section{Preliminaries} 
\label{sec:preliminaries}
The goal of this section is to write down evolution equations for elementary Schlesinger transformations, as well as
to introduce the necessary notation. Our presentation here is very brief and we refer the interested reader to 
\cite{DzhSakTak:2013:DSTTHFADPE} for details. The main new and important result of this section is the generalization
of equations governing elementary Schlesinger transformation dynamic on the decomposition space from rank-one to rank-two 
Schlesinger transformations.

\subsection{Fuchsian Systems} 
Consider a generic \emph{Fuchsian system}
(or a \emph{Fuchsian equation}) written in the 
\emph{Schlesinger normal form}: 
\begin{equation}	
	\frac{d \mathbf{Y}}{dz} = \mathbf{A}(z) \mathbf{Y} = \left( \sum_{i=1}^{n} \frac{ \mathbf{A}_{i} }{ z-z_{i} }  \right) \mathbf{Y},\qquad
	z_{i}\neq z_{j}\text{ for }i\neq j,
	\label{eq:fuchs-1}
\end{equation}
where $\mathbf{A}_{i} = \operatorname{ res }_{z_{i}} \mathbf{A}(z)\, dz$ are constant $m\times m$ matrices. In addition to 
simple poles at finite distinct points $z_{1},\dots, z_{n}$, this system  also has a simple pole at 
 $z_{0} = \infty\in \mathbb{P}^{1}$ if 
$\mathbf{A}_{\infty} = \operatorname{res}_{\infty} \mathbf{A}(z)\, dz = -\sum_{i=1}^{n} \mathbf{A}_{i}\neq \mathbf{0}$. The 
\emph{spectral data} of system~(\ref{eq:fuchs-1}) consists of locations of the simple poles $z_{1},\dots,z_{n}$ and the
eigenvalues (also called the \emph{characteristic indices}) $\theta_{i}^{j}$ of $\mathbf{A}_{i}$ and their multiplicities.
These multiplicities are encoded by the \emph{spectral type} of the system, 
\begin{equation*}
	\mathfrak{m} = m_{1}^{1}\cdots m_{1}^{l_{1}},	m_{2}^{1}\cdots m_{2}^{l_{2}}, \cdots, 
	m_{n}^{1}\cdots m_{n}^{l_{n}}, 	m_{\infty}^{1}\cdots m_{\infty}^{l_{\infty}},
\end{equation*}	 
where partitions $m = m_{i}^{1} + \cdots + m_{i}^{l_{i}}$, $m_{i}^{1}\geq\cdots\geq m_{i}^{l_{i}}\geq 1$ describe 
the multiplicities of the eigenvalues of $\mathbf{A}_{i}$. Spectral type classifies Fuchsian systems up to isomorphisms 
and the operations of addition and middle convolution. 

\subsection{Schlesinger Transformations} 
\label{sub:schlesinger_transformations}

Schlesinger transformations are discrete analogues of the usual Schlesinger differential equations describing isomonodromic 
deformations of our Fuchsian system. They are rational 
transformations preserving the singularity structure and the monodromy data of the system (\ref{eq:fuchs-1}), 
except for the integral shifts in the characteristic indices $\theta_{i}^{j}$, and so the coefficient matrix 
now depends on $\theta_{i}^{j}$,  $\mathbf{A} = \mathbf{A}(z; \mathbf{\Theta})$. Schlesinger transformations 
are given by the following \emph{differential--difference Lax Pair}:
\begin{equation*}
	\left\{ 
	\begin{aligned}
		\frac{d \mathbf{Y}}{dz} 
		& = \mathbf{A}(z;\mathbf{\Theta}) \mathbf{Y} = 
		\left( \sum_{i=1}^{n} \frac{ \mathbf{A}_{i}(\mathbf{\Theta}) }{ z-z_{i} }  \right) \mathbf{Y}, \\
		\bar{\mathbf{Y}}(z) &= \mathbf{R}(z) \mathbf{Y}(z). 
	\end{aligned}
	\right.,
\end{equation*}
where 
$\mathbf{R}(z)$ is a specially chosen rational matrix function called the \emph{multiplier} of the transformation.
The compatibility condition of this Lax Pair is 
\begin{equation}
	\bar{\mathbf{A}}(z;\mathbf{\Theta})\mathbf{R}(z) = \mathbf{R}(z) \mathbf{A}(z;\mathbf{\Theta}) + \frac{d \mathbf{R}(z)}{dz}. \label{eq:sch-transf}
\end{equation}

In \cite{DzhSakTak:2013:DSTTHFADPE} we considered a special class of Schlesinger transformations for which the multiplier matrix
has the form 
\begin{equation}
	\mathbf{R}(z) = \mathbf{I} + \frac{ z_{0} -  \zeta_{0} }{ z - z_{0} }\mathbf{P},\quad \text{where } 
	\mathbf{P}=\mathbf{P}^{2}\text{ is a \emph{projector}}. \label{eq:mult-proj-gen}
\end{equation}
It turns out that in this case it is possible to solve equation~(\ref{eq:sch-transf}) explicitly to obtain a discrete dynamic on 
the space of coefficient matrices. Namely, after substituting $\mathbf{R}(z)$ of the form~(\ref{eq:mult-proj-gen}) in~(\ref{eq:sch-transf})
(and its inverse) we immediately see that $z_{0},\zeta_{0}\in\{z_{i}\}_{i=1}^{n}$, and if we put $z_{0} = z_{\alpha}$ and
$\zeta_{0} = z_{\beta}$, we get the following equations on the coefficient matrices:
\begin{align}
	\bar{\mathbf{A}}_{i} &= \mathbf{R}(z_{i}) \mathbf{A}_{i} \mathbf{R}^{-1}(z_{i})\quad\text{for}\quad i\neq \alpha,\beta\qquad
	\text{(and therefore $\bar{\mathbf{\Theta}}_{i} = \mathbf{\Theta}_{i}$)}, \notag\\
	\mathbf{Q}\bar{\mathbf{A}}_{\alpha} &= \mathbf{A}_{\alpha}\mathbf{Q},\phantom{-\mathbf{P}}\qquad\, 
	\bar{\mathbf{A}}_{\beta} \mathbf{Q} = \mathbf{Q} \mathbf{A}_{\beta},
	\label{eq:Q-cond}\\
	\bar{\mathbf{A}}_{\alpha}\mathbf{P} &= \mathbf{P}\mathbf{A}_{\alpha}  - \mathbf{P},\qquad
	\mathbf{P}\bar{\mathbf{A}}_{\beta} = \mathbf{A}_{\beta}\mathbf{P} + \mathbf{P},\label{eq:P-cond}\\
	\bar{\mathbf{A}}_{\alpha} & = \mathbf{A}_{\alpha} + 
	\sum_{i\neq \alpha}\left(\frac{ z_{\alpha} - z_{\beta} }{ z- z_{\alpha} }\right)(\mathbf{P}\mathbf{A}_{i} - \bar{\mathbf{A}}_{i}\mathbf{P}),\notag\\
	\bar{\mathbf{A}}_{\beta} & = \mathbf{A}_{\beta} + 
	\sum_{i\neq \beta}\left(\frac{ z_{\beta} - z_{\alpha} }{ z- z_{\beta} }\right)(\mathbf{A}_{i}\mathbf{P} - \mathbf{P}\bar{\mathbf{A}}_{i}).\notag
\end{align}
Then either~(\ref{eq:Q-cond}) or~(\ref{eq:P-cond}) imposes important constraints on the projector $\mathbf{P}$:
\begin{equation}
	\mathbf{P}\mathbf{A}_{\alpha}\mathbf{Q} = \mathbf{0}\quad(\text{or } \mathbf{P}\mathbf{A}_{\alpha}\mathbf{P} = \mathbf{P}\mathbf{A}_{\alpha}),\qquad
	\mathbf{Q}\mathbf{A}_{\beta}\mathbf{P} = \mathbf{0}\quad(\text{or } \mathbf{P}\mathbf{A}_{\beta}\mathbf{P} = \mathbf{A}_{\beta}\mathbf{P}),
	\label{eq:P-constr}
\end{equation}
and if this condition is satisfied, we get the following dynamic on the space of coefficient matrices:
\begin{align}
	\bar{\mathbf{A}}_{i} &= \mathbf{R}(z_{i}) \mathbf{A}_{i} \mathbf{R}^{-1}(z_{i}),\qquad i\neq \alpha,\beta, \label{eq:Ai}\\
	\bar{\mathbf{A}}_{\alpha} &= \mathbf{A}_{\alpha} - \mathbf{Q}\mathbf{A}_{\alpha}\mathbf{P}
	- \mathbf{P} + \sum_{i\neq \alpha}\left(\frac{ z_{\beta} - z_{\alpha} }{ z_{i} - z_{\alpha} }\right)\mathbf{P}\mathbf{A}_{i}\mathbf{Q},
	\label{eq:Aalpha}\\
	\bar{\mathbf{A}}_{\beta} &= \mathbf{A}_{\beta} - \mathbf{P}\mathbf{A}_{\beta}\mathbf{Q} + \mathbf{P}
	+ \sum_{i\neq \beta} \left(\frac{ z_{\alpha} - z_{\beta} }{ z_{i} - z_{\beta} }\right)\mathbf{Q}\mathbf{A}_{i}\mathbf{P}.\label{eq:Abeta}
\end{align}
Indeed, 
\begin{align*}
	\bar{\mathbf{A}}_{\alpha} &= \bar{\mathbf{A}}_{\alpha}\mathbf{P} + \bar{\mathbf{A}}_{\alpha}\mathbf{Q}
	= \mathbf{P} \mathbf{A}_{\alpha} - \mathbf{P} + \mathbf{A}_{\alpha} (\mathbf{I} - \mathbf{P}) + 
	\sum_{i\neq \alpha} \left(\frac{ z_{\alpha} - z_{\beta} }{ z_{\alpha} - z_{i} }\right)\mathbf{P} \mathbf{A}_{i}\mathbf{Q}\\
	&= \mathbf{A}_{\alpha} - \mathbf{Q}\mathbf{A}_{\alpha}\mathbf{P}
	- \mathbf{P} +  \sum_{i\neq \alpha}\left(\frac{ z_{\beta} - z_{\alpha} }{ z_{i} - z_{\alpha} } \right)\mathbf{P} \mathbf{A}_{i}\mathbf{Q},
\end{align*}
and the equation for $\bar{\mathbf{A}}_{\beta}$ is obtained in a similar fashion. 
We call these equations \emph{Discrete Schlesinger Evolution Equations}.


\subsection{The Decomposition Space} 
\label{sub:the_decomposition_space}
It is sometimes more convenient to extend the dynamic from the space of coefficients of the Fuchsian system
to the space of eigenvectors of the coefficient matrices, we call this space the \emph{decomposition space}. 
In particular, this is the space on which both the
continuous (\cite{JimMiwMorSat:1980:DMIBFPT}) and the discrete (\cite{DzhSakTak:2013:DSTTHFADPE}) Hamiltonian 
equations for Schlesinger deformations can be written. Before defining this space it is convenient to 
reduce the number of parameters in our system by using scalar gauge transformations of the form 
$\tilde{\mathbf{Y}}(z) = w(z)^{-1} \mathbf{Y}(z)$, where $w(z)$ is a solution of the \emph{scalar} equation
\begin{equation*}
	\frac{ dw }{ dz } = \sum_{i=1}^{n} \frac{ \theta_{i}^{j}}{ z - z_{i} }w.
\end{equation*}
Such transformations change the residue matrices by $\tilde{\mathbf{A}}_{i} = \mathbf{A}_{i} - \theta_{i}^{j} \mathbf{I}$
(and consequently change the residue matrix at infinity by $\tilde{\mathbf{A}}_{\infty} = \mathbf{A}_{\infty} + \theta_{i}^{j} \mathbf{I}$). Hence we can
always make one of the eigenvalues $\theta_{i}^{j}=0$ by choosing a good representative w.r.t.~the action 
by the group of local \emph{scalar gauge transformations}. So we make the following assumption.
\begin{assumption}\label{assume:rank-reduce}
	We always assume that at the finite point $z_{i}$ the 
	eigenvalue  $\theta_{i}^{1}$ of	the highest multiplicity $m_{i}^{1}$ is zero. 
\end{assumption}	

We also need the following important \emph{semi-simplicity} assumption. 

\begin{assumption}\label{assume:diagonal}
	We assume that the coefficient matrices $\mathbf{A}_{i}$ are \emph{diagonalizable} (even when we have multiple eigenvalues).
\end{assumption}	

In view of these assumptions, coefficient matrices $\mathbf{A}_{i}$ are similar to  diagonal matrices 
$\operatorname{diag}\{\theta_{i}^{1},\dots, \theta_{i}^{r_{i}},0,\dots,0\}$,
where  $r_{i} = \operatorname{rank}(\mathbf{A}_{i})$. 
Omitting the zero eigenvalues, we put 
\begin{equation}
\mathbf{\Theta}_{i} = \operatorname{diag}\{\theta_{i}^{1},\dots, \theta_{i}^{r_{i}}\}.\label{eq:theta-i}	
\end{equation}
Further, in view of Assumption~(\ref{assume:diagonal}) there exist full sets of \emph{right} eigenvectors 
$\mathbf{b}_{i,j}$, $\mathbf{A}_{i} \mathbf{b}_{i,j} = \theta_{i}^{j} \mathbf{b}_{i,j}$, 
and \emph{left} eigenvectors $\mathbf{c}_{i}^{j\dag}$, $\mathbf{c}_{i}^{j\dag} \mathbf{A}_{i} = \theta_{i}^{j} \mathbf{c}_{i}^{j\dag}$ 
(we use the $\dag$ symbol to indicate a \emph{row}-vector or a matrix of row vectors).  In the matrix form, omitting vectors with indices 
$j>r_{i}$ that are in the kernel of $\mathbf{A}_{i}$, we can write 
\begin{equation*}
	\mathbf{B}_{i} = \begin{bmatrix}	 \mathbf{b}_{i,1} \cdots \mathbf{b}_{i,r_{i}}	\end{bmatrix}, \quad
	\mathbf{A}_{i} \mathbf{B}_{i} = \mathbf{B}_{i} \mathbf{\Theta}_{i},\quad 
	\mathbf{C}_{i}^{\dag}  = \begin{bmatrix} \mathbf{c}_{i}^{1\dag}\\ \vdots \\ \mathbf{c}_{i}^{r_{i}\dag}	\end{bmatrix}, \quad
	\mathbf{C}_{i}^{\dag} \mathbf{A}_{i} =  \mathbf{\Theta}_{i} \mathbf{C}_{i}^{\dag},
\end{equation*}
with $\mathbf{\Theta}_{i}$ defined by (\ref{eq:theta-i}). Then we have a decomposition
$\mathbf{A}_{i} = \mathbf{B}_{i} \mathbf{C}_{i}^{\dag}$, provided that 
$\mathbf{C}_{i}^{\dag} \mathbf{B}_{i} = \mathbf{\Theta}_{i}$. We call this last condition the 
\emph{orthogonality condition} (since $\mathbf{\Theta}_{i}$ is diagonal) and we assume that it holds even when we have 
repeating eigenvalues. This
condition is related to the normalization ambiguity of the eigenvectors.
Thus, given $\mathbf{A}_{i}$, we can construct (in a non-unique way) a corresponding decomposition pair 
$(\mathbf{B}_{i}, \mathbf{C}_{i}^{\dag})$.
The space of all such pairs for all finite indices $1\leq i\leq n$, without any additional conditions, 
is our \emph{decomposition space}. We denote it as 
\begin{align*}
	\mathcal{B}\times \mathcal{C} &= 
	(\mathbb{C}^{r_{1}}\times\cdots\times \mathbb{C}^{r_{n}})\times ((\mathbb{C}^{r_{1}})^{\dag}\times \cdots\times (\mathbb{C}^{r_{n}})^{\dag})\\
	&\simeq 
	(\mathbb{C}^{r_{1}}\times(\mathbb{C}^{r_{1}})^{\dag})\times\cdots\times (\mathbb{C}^{r_{n}}\times (\mathbb{C}^{r_{n}})^{\dag})
\end{align*}
and write an element $(\mathbf{B},\mathbf{C}^{\dag})$ of this space as a list of $n$ pairs 
$(\mathbf{B}_{1}, \mathbf{C}_{1}^\dag; \cdots; \mathbf{B}_{n}, \mathbf{C}_{n}^{\dag})$. 
Then, given a Riemann Scheme of a Fuchsian system (equivalently, a collection $\mathbf{\Theta} = \{\theta_{i}^{j}\}$ of the characteristic indices
having the correct multiplicities and satisfying the Fuchs relation), 
we denote by 
\begin{equation}
(\mathcal{B}\times \mathcal{C})_{\mathbf{\Theta}} = 
	\{(\mathbf{B}_{1}, \mathbf{C}_{1}^\dag; \cdots; \mathbf{B}_{n}, 
	\mathbf{C}_{n}^{\dag})\in \mathcal{B}\times \mathcal{C}
	\mid \mathbf{C}_{i}^{\dag} \mathbf{B}_{i} = \mathbf{\Theta}_{i},
	\sum_{i=1}^{n} \mathbf{B}_{i} \mathbf{C}_{i}^{\dag} = \mathbf{A}_{\infty}\sim  \mathbf{\Theta}_{\infty}\}
\end{equation}
the corresponding fiber in the decomposition space (since for Schlesinger transformations locations of the poles are just fixed parameters of the dynamic,
we occasionally omit them, as in the above notation). 

\begin{remark}\label{rem:transf}
	There are two natural actions on the decomposition space $\mathcal{B}\times \mathcal{C}$. First, the group  
	$\mathbb{GL}_{m}$ of global gauge transformations of the Fuchsian system induces the following
	action. Given $\mathbf{P}\in \mathbb{GL}_{m}$, 
	we have the action $\mathbf{A}_{i}\mapsto \mathbf{P} \mathbf{A}_{i} \mathbf{P}^{-1}$ which translates into the action
	$(\mathbf{B}_{i},\mathbf{C}_{i}^{\dag}) \mapsto 
	(\mathbf{P}\mathbf{B}_{i},\mathbf{C}_{i}^{\dag} \mathbf{P}^{-1})$. We refer to such transformations as
	\emph{similarity transformations}. It is often necessary to restrict this action to the subgroup 
	$G_{\mathbf{A}_{\infty}}$ preserving the form of $\mathbf{A}_{\infty}$. Second, for any pair
	$(\mathbf{B}_{i}, \mathbf{C}_{i}^{\dag})$ the pair
	$(\mathbf{B}_{i}\mathbf{Q}_{i}, \mathbf{Q}_{i}^{-1}\mathbf{C}_{i}^{\dag})$  
	determines the same matrix $\mathbf{A}_{i}$ for $\mathbf{Q}_{i}\in \mathbb{GL}_{r_{i}}$. 
	The condition 
	$\mathbf{Q}_{i}^{-1}\mathbf{C}_{i}^{\dag} \mathbf{B}_{i}\mathbf{Q}_{i} = 
	\mathbf{Q}_{i}^{-1}\mathbf{\Theta}_{i}\mathbf{Q}_{i} = \mathbf{\Theta}_{i}$
	restricts $\mathbf{Q}_{i}$ to the stabilizer subgroup $G_{\mathbf{\Theta}_{i}}$ of $\mathbb{GL}_{r_{i}}$.
	In particular, when all $\theta_{i}^{j}$ are distinct, 
	$\mathbf{Q}_{i}$ has to be a diagonal matrix.
	We refer to such transformations as \emph{trivial transformations}.  These two actions obviously commute
	with each other. The phase space for the Schlesinger dynamic is the quotient space of $(\mathcal{B}\times \mathcal{C})_{\mathbf{\Theta}}$
	by this action. 
\end{remark}


\subsection{Schlesinger Dynamic on the Decomposition Space} 
\label{sub:schlesinger_dynamic_on_the_decomposition_space}
In this section we explain how to lift the Schlesinger Evolution equations to the decomposition space.

\subsubsection{Rank One} 
\label{ssub:rank_one}
In \cite{DzhSakTak:2013:DSTTHFADPE} we focused on 
the \emph{elementary} Schlesinger transformations $\left\{\begin{smallmatrix}
	\alpha&\beta\\\mu&\nu
\end{smallmatrix}\right\}$ that only change two of the characteristic indices by 
unit shifts, i.e., $\bar{\theta}_{\alpha}^{\mu} = \theta_{\alpha}^{\mu} - 1$ and 
$\bar{\theta}_{\beta}^{\nu} = \theta_{\beta}^{\nu} + 1$, $\alpha\neq \beta$. For such transformations
the projector matrix $\mathbf{P}$ has rank one and the multiplier matrix has the form~(\ref{eq:mult-proj-gen}) with 
\begin{equation}
\mathbf{R}(z) = \mathbf{I} + \frac{ z_{\alpha} - z_{\beta} }{ z - z_{\alpha} }  \mathbf{P},\quad
\text{ where }	\mathbf{P} = \frac{ \mathbf{b}_{\beta,\nu} \mathbf{c}_{\alpha}^{\mu\dag } }{ \mathbf{c}_{\alpha}^{\mu\dag } \mathbf{b}_{\beta,\nu}  },
\quad\text{and we put } \mathbf{Q} = \mathbf{I} - \mathbf{P}.\label{eq:mult-elem}
\end{equation}
In this case, under the semi-simplicity Assumption~(\ref{assume:diagonal}) 
it is possible to decompose equations~(\ref{eq:Ai}--\ref{eq:Abeta}) to get the dynamic on the space 
$(\mathcal{B}\times \mathcal{C})_{\mathbf{\Theta}}$. 

\begin{theorem}[\cite{DzhSakTak:2013:DSTTHFADPE}]\label{thm:evolution}
		An elementary Schlesinger transformation
		$\left\{\begin{smallmatrix}
			\alpha&\beta\\\mu&\nu
		\end{smallmatrix}\right\}$
		defines the map 
		\begin{equation*}
			(\mathcal{B}\times \mathcal{C})_{\mathbf{\Theta}} \to 
			(\bar{\mathcal{B}}\times \bar{\mathcal{C}})_{\bar{\mathbf{\Theta}}}	
		\end{equation*}
		given by the following evolution equations (grouped for convenience), where $c_{i}^{j}$ are arbitrary 
		non-zero constants.
		\begin{enumerate}[(i)]
			\item Transformation vectors:
			\begin{equation}
				\bar{\mathbf{b}}_{\alpha,\mu} = \frac{ 1 }{ c_{\alpha}^{\mu} } \mathbf{b}_{\beta,\nu},\qquad
				\bar{\mathbf{c}}_{\beta}^{\nu\dag} = c_{\beta}^{\nu} \mathbf{c}_{\alpha}^{\mu\dag}.
				\label{eq:bb-cb-generators}				
			\end{equation}
			\item Generic indices:
			\begin{align}
				\bar{\mathbf{b}}_{i,j} &= 
				\frac{ 1 }{ c_{i}^{j} } \mathbf{R}(z_{i}) \mathbf{b}_{i,j},\,(i\neq \alpha\text{ and if } i= \beta, j\neq \nu);
				\label{eq:bb-generic}\\
				\bar{\mathbf{c}}_{i}^{j\dag} &= c_{i}^{j}  \mathbf{c}_{i}^{j\dag} \mathbf{R}^{-1}(z_{i}),\, (i\neq 
				\beta\text{ and if } i=\alpha, j\neq \mu).
				\label{eq:cb-generic}
			\end{align}
			\item Special indices:
			\begin{align}
				\bar{\mathbf{b}}_{\alpha,j} &= \frac{ 1 }{ c_{\alpha}^{j} }\left(\mathbf{I} - 
				 \frac{ \mathbf{P}}{ \theta_{\alpha}^{\mu} - \theta_{\alpha}^{j} - 1 } \left(\sum_{i\neq \alpha} 
				\frac{ z_{\beta} - z_{\alpha}}{ z_{i} - z_{\alpha} }\mathbf{A}_{i}
				\right) \right)\mathbf{b}_{\alpha,j},\qquad j\neq \mu;
				\label{eq:bb-ak}\\
				\bar{\mathbf{c}}_{\beta}^{j\dag} &= c_{\beta}^{j}\mathbf{c}_{\beta}^{j\dag} \left(
				\mathbf{I} - \left(\sum_{i\neq \beta}
				\frac{ z_{\alpha} - z_{\beta} }{ z_{i} - z_{\beta} } \mathbf{A}_{i}\right)
				\frac{ \mathbf{P}}{ 
				\theta_{\beta}^{\nu} - \theta_{\beta}^{j} + 1 } 
				\right),\qquad j\neq \nu;
				\label{eq:cb-bk}\\
				\bar{\mathbf{b}}_{\beta,\nu} &= \frac{ 1 }{ c_{\beta}^{\nu}  }
				\left( (\theta_{\beta}^{\nu} + 1) \mathbf{I} + 
				\phantom{\left(
				\sum_{i\neq \beta} \frac{ z_{\alpha} - z_{\beta}  }{ 
				z_i - z_{\beta}  } \mathbf{A}_{i}\right)} 
				\right. \notag\\
				&\qquad \left.
				\mathbf{Q}	
				\left(\mathbf{I} +  \sum_{j\neq \nu}\frac{ \mathbf{b}_{\beta,j} \mathbf{c}_{\beta}^{j\dag}}{ 
				\theta_{\beta}^{\nu} - \theta_{\beta}^{j} + 1 }\right) \left(
				\sum_{i\neq \beta} \frac{ z_{\alpha} - z_{\beta}  }{ 
				z_i - z_{\beta}  } \mathbf{A}_{i}\right)\right)
				\frac{ \mathbf{b}_{\beta,\nu} }{  \mathbf{c}_{\alpha}^{\mu\dag} \mathbf{b}_{\beta,\nu}  };
				\label{eq:bb-bn}\\
				\bar{\mathbf{c}}_{\alpha}^{\mu\dag} &= c_{\alpha}^{\mu} 
				\frac{ \mathbf{c}_{\alpha}^{\mu\dag} }{ \mathbf{c}_{\alpha}^{\mu\dag} \mathbf{b}_{\beta,\nu}  }\left(
				 (\theta_{\alpha}^{\mu} -  1) \mathbf{I} + 
				\phantom{\left(\sum_{i\neq \alpha}\frac{ z_{\beta} - z_{\alpha}
				 }{ z_{i} - z_{\alpha}  } \mathbf{A}_{i}\right)} 
				\right.	 \notag \\
				&\qquad \left. \left(\sum_{i\neq \alpha}\frac{ z_{\beta} - z_{\alpha}
				 }{ z_{i} - z_{\alpha}  } \mathbf{A}_{i}\right)
				\left( \mathbf{I} + \sum_{j\neq \mu} 
				\frac{ \mathbf{b}_{\alpha,j} \mathbf{c}_{\alpha}^{j\dag} }{ \theta_{\alpha}^{\mu} - \theta_{\alpha}^{j} -1 }
				\right) \mathbf{Q}
				\right).\label{eq:cb-am}
			\end{align}
		\end{enumerate}
		
\end{theorem}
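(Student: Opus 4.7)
The plan is to lift the Discrete Schlesinger Evolution Equations (\ref{eq:Ai})--(\ref{eq:Abeta}) from the space of coefficient matrices to the decomposition space by identifying, for each residue index $i$, right and left eigenvectors of the new matrix $\bar{\mathbf{A}}_{i}$ that correspond to the prescribed eigenvalues $\bar{\theta}_{i}^{j}$. The free multiplicative constants $c_{i}^{j}$ then simply reflect the standard normalization ambiguity of eigenvectors. To conclude that the map lands in the fiber $(\bar{\mathcal{B}}\times\bar{\mathcal{C}})_{\bar{\mathbf{\Theta}}}$ it remains to verify the orthogonality conditions $\bar{\mathbf{C}}_{i}^{\dag}\bar{\mathbf{B}}_{i} = \bar{\mathbf{\Theta}}_{i}$; the remaining constraint $\sum_{i}\bar{\mathbf{B}}_{i}\bar{\mathbf{C}}_{i}^{\dag} = \bar{\mathbf{A}}_{\infty}$ is then automatic from the coefficient-space dynamic.

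Parts (i) and (ii) fall out with essentially no calculation. For generic indices $i\neq\alpha,\beta$, equation (\ref{eq:Ai}) is pure conjugation by $\mathbf{R}(z_{i})$, so every right (resp.\ left) eigenvector of $\mathbf{A}_{i}$ is carried to one of $\bar{\mathbf{A}}_{i}$ with the same eigenvalue, giving (\ref{eq:bb-generic})--(\ref{eq:cb-generic}) up to scalings. For the transformation vectors, I would exploit the constraints (\ref{eq:P-cond}) directly: applying $\bar{\mathbf{A}}_{\alpha}\mathbf{P} = \mathbf{P}\mathbf{A}_{\alpha} - \mathbf{P}$ to $\mathbf{b}_{\beta,\nu}$, and using $\mathbf{P}\mathbf{b}_{\beta,\nu} = \mathbf{b}_{\beta,\nu}$ together with $\mathbf{c}_{\alpha}^{\mu\dag}\mathbf{A}_{\alpha} = \theta_{\alpha}^{\mu}\mathbf{c}_{\alpha}^{\mu\dag}$, immediately yields $\bar{\mathbf{A}}_{\alpha}\mathbf{b}_{\beta,\nu} = (\theta_{\alpha}^{\mu} - 1)\mathbf{b}_{\beta,\nu}$, so $\mathbf{b}_{\beta,\nu}$ is the new right eigenvector at $\alpha$ carrying the shifted eigenvalue. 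The dual manipulation with $\mathbf{P}\bar{\mathbf{A}}_{\beta} = \mathbf{A}_{\beta}\mathbf{P} + \mathbf{P}$ identifies $\mathbf{c}_{\alpha}^{\mu\dag}$ as the new left eigenvector at $\beta$ with eigenvalue $\theta_{\beta}^{\nu} + 1$, producing (\ref{eq:bb-cb-generators}).

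For the easier half of part (iii), I would first rewrite (\ref{eq:Aalpha}) using the identity $\mathbf{P}\mathbf{A}_{\alpha}\mathbf{P} = \theta_{\alpha}^{\mu}\mathbf{P}$ in the convenient form $\bar{\mathbf{A}}_{\alpha} = (\theta_{\alpha}^{\mu}-1)\mathbf{P} + \mathbf{A}_{\alpha}\mathbf{Q} + \mathbf{P}\mathbf{M}_{\alpha}\mathbf{Q}$, where $\mathbf{M}_{\alpha} = \sum_{i\neq\alpha}\tfrac{z_{\beta} - z_{\alpha}}{z_{i} - z_{\alpha}}\mathbf{A}_{i}$. Since the ambient orthogonality gives $\mathbf{c}_{\alpha}^{\mu\dag}\mathbf{b}_{\alpha,j} = 0$, and hence $\mathbf{P}\mathbf{b}_{\alpha,j} = 0$ for $j \neq \mu$, the ansatz $\bar{\mathbf{b}}_{\alpha,j} = \mathbf{b}_{\alpha,j} + \lambda_{j}\mathbf{b}_{\beta,\nu}$ reduces the eigenvalue equation to a single scalar relation pinning down $\lambda_{j} = -1/(\theta_{\alpha}^{\mu} - \theta_{\alpha}^{j} - 1)$ and giving (\ref{eq:bb-ak}); the dual computation produces (\ref{eq:cb-bk}).

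The main obstacle will be the two remaining formulas (\ref{eq:bb-bn}) and (\ref{eq:cb-am}) for the new-eigenvalue eigenvectors $\bar{\mathbf{b}}_{\beta,\nu}$ and $\bar{\mathbf{c}}_{\alpha}^{\mu\dag}$. Here the natural ansatz must allow mixing of $\mathbf{b}_{\beta,\nu}$ with all the other $\mathbf{b}_{\beta,j}$, and the coefficients are to be pinned down by projecting $\bar{\mathbf{A}}_{\beta}\bar{\mathbf{b}}_{\beta,\nu} = (\theta_{\beta}^{\nu} + 1)\bar{\mathbf{b}}_{\beta,\nu}$ onto the left eigenvectors $\mathbf{c}_{\beta}^{j\dag}$; the bookkeeping is heavier because every term in (\ref{eq:Abeta}) now contributes and one must carefully separate the $\mathbf{P}$- and $\mathbf{Q}$-components of the various sums. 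Once (\ref{eq:bb-bn})--(\ref{eq:cb-am}) are in place, the orthogonality conditions $\bar{\mathbf{c}}_{i}^{j'\dag}\bar{\mathbf{b}}_{i,j} = \delta_{jj'}\bar{\theta}_{i}^{j}$ become a lengthy but routine check, in which the identities $\mathbf{R}^{-1}(z_{\alpha}) = \mathbf{R}(z_{\beta}) = \mathbf{Q}$ reconcile the apparent singularities of $\mathbf{R}^{\pm 1}$ at $z_{\alpha}$ and $z_{\beta}$ and make the generic formulas (\ref{eq:bb-generic})--(\ref{eq:cb-generic}) compatible with the special formulas at $\alpha$ and $\beta$.
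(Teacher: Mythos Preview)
Your outline for parts (i), (ii), and the first half of (iii) matches the paper's approach. (The paper does not actually reprove Theorem~\ref{thm:evolution} here---it is quoted from \cite{DzhSakTak:2013:DSTTHFADPE}---but the proof of the rank-two analogue, Theorem~\ref{thm:rank2}, follows exactly this template and the paper explicitly says the rank-one proof is the same in spirit.)

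There is, however, a genuine gap in your plan for (\ref{eq:bb-bn}) and (\ref{eq:cb-am}). The ansatz ``mixing of $\mathbf{b}_{\beta,\nu}$ with all the other $\mathbf{b}_{\beta,j}$'' cannot succeed: the vectors $\{\mathbf{b}_{\beta,j}\}_{j=1}^{r_{\beta}}$ span only an $r_{\beta}$-dimensional subspace (recall Assumption~\ref{assume:rank-reduce} forces $r_{\beta}<m$), whereas a glance at (\ref{eq:bb-bn}) shows $\bar{\mathbf{b}}_{\beta,\nu}$ contains a term $\mathbf{Q}(\cdots)\mathbf{M}_{\beta}\mathbf{b}_{\beta,\nu}$ with $\mathbf{M}_{\beta}=\sum_{i\neq\beta}\tfrac{z_{\alpha}-z_{\beta}}{z_{i}-z_{\beta}}\mathbf{A}_{i}$, which generically falls outside $\operatorname{span}\{\mathbf{b}_{\beta,j}\}$. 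Correspondingly, projecting the eigenvalue equation onto the $r_{\beta}$ old left eigenvectors $\mathbf{c}_{\beta}^{j\dag}$ gives only $r_{\beta}$ scalar conditions, not enough to determine an $m$-vector up to scale.

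The paper's device avoids this by \emph{bootstrapping} rather than solving an eigenvalue equation directly. At the point where you reach for (\ref{eq:bb-bn}), every $\bar{\mathbf{c}}_{\beta}^{j\dag}$ (for all $j$) and every $\bar{\mathbf{b}}_{\beta,j}$ for $j\neq\nu$ has already been found. Write the new decomposition
\[
\bar{\mathbf{A}}_{\beta}=\bar{\mathbf{b}}_{\beta,\nu}\bar{\mathbf{c}}_{\beta}^{\nu\dag}+\sum_{j\neq\nu}\bar{\mathbf{b}}_{\beta,j}\bar{\mathbf{c}}_{\beta}^{j\dag},
\]
substitute the known formulas (\ref{eq:bb-cb-generators}), (\ref{eq:bb-generic}), (\ref{eq:cb-bk}), equate to the coefficient-space expression (\ref{eq:Abeta}), and then right-multiply by the \emph{old} vector $\mathbf{b}_{\beta,\nu}$. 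Since $\bar{\mathbf{c}}_{\beta}^{\nu\dag}\mathbf{b}_{\beta,\nu}=c_{\beta}^{\nu}\,\mathbf{c}_{\alpha}^{\mu\dag}\mathbf{b}_{\beta,\nu}$ is a known nonzero scalar and every $\bar{\mathbf{c}}_{\beta}^{j\dag}\mathbf{b}_{\beta,\nu}$ for $j\neq\nu$ is explicitly computable from (\ref{eq:cb-bk}) (using $\mathbf{P}\mathbf{b}_{\beta,\nu}=\mathbf{b}_{\beta,\nu}$), this yields a single \emph{vector} equation with $\bar{\mathbf{b}}_{\beta,\nu}$ the sole unknown. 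Solving it gives (\ref{eq:bb-bn}) directly; (\ref{eq:cb-am}) is obtained by the dual manipulation. This is the step you should replace your ansatz with.
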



\subsubsection{Rank Two} 
\label{ssub:rank_two}
For the difference Painlev\'e equation d-$P({A}_{1}^{(1)*})$ we need to study Schlesinger transformations of a Fuchsian system 
that has the spectral type $22,1111,1111$, and so we need to consider Schlesinger transformations that change not one but two eigenvalues
at each point $z_{\alpha}$ and $z_{\beta}$. In this section we show how to obtain the corresponding dynamic on the decomposition space.
The resulting equations suggest what happens in the general case of a projector $\mathbf{P}$ of arbitrary rank, but since the focus of the
present paper is on examples, we plan to consider the general case elsewhere.

Naively, we want to consider Schlesinger transformations of the form
\begin{equation}
	\left\{\begin{smallmatrix}
		\alpha & \beta \\ \mu_{1} & \nu_{1} \\ \mu_{2} & \nu_{2}
	\end{smallmatrix}\right\}	
	= \left\{\begin{smallmatrix}
		\alpha & \beta \\ \mu_{1} & \nu_{1} 
	\end{smallmatrix}\right\}	
	\circ
	\left\{\begin{smallmatrix}
			\alpha & \beta \\ \mu_{2} & \nu_{2} 
	\end{smallmatrix}\right\} = 
	\left\{\begin{smallmatrix}
		\alpha & \beta \\ \mu_{2} & \nu_{2} 
	\end{smallmatrix}\right\}	
	\circ
	\left\{\begin{smallmatrix}
			\alpha & \beta \\ \mu_{1} & \nu_{1} 
	\end{smallmatrix}\right\}.
\end{equation}
However, if one of the characteristic indices (say, $\alpha$) has multiplicity,
applying a rank-one elementary Schlesinger transformation will change the 
spectral type of the equation (e.g., in our example, a rank-one transformation
$\left\{\begin{smallmatrix}
	1 & 2 \\ 1 & 1 
\end{smallmatrix}\right\}$ maps the moduli space of Fuchsian equations 
of spectral type $(22,1111,1111)$ to a smaller moduli space $(112, 1111,1111)$,
and in fact our formulas in this case do not work, since some of the expressions
become undefined). Thus, we need to develop the rank-two version of the 
elementary Schlesinger transformation separately. We start with a composition of
two rank-one maps to get an insight on the structure of the multiplier matrix
in the rank-two case, but then proceed to derive the dynamic equations independently.
The resulting equations are then defined on moduli spaces of Fuchsian systems that 
have multiplicity in the spectral type (e.g., in our example, the map 
$\left\{\begin{smallmatrix}
	1 & 2 \\ 1 & 1 \\ 2 & 2
\end{smallmatrix}\right\}$
is defined on both moduli spaces of Fuchsian systems of spectral 
type $1111,1111,1111$ and $22,1111,1111$). 
%
So we start with the multiplier matrix that is a product (and for simplicity we put  $\mu_{i} = i$ and $\nu_{j} = j$ for this derivation)
\begin{align*}
	\mathbf{R}(z) &= \bar{\mathbf{R}}_{1}(z) \mathbf{R}_{2}(z) = 
	\left(\mathbf{I} + \frac{ z_{\alpha} - z_{\beta} }{ z - z_{\alpha} } \bar{\mathbf{P}}_{1}\right)
	\left(\mathbf{I} + \frac{ z_{\alpha} - z_{\beta} }{ z - z_{\alpha} } \mathbf{P}_{2}\right),\\
	\intertext{where, in view of (\ref{eq:mult-elem}) and (\ref{eq:bb-generic}--\ref{eq:cb-generic}),}
	\mathbf{P}_{i} &= \frac{ \mathbf{b}_{\beta,i} \mathbf{c}_{\alpha}^{i\dag} }{ \mathbf{c}_{\alpha}^{i\dag} \mathbf{b}_{\beta,i}  },\qquad	
	\text{and}\qquad
	\bar{\mathbf{P}}_{1} = 
	\frac{ \bar{\mathbf{b}}_{\beta,1} \bar{\mathbf{c}}_{\alpha}^{1\dag} }{ \bar{\mathbf{c}}_{\alpha}^{1\dag} \bar{\mathbf{b}}_{\beta,1}  } = 
	\frac{ \mathbf{Q}_{2}\mathbf{b}_{\beta,1} \mathbf{c}_{\alpha}^{1\dag}\mathbf{Q}_{2} }{ 
	\mathbf{c}_{\alpha}^{1\dag} \mathbf{Q}_{2} \mathbf{b}_{\beta,1}  } = 
	\frac{ \mathbf{Q}_{2} \mathbf{P}_{1} \mathbf{Q}_{2} }{ \operatorname{Tr}(\mathbf{Q}_{2} \mathbf{P}_{1}) }.
	%
	\intertext{Here $\mathbf{Q}_{i} = \mathbf{I} - \mathbf{P}_{i}$ is, as usual, the complementary projector. Then, since clearly 
	$\bar{\mathbf{P}}_{1}\mathbf{P}_{2} = \mathbf{0}$,	}
	\mathbf{R}(z) &= \mathbf{I} + \frac{ z_{\alpha} - z_{\beta} }{ z - z_{\alpha} } \mathcal{P},\qquad\text{where } 
	\mathcal{P} = \bar{\mathbf{P}}_{1} + \mathbf{P}_{2}\text{ is also a projector.}
\end{align*}
Let us now rewrite $\mathcal{P}$ in a more symmetric form. First note that, since $\mathbf{P}_{i}$ are rank-one projectors, 
\begin{equation*}
	\operatorname{Tr}(\mathbf{Q}_{2}\mathbf{P}_{1}) = 	\operatorname{Tr}(\mathbf{P}_{1} - \mathbf{P}_{2}\mathbf{P}_{1}) = 
		1 - \operatorname{Tr}(\mathbf{P}_{1}\mathbf{P}_{2}) = \operatorname{Tr}(\mathbf{Q}_{1}\mathbf{P}_{2}). 
\end{equation*}
Also, note that for any rank-one projector $\mathbf{S}$ and for any matrix $\mathbf{M}$ we have an identity 
$\mathbf{SMS} = \operatorname{Tr}(\mathbf{MS}) \mathbf{S}$.
Therefore,
\begin{align*}
	\mathcal{P} &= \frac{ \mathbf{Q}_{2}\mathbf{P}_{1} \mathbf{Q}_{2} + \mathbf{P}_{2}\mathbf{Q}_{1}\mathbf{P}_{2} }{ 
	\operatorname{Tr}(\mathbf{Q}_{2}\mathbf{P}_{1}) } = 
	\frac{\mathbf{Q}_{2}\mathbf{P}_{1}}{\operatorname{Tr}(\mathbf{Q}_{2}\mathbf{P}_{1})} + 
	\frac{\mathbf{Q}_{1}\mathbf{P}_{2}}{\operatorname{Tr}(\mathbf{Q}_{1}\mathbf{P}_{2})} = \mathcal{P}_{1} + \mathbf{\mathcal{P}}_{2},
\end{align*}
where 
\begin{equation}
	\mathcal{P}_{1} = \frac{\mathbf{Q}_{2}\mathbf{P}_{1}}{\operatorname{Tr}(\mathbf{Q}_{2}\mathbf{P}_{1})} = 
	\frac{\mathbf{Q}_{2}\mathbf{b}_{\beta,1}\mathbf{c}_{\alpha}^{1\dag}}{\mathbf{c}_{\alpha}^{1\dag}\mathbf{Q}_{2}\mathbf{b}_{\beta,1}},\qquad
	\mathcal{P}_{2} = \frac{\mathbf{Q}_{1}\mathbf{P}_{2}}{\operatorname{Tr}(\mathbf{Q}_{1}\mathbf{P}_{2})} = 
	\frac{\mathbf{Q}_{1}\mathbf{b}_{\beta,2}\mathbf{c}_{\alpha}^{2\dag}}{\mathbf{c}_{\alpha}^{2\dag}\mathbf{Q}_{1}\mathbf{b}_{\beta,2}}
	\label{eq:P-projs}
\end{equation}
are two mutually orthogonal rank-one projectors, $\mathcal{P}_{i}^{2} = \mathcal{P}_{i}$,
$\mathcal{P}_{1} \mathcal{P}_{2} = \mathcal{P}_{2} \mathcal{P}_{1} = \mathbf{0}$. 
At the same time, since obviously $\mathbf{Q}_{2}\mathbf{P}_{1} + \mathbf{Q}_{1}\mathbf{P}_{2} = \mathbf{P}_{1}\mathbf{Q}_{2} + \mathbf{P}_{2}\mathbf{Q}_{1}$,
$\mathcal{P} = \tilde{\mathcal{P}}_{1} + \tilde{\mathcal{P}}_{2}$, where
\begin{equation}
	\tilde{\mathcal{P}}_{1} = \frac{\mathbf{P}_{1}\mathbf{Q}_{2}}{\operatorname{Tr}(\mathbf{Q}_{2}\mathbf{P}_{1})} = 
	\frac{\mathbf{b}_{\beta,1}\mathbf{c}_{\alpha}^{1\dag}\mathbf{Q}_{2}}{\mathbf{c}_{\alpha}^{1\dag}\mathbf{Q}_{2}\mathbf{b}_{\beta,1}},\qquad
	\tilde{\mathcal{P}}_{2} = \frac{\mathbf{P}_{2}\mathbf{Q}_{1}}{\operatorname{Tr}(\mathbf{Q}_{1}\mathbf{P}_{2})} = 
	\frac{\mathbf{b}_{\beta,2}\mathbf{c}_{\alpha}^{2\dag}\mathbf{Q}_{1}}{\mathbf{c}_{\alpha}^{2\dag}\mathbf{Q}_{1}\mathbf{b}_{\beta,2}}.
	\label{eq:Pt-projs}
\end{equation}
We also put $\mathcal{Q}_{i} = \mathbf{I} - \mathcal{P}_{i}$, $\tilde{\mathcal{Q}}_{i} = \mathbf{I} - \tilde{\mathcal{P}}_{i}$, and 
$\mathcal{Q} = \mathbf{I} - \mathcal{P}$.
In view of the orthogonality conditions $\mathbf{C}_{i}^{\dag}\mathbf{B}_{i} = \mathbf{\Theta}_{i}$, it is easy to describe the eigenvectors for each 
of those projectors (we do it just for $\mathcal{P}$s since for $\mathcal{Q}$s eigenvectors are the same but eigenvalues swap between $0$ and $1$, 
below we use the notation $(\theta,\mathbf{w}^{\dag},\mathbf{v})$, where $\theta$ is an eigenvalue (which is either $0$ or $1$ for projectors),
$\mathbf{w}^{\dag}$ is a row (or left) eigenvector and $\mathbf{v}$ is a column (or right) eigenvector):
\begin{align}
	\operatorname{Eigen}(\mathbf{P}_{i}) &=\{(1; \mathbf{c}_{\alpha}^{i\dag}, \mathbf{b}_{\beta,i}),\,
	(0; \mathbf{c}_{\beta}^{j\dag},\mathbf{b}_{\alpha,j})\text{ for } j\neq i \},\quad i=1,2;\\
	\operatorname{Eigen}(\mathcal{P}_{1}) &= \{(1; \mathbf{c}_{\alpha}^{1\dag},\mathbf{Q}_{2}\mathbf{b}_{\beta,1}), \,
	(0; \mathbf{c}_{\alpha}^{2\dag},\mathbf{b}_{\alpha,2}),\, (0; \mathbf{c}_{\beta}^{j\dag},\mathbf{b}_{\alpha,j})\text{ for }j>2   \};\\
	\operatorname{Eigen}(\mathcal{P}_{2}) &= \{(1; \mathbf{c}_{\alpha}^{2\dag},\mathbf{Q}_{1}\mathbf{b}_{\beta,2}), \,
	(0; \mathbf{c}_{\alpha}^{1\dag},\mathbf{b}_{\alpha,1}),\, (0; \mathbf{c}_{\beta}^{j\dag},\mathbf{b}_{\alpha,j})\text{ for }j>2   \};\\
	\operatorname{Eigen}(\tilde{\mathcal{P}}_{1}) &= \{(1; \mathbf{c}_{\alpha}^{1\dag}\mathbf{Q}_{2},\mathbf{b}_{\beta,1}), \,
	(0; \mathbf{c}_{\beta}^{2\dag},\mathbf{b}_{\beta,2}),\, (0; \mathbf{c}_{\beta}^{j\dag},\mathbf{b}_{\alpha,j})\text{ for }j>2   \};\\
	\operatorname{Eigen}(\tilde{\mathcal{P}}_{1}) &= \{(1; \mathbf{c}_{\alpha}^{2\dag}\mathbf{Q}_{1},\mathbf{b}_{\beta,2}), \,
	(0; \mathbf{c}_{\beta}^{1\dag},\mathbf{b}_{\beta,1}),\, (0; \mathbf{c}_{\beta}^{j\dag},\mathbf{b}_{\alpha,j})\text{ for }j>2   \};\\
	\operatorname{Eigen}(\mathcal{P}) &= \{(1;\mathbf{c}_{\alpha}^{1\dag}, \mathbf{Q}_{2}\mathbf{b}_{\beta,1}),\, 
	(1; \mathbf{c}_{\alpha}^{2\dag}, \mathbf{Q}_{1}\mathbf{b}_{\beta,2}),\, (0; \mathbf{c}_{\beta}^{j\dag},\mathbf{b}_{\alpha,j})\text{ for }j>2 \}\\
										&= \{(1;\mathbf{c}_{\alpha}^{1\dag}\mathbf{Q}_{2},\mathbf{b}_{\beta,1}),\, 
	(1; \mathbf{c}_{\alpha}^{2\dag}\mathbf{Q}_{1},\mathbf{b}_{\beta,2}),\, (0; \mathbf{c}_{\beta}^{j\dag},\mathbf{b}_{\alpha,j})\text{ for }j>2 \}.
\end{align}

\begin{remark} Note that the sum of two rank-one projectors is not a projector. Here $\mathcal{P}$ is the ``correct'' way to add $\mathbf{P}_{1}$
	and $\mathbf{P}_{2}$ so that the result is a rank-two projector that is a sum of two orthogonal rank-one projectors and that has the same row and 
	column spaces as $\mathbf{P}_{1} + \mathbf{P}_{2}$. Also, note that there are many ways to choose bases in the row and column ranges of $\mathcal{P}$, 
	the choices above reflect the splittings $\mathcal{P} = \mathcal{P}_{1} + \mathcal{P}_{2} = \tilde{\mathcal{P}}_{1} + \tilde{\mathcal{P}}_{2}$.	
\end{remark}	

We can now use these projectors to split the discrete Schlesinger evolution equations to define dynamic on eigenvectors. The proof is very similar in 
spirit to the rank-one case proof in \cite{DzhSakTak:2013:DSTTHFADPE}.

\begin{theorem}\label{thm:rank2}
	Consider a  multiplier matrix in the form
	\begin{equation}
		\mathbf{R}(z) = \mathbf{I} + \frac{ z_{\alpha} - z_{\beta} }{ z - z_{\alpha} }\mathcal{P},\quad\text{where }
		\mathcal{P} = \mathcal{P}_{1} + \mathcal{P}_{2} = \tilde{\mathcal{P}}_{1} + \tilde{\mathcal{P}}_{2},
	\end{equation}
	and $\mathcal{P}_{i}$, $\tilde{\mathcal{P}}_{i}$ are given by~(\ref{eq:P-projs}--\ref{eq:Pt-projs}).
	Then $\mathcal{P}$ satisfies the constraints~(\ref{eq:P-constr}) and so defines a Schlesinger transformation. This Schlesinger transformation
	has the type $\left\{\begin{smallmatrix}	\alpha & \beta \\ 1 & 1 \\ 2 & 2	\end{smallmatrix}\right\}$ and the corresponding map
	\begin{equation*}
		(\mathcal{B}\times \mathcal{C})_{\mathbf{\Theta}} \to 
		(\bar{\mathcal{B}}\times \bar{\mathcal{C}})_{\bar{\mathbf{\Theta}}}	
	\end{equation*}
	is given by the following evolution equations, where $c_{i}^{j}$ are again arbitrary non-zero constants.
	\begin{enumerate}[(i)]
		\item Transformation vectors:
		\begin{equation}
			\bar{\mathbf{b}}_{\alpha,1} = \frac{ 1 }{ c_{\alpha}^{1} } \mathbf{Q}_{2}\mathbf{b}_{\beta,1},\,
			\bar{\mathbf{b}}_{\alpha,2} = \frac{ 1 }{ c_{\alpha}^{2} } \mathbf{Q}_{1}\mathbf{b}_{\beta,2},\quad
			\bar{\mathbf{c}}_{\beta}^{1\dag} = c_{\beta}^{1} \mathbf{c}_{\alpha}^{1\dag}\mathbf{Q}_{2},\,
			\bar{\mathbf{c}}_{\beta}^{2\dag} = c_{\beta}^{2} \mathbf{c}_{\alpha}^{2\dag}\mathbf{Q}_{1}.
			\label{eq:bb-cb-generators-rank2}				
		\end{equation}
		\item Generic indices:
		\begin{align}
			\bar{\mathbf{b}}_{i,j} &= 
			\frac{ 1 }{ c_{i}^{j} } \mathbf{R}(z_{i}) \mathbf{b}_{i,j},\,(i\neq \alpha\text{ and if } i= \beta, j>2);
			\label{eq:bb-generic-rank2}\\
			\bar{\mathbf{c}}_{i}^{j\dag} &= c_{i}^{j}  \mathbf{c}_{i}^{j\dag} \mathbf{R}^{-1}(z_{i}),\, (i\neq 
			\beta\text{ and if } i=\alpha, j>2).
			\label{eq:cb-generic-rank2}
		\end{align}
		\item Special indices (here $k=1,2$, $k' = 3- k$, and $j>2$):
		\begin{align}
			\bar{\mathbf{b}}_{\alpha,j} &= \frac{ 1 }{ c_{\alpha}^{j} }\left(\mathbf{I} - 
			 \left(\frac{ \mathcal{P}_{1}}{ \theta_{\alpha}^{1} - \theta_{\alpha}^{j} - 1 } + 
			\frac{ \mathcal{P}_{2}}{ \theta_{\alpha}^{2} - \theta_{\alpha}^{j} - 1 }\right) \left(\sum_{i\neq \alpha} 
			\frac{ z_{\beta} - z_{\alpha}}{ z_{i} - z_{\alpha} }\mathbf{A}_{i}
			\right) \right)\mathbf{b}_{\alpha,j};
			\label{eq:bb-ak-rank2}\\
			\bar{\mathbf{c}}_{\beta}^{j\dag} &= c_{\beta}^{j}\mathbf{c}_{\beta}^{j\dag} \left(
			\mathbf{I} - \left(\sum_{i\neq \beta}
			\frac{ z_{\alpha} - z_{\beta} }{ z_{i} - z_{\beta} } \mathbf{A}_{i}\right)
			\left(\frac{ \tilde{\mathcal{P}}_{1}}{ \theta_{\beta}^{1} - \theta_{\beta}^{j} + 1 } + 
			\frac{ \tilde{\mathcal{P}}_{2}}{ \theta_{\beta}^{2} - \theta_{\beta}^{j} + 1 }\right)\right);
			\label{eq:cb-bk-rank2}\\
			\bar{\mathbf{b}}_{\beta,k} &= \frac{ 1 }{ c_{\beta}^{k}  }
			\left( (\theta_{\beta}^{k} + 1) \mathbf{I} + 
			\phantom{\left(
			\sum_{i\neq \beta} \frac{ z_{\alpha} - z_{\beta}  }{ 
			z_i - z_{\beta}  } \mathbf{A}_{i}\right)} 
			\right. \label{eq:bb-bn-rank2}\\
			&\qquad \left.
			\mathcal{Q}	
			\left(\mathbf{I} +  \sum_{j>2}\frac{ \mathbf{b}_{\beta,j} \mathbf{c}_{\beta}^{j\dag}}{ 
			\theta_{\beta}^{1} - \theta_{\beta}^{j} + 1 }\right) \left(
			\sum_{i\neq \beta} \frac{ z_{\alpha} - z_{\beta}  }{ 
			z_i - z_{\beta}  } \mathbf{A}_{i}\right)\right)
			\frac{ \mathbf{b}_{\beta,k} }{  \mathbf{c}_{\alpha}^{k\dag} \mathbf{Q}_{k'} \mathbf{b}_{\beta,k}  };
			\notag\\
			\bar{\mathbf{c}}_{\alpha}^{k\dag} &= c_{\alpha}^{k} 
			\frac{ \mathbf{c}_{\alpha}^{k\dag} }{ \mathbf{c}_{\alpha}^{k\dag} \mathbf{Q}_{k'}\mathbf{b}_{\beta,k}  }\left(
			 (\theta_{\alpha}^{k} -  1) \mathbf{I} + 
			\phantom{\left(\sum_{i\neq \alpha}\frac{ z_{\beta} - z_{\alpha}
			 }{ z_{i} - z_{\alpha}  } \mathbf{A}_{i}\right)} 
			\right.	 \label{eq:cb-am-rank2}\\
			&\qquad \left. \left(\sum_{i\neq \alpha}\frac{ z_{\beta} - z_{\alpha}
			 }{ z_{i} - z_{\alpha}  } \mathbf{A}_{i}\right)
			\left( \mathbf{I} + \sum_{j>2} 
			\frac{ \mathbf{b}_{\alpha,j} \mathbf{c}_{\alpha}^{j\dag} }{ \theta_{\alpha}^{k} - \theta_{\alpha}^{j} - 1 }
			\right) \mathcal{Q}
			\right).\notag 
		\end{align}
	\end{enumerate}
\end{theorem}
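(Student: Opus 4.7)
The plan is to adapt the rank-one argument from \cite{DzhSakTak:2013:DSTTHFADPE} to the rank-two setting. First I would verify that $\mathcal{P}$ is a rank-two projector satisfying the compatibility conditions~(\ref{eq:P-constr}); this makes the coefficient-level discrete Schlesinger evolution~(\ref{eq:Ai}--\ref{eq:Abeta}) applicable, and the remaining task is to lift this dynamic to the decomposition space by finding explicit formulas for the new eigenvectors of $\bar{\mathbf{A}}_{\alpha}$, $\bar{\mathbf{A}}_{\beta}$, and the other $\bar{\mathbf{A}}_{i}$.

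The projector property $\mathcal{P}^{2} = \mathcal{P}$ and $\operatorname{rank}(\mathcal{P}) = 2$ reduce to $\mathcal{P}_{1}\mathcal{P}_{2} = \mathcal{P}_{2}\mathcal{P}_{1} = \mathbf{0}$, which follows at once from $\mathbf{c}_{\alpha}^{k\dag}\mathbf{Q}_{k} = \mathbf{0}$. For the constraint $\mathcal{P}\mathbf{A}_{\alpha}\mathcal{Q} = \mathbf{0}$, the key observation is that generically $\mathbf{Q}_{2}\mathbf{b}_{\beta,1}$ and $\mathbf{Q}_{1}\mathbf{b}_{\beta,2}$ are linearly independent, so $\ker(\mathcal{P}) = \{v : \mathbf{c}_{\alpha}^{1\dag}v = \mathbf{c}_{\alpha}^{2\dag}v = 0\}$; since each $\mathbf{c}_{\alpha}^{k\dag}$ is a left eigenvector of $\mathbf{A}_{\alpha}$, this subspace is $\mathbf{A}_{\alpha}$-invariant, which is exactly the constraint. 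The symmetric argument at $\beta$ uses $\tilde{\mathcal{P}}_{1} + \tilde{\mathcal{P}}_{2} = \mathcal{P}$, whose column range is $\operatorname{span}\{\mathbf{b}_{\beta,1},\mathbf{b}_{\beta,2}\}$. A dimension count confirms $\ker(\mathcal{P}) = \ker(\mathbf{A}_{\alpha}) \oplus \operatorname{span}\{\mathbf{b}_{\alpha,j} : j > 2\}$, so the induced quotient operator of $\mathbf{A}_{\alpha}$ on $\mathbb{C}^{m}/\ker(\mathcal{P})$ has spectrum $\{\theta_{\alpha}^{1},\theta_{\alpha}^{2}\}$, verifying that $\mathcal{P}$ yields precisely the Schlesinger transformation of type $\left\{\begin{smallmatrix}\alpha&\beta\\1&1\\2&2\end{smallmatrix}\right\}$.

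For the transformation vectors~(\ref{eq:bb-cb-generators-rank2}), I would apply the rank-two analogue of~(\ref{eq:P-cond}), namely $\bar{\mathbf{A}}_{\alpha}\mathcal{P} = \mathcal{P}\mathbf{A}_{\alpha} - \mathcal{P}$, to $\mathbf{Q}_{k'}\mathbf{b}_{\beta,k}$ with $k' = 3-k$: combining $\mathcal{P}\mathbf{Q}_{k'}\mathbf{b}_{\beta,k} = \mathbf{Q}_{k'}\mathbf{b}_{\beta,k}$, $\mathbf{c}_{\alpha}^{k\dag}\mathbf{A}_{\alpha} = \theta_{\alpha}^{k}\mathbf{c}_{\alpha}^{k\dag}$, and $\mathbf{c}_{\alpha}^{k'\dag}\mathbf{Q}_{k'} = 0$ yields $\bar{\mathbf{A}}_{\alpha}\mathbf{Q}_{k'}\mathbf{b}_{\beta,k} = (\theta_{\alpha}^{k}-1)\mathbf{Q}_{k'}\mathbf{b}_{\beta,k}$. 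The generic-index formulas~(\ref{eq:bb-generic-rank2}--\ref{eq:cb-generic-rank2}) follow directly from $\bar{\mathbf{A}}_{i} = \mathbf{R}(z_{i})\mathbf{A}_{i}\mathbf{R}^{-1}(z_{i})$ for $i\neq\alpha,\beta$; in the boundary cases one uses $\mathbf{R}(z_{\beta}) = \mathcal{Q} = \mathbf{R}^{-1}(z_{\alpha})$ together with the just-proved constraint, which implies $\mathcal{Q}\mathbf{A}_{\beta} = \mathcal{Q}\mathbf{A}_{\beta}\mathcal{Q}$, to check that $\mathcal{Q}\mathbf{b}_{\beta,j}$ is a $\theta_{\beta}^{j}$-eigenvector of $\bar{\mathbf{A}}_{\beta}$ for $j > 2$.

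The special-index formulas~(\ref{eq:bb-ak-rank2}--\ref{eq:cb-am-rank2}) are obtained by an ansatz of the form \emph{unperturbed vector plus correction in} $\operatorname{range}(\mathcal{P})$ together with the new eigenvalue equation. For~(\ref{eq:bb-ak-rank2}) one writes $\bar{\mathbf{b}}_{\alpha,j} = \mathbf{b}_{\alpha,j} + \mathcal{P}_{1}\mathbf{u}_{1} + \mathcal{P}_{2}\mathbf{u}_{2}$, substitutes into $\bar{\mathbf{A}}_{\alpha}\bar{\mathbf{b}}_{\alpha,j} = \theta_{\alpha}^{j}\bar{\mathbf{b}}_{\alpha,j}$ via~(\ref{eq:Aalpha}), and uses $\mathcal{P}_{1}\mathcal{P}_{2} = \mathcal{P}_{2}\mathcal{P}_{1} = 0$ to decouple the linear system for $\mathbf{u}_{k}$; the denominators $\theta_{\alpha}^{k}-\theta_{\alpha}^{j}-1$ then appear naturally. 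The main obstacle will be~(\ref{eq:bb-bn-rank2}--\ref{eq:cb-am-rank2}): there the ansatz must simultaneously account for the transformed eigenvectors along both shifted directions $k=1,2$, and one has to use the interplay between the two decompositions $\mathcal{P} = \mathcal{P}_{1} + \mathcal{P}_{2} = \tilde{\mathcal{P}}_{1} + \tilde{\mathcal{P}}_{2}$ together with the normalization factor $\mathbf{c}_{\alpha}^{k\dag}\mathbf{Q}_{k'}\mathbf{b}_{\beta,k}$ to cancel cross-terms and produce the compact denominators $\theta_{\beta}^{k}-\theta_{\beta}^{j}+1$ and $\theta_{\alpha}^{k}-\theta_{\alpha}^{j}-1$ of the final formulas. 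This bookkeeping is absent in the rank-one case of~\cite{DzhSakTak:2013:DSTTHFADPE} and constitutes the genuinely new content of the rank-two proof.
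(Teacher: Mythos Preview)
Your plan is correct, and for the constraint verification and parts (i)--(ii) it matches the paper's proof almost exactly: the paper also argues from $\bar{\mathbf{A}}_{\alpha}\mathcal{P} = \mathcal{P}\mathbf{A}_{\alpha}-\mathcal{P} = (\theta_{\alpha}^{1}-1)\mathcal{P}_{1}+(\theta_{\alpha}^{2}-1)\mathcal{P}_{2}$ and then evaluates on $\mathbf{Q}_{k'}\mathbf{b}_{\beta,k}$, and the generic-index case is handled identically via $\bar{\mathbf{A}}_{i}\mathbf{R}(z_{i})\mathbf{B}_{i} = \mathbf{R}(z_{i})\mathbf{B}_{i}\mathbf{\Theta}_{i}$.

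For part (iii) the two routes diverge. You propose an \emph{ansatz-and-solve} approach: posit $\bar{\mathbf{b}}_{\alpha,j}=\mathbf{b}_{\alpha,j}+\mathcal{P}_{1}\mathbf{u}_{1}+\mathcal{P}_{2}\mathbf{u}_{2}$ and feed it into the eigenvalue equation for $\bar{\mathbf{A}}_{\alpha}$. The paper instead uses a \emph{decomposition-and-contract} method: it writes $\bar{\mathbf{A}}_{\alpha}=\sum_{l}\bar{\mathbf{b}}_{\alpha,l}\bar{\mathbf{c}}_{\alpha}^{l\dag}$ with the already-determined $\bar{\mathbf{c}}_{\alpha}^{l\dag}=c_{\alpha}^{l}\mathbf{c}_{\alpha}^{l\dag}\mathcal{Q}$ for $l>2$, equates this to~(\ref{eq:Aalpha}), right-multiplies by $\mathbf{b}_{\alpha,j}$, and then left-multiplies by $\bar{\mathbf{c}}_{\alpha}^{k\dag}$ (using the mixed expression $\mathcal{P}_{k}=\bar{\mathbf{b}}_{\alpha,k}\mathbf{c}_{\alpha}^{k\dag}/(\mathbf{c}_{\alpha}^{k\dag}\bar{\mathbf{b}}_{\alpha,k})$ from~(\ref{eq:P-mixed})) to isolate the two unknown scalars $\bar{\mathbf{c}}_{\alpha}^{k\dag}\mathbf{b}_{\alpha,j}$. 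Both methods work, but the paper's buys you a cleaner derivation of~(\ref{eq:bb-bn-rank2}): rather than the cross-term bookkeeping you anticipate, it simply substitutes the full expressions for $\bar{\mathbf{c}}_{\beta}^{1\dag},\bar{\mathbf{c}}_{\beta}^{2\dag}$ and the just-proved~(\ref{eq:cb-bk-rank2}) into $\bar{\mathbf{A}}_{\beta}=\sum_{l}\bar{\mathbf{b}}_{\beta,l}\bar{\mathbf{c}}_{\beta}^{l\dag}$, right-multiplies by $\mathbf{b}_{\beta,k}$, and the orthogonalities $\mathbf{c}_{\alpha}^{k'\dag}\mathbf{Q}_{k'}=0$, $\tilde{\mathcal{P}}_{k'}\mathbf{b}_{\beta,k}=0$ kill the other shifted direction immediately --- no genuine $2\times2$ coupling survives. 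Your ansatz route would also reach the answer, but you would need to first argue why the correction to $\mathbf{b}_{\beta,k}$ lies in $\operatorname{range}(\mathcal{Q})$ (which follows from the orthogonality requirement $\bar{\mathbf{c}}_{\beta}^{l\dag}\bar{\mathbf{b}}_{\beta,k}=0$ for $l\neq k$, $l\leq 2$), whereas the paper's contraction approach sidesteps that justification entirely.
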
	
\begin{proof}
	Of course the statement that $\mathcal{P}$ defines an elementary Schlesinger transformation of the type 
	$\left\{\begin{smallmatrix}	\alpha & \beta \\ 1 & 1 \\ 2 & 2	\end{smallmatrix}\right\}$ follows from how we 
	derived it, but it can also be seen directly. E.g., conditions (\ref{eq:P-projs}--\ref{eq:Pt-projs}) follow immediately from
	\begin{equation}
		\mathcal{P}\mathbf{A}_{\alpha}= (\theta_{\alpha}^{1} \mathcal{P}_{1} + \theta_{\alpha}^{2} \mathcal{P}_{2}),\qquad 
		\mathbf{A}_{\beta}\mathcal{P} = \theta_{\beta}^{1}\tilde{\mathcal{P}}_{1} + \theta_{\beta}^{2}\tilde{\mathcal{P}}_{2}, 
	\end{equation}
	and the fact that 
	\begin{equation*}
		\bar{\theta}_{\alpha}^{i} = \theta_{\alpha}^{i} - 1,\quad \bar{\theta}_{\beta}^{i} = \theta_{\beta}^{i} + 1\quad\text{ for }i=1,2\quad\text{ and }
		\quad\bar{\theta}_{i}^{j} = \theta_{i}^{j}\quad\text{ otherwise}
	\end{equation*}
	can be seen, in particular, from our derivation of the evolution equations below.

To establish (i), we use~(\ref{eq:P-cond}): 
\begin{equation*}
	\bar{\mathbf{A}}_{\alpha}\mathcal{P} = \mathcal{P} \mathbf{A}_{\alpha} - \mathcal{P} = 
	(\theta_{\alpha}^{1} - 1) \mathcal{P}_{1} + (\theta_{\alpha}^{2} - 1)\mathcal{P}_{2}.
\end{equation*}	
Since $\mathcal{P}_{1}\mathbf{Q}_{2}\mathbf{b}_{\beta,1} = \mathbf{Q}_{2}\mathbf{b}_{\beta,1} $ and 
$\mathcal{P}_{2}\mathbf{Q}_{2}\mathbf{b}_{\beta,1} = \mathbf{0}$, we see that $\bar{\theta}_{\alpha}^{1} = \theta_{\alpha}^{1}-1$ and
$\bar{\mathbf{b}}_{\alpha,1}\sim \mathbf{Q}_{2}\mathbf{b}_{\beta,1}$, where $\sim$ stands for `proportional'. Then 
$\bar{\mathbf{b}}_{\alpha,1} =  \mathbf{Q}_{2}\mathbf{b}_{\beta,1}/c_{\alpha}^{1}$, where $c_{\alpha}^{1}$ is some non-zero proportionality 
constant. The other equations in this part are proved similarly. Note that the consequence of (i) is that we can write
\begin{equation}
	\mathcal{P}_{i} = \frac{ \bar{\mathbf{b}}_{\alpha,i} \mathbf{c}_{\alpha}^{i\dag} }{ \mathbf{c}_{\alpha}^{i\dag} \bar{\mathbf{b}}_{\alpha,i}},\,
	\tilde{\mathcal{P}}_{i} = \frac{ \mathbf{b}_{\beta,i} \bar{\mathbf{c}}_{\beta}^{i\dag} }{ \bar{\mathbf{c}}_{\beta}^{i\dag} \mathbf{b}_{\beta,i}},\quad
	\mathcal{P} =  \frac{ \bar{\mathbf{b}}_{\alpha,1} \mathbf{c}_{\alpha}^{1\dag} }{ \mathbf{c}_{\alpha}^{1\dag} \bar{\mathbf{b}}_{\alpha,1}} + 
	 \frac{ \bar{\mathbf{b}}_{\alpha,2} \mathbf{c}_{\alpha}^{2\dag} }{ \mathbf{c}_{\alpha}^{2\dag} \bar{\mathbf{b}}_{\alpha,2}}
	= \frac{ \mathbf{b}_{\beta,1} \bar{\mathbf{c}}_{\beta}^{1\dag} }{ \bar{\mathbf{c}}_{\beta}^{1\dag} \mathbf{b}_{\beta,1}} + 
	\frac{ \mathbf{b}_{\beta,2} \bar{\mathbf{c}}_{\beta}^{2\dag} }{ \bar{\mathbf{c}}_{\beta}^{2\dag} \mathbf{b}_{\beta,2}}.\label{eq:P-mixed}
\end{equation}

For the generic case $i\neq \alpha,\beta$ in (ii) the proof is identical to the rank-one case. Since it is also short, we opted to include it 
to make the paper more self-contained. From~(\ref{eq:Ai}) we see that 
\begin{equation*}
	\bar{\mathbf{A}}_{i}\mathbf{R}(z_{i})\mathbf{B}_{i} = \mathbf{R}(z_{i})\mathbf{A}_{i}\mathbf{B}_{i} = \mathbf{R}(z_{i})\mathbf{B}_{i}\mathbf{\Theta}_{i},
\end{equation*}
and so $\bar{\mathbf{\Theta}}_{i} = \mathbf{\Theta}_{i}$ and  
$\bar{\mathbf{B}}_{i} \bar{\mathbf{D}}_{i} = \mathbf{R}(z_{i})\mathbf{B}_{i}$, where $\bar{\mathbf{D}}_{i} = \operatorname{diag}\{c_{i}^{j}\}$ is a diagonal 
matrix of non-zero proportionality constants. Similarly, $\bar{\mathbf{\Delta}}_{i}\bar{\mathbf{C}}_{i}^{\dag} = \mathbf{C}_{i}^{\dag}\mathbf{R}^{-1}(z_{i})$.
The orthogonality condition $\bar{\mathbf{C}}_{i}^{\dag}\bar{\mathbf{B}}_{i} = \mathbf{\Theta}_{i}$ implies that 
$\bar{\mathbf{\Delta}}_{i} \bar{\mathbf{D}}_{i} = \mathbf{I}$, which gives (ii) for generic indices. For $i=\alpha$, from~(\ref{eq:Q-cond}) we see that 
$\mathbf{C}_{\alpha}^{\dag}\mathcal{Q}\bar{\mathbf{A}}_{\alpha} = \mathbf{\Theta}_{\alpha} \mathbf{C}_{\alpha}^{\dag} \mathcal{Q}$, and so again 
$\bar{\mathbf{\Delta}}_{\alpha}\bar{\mathbf{C}}_{\alpha}^{\dag} = \mathbf{C}_{\alpha}^{\dag}\mathcal{Q}$. However, since 
$\mathbf{c}_{\alpha}^{1\dag}\mathcal{Q} = \mathbf{c}_{\alpha}^{2\dag}\mathcal{Q} = \mathbf{0}$, 
$(\bar{\mathbf{\Delta}}_{\alpha})_{1}^{1} = (\bar{\mathbf{\Delta}}_{\alpha})_{2}^{2} = 0$ and we can not recover $\bar{\mathbf{c}}_{\alpha}^{1\dag}$
and $\bar{\mathbf{c}}_{\alpha}^{2\dag}$. The case $i= \beta$ is similar.

Finally, let us consider special indices. To find $\bar{\mathbf{b}}_{\alpha,j}$ for $j>2$ start with~(\ref{eq:Aalpha}) and~(\ref{eq:cb-generic-rank2}):
\begin{equation*}
	\bar{\mathbf{A}}_{\alpha} = \bar{\mathbf{b}}_{\alpha,1}\bar{\mathbf{c}}_{\alpha}^{1} + \bar{\mathbf{b}}_{\alpha,2}\bar{\mathbf{c}}_{\alpha}^{2}
	+ \sum_{j>2}\bar{\mathbf{b}}_{\alpha,j}(c_{\alpha}^{j}\mathcal{Q}\mathbf{c}_{\alpha}^{j}) = \mathbf{A}_{\alpha} - \mathcal{Q}\mathbf{A}_{\alpha}\mathcal{P}
	+ \sum_{i\neq \alpha} \left(\frac{ z_{\beta} - z_{\alpha} }{ z_{i} - z_{\alpha} }\right)\mathcal{P} \mathbf{A}_{i}\mathcal{Q}.
\end{equation*}
Multiplying on the right by $\mathbf{b}_{\alpha,j}$, using the orthogonality conditions and $\mathcal{P}\mathbf{b}_{\alpha,j} = \mathbf{0}$,
$\mathcal{Q}\mathbf{b}_{\alpha,j} = \mathbf{b}_{\alpha,j}$, we get 
\begin{equation}
	\bar{\mathbf{b}}_{\alpha,1}(\bar{\mathbf{c}}_{\alpha}^{1\dag} \mathbf{b}_{\alpha,j}) + 
	\bar{\mathbf{b}}_{\alpha,2}(\bar{\mathbf{c}}_{\alpha}^{2\dag} \mathbf{b}_{\alpha,j}) + c_{\alpha}^{j} \theta_{\alpha}^{j} \bar{\mathbf{b}}_{\alpha,j}
	= \theta_{\alpha}^{j}\mathbf{b}_{\alpha,j} + 
	\sum_{i\neq \alpha} \left(\frac{ z_{\beta} - z_{\alpha} }{ z_{i} - z_{\alpha} }\right)\mathcal{P} \mathbf{A}_{i}\mathbf{b}_{\alpha,j}.
	\label{eq:baj-rank2}
\end{equation}
Now left-multiply by $\bar{\mathbf{c}}_{\alpha}^{1\dag}$ and use expression~(\ref{eq:P-mixed}) for $\mathcal{P}$ and orthogonality conditions 
again to get
\begin{equation*}
	\bar{\theta}_{\alpha}^{1}(\bar{\mathbf{c}}_{\alpha}^{1\dag} \mathbf{b}_{\alpha,j}) = 
	\theta_{\alpha}^{j}(\bar{\mathbf{c}}_{\alpha}^{1\dag} \mathbf{b}_{\alpha,j}) + \bar{\theta}_{\alpha}^{1} 
	\frac{ \mathbf{c}_{\alpha}^{1\dag} }{ \mathbf{c}_{\alpha}^{1\dag}\bar{\mathbf{b}}_{\alpha,1} }
	\sum_{i\neq \alpha} \left(\frac{ z_{\beta} - z_{\alpha} }{ z_{i} - z_{\alpha} }\right) \mathbf{A}_{i}\mathbf{b}_{\alpha,j}.
\end{equation*}
This gives 
\begin{align*}
	(\bar{\mathbf{c}}_{\alpha}^{1\dag} \mathbf{b}_{\alpha,j}) &= \frac{ \bar{\theta}_{\alpha}^{1} }{ \bar{\theta}_{\alpha}^{1} - \theta_{\alpha}^{j} }
	\frac{ \mathbf{c}_{\alpha}^{1\dag} }{ \mathbf{c}_{\alpha}^{1\dag}\bar{\mathbf{b}}_{\alpha,1} }
	\sum_{i\neq \alpha} \left(\frac{ z_{\beta} - z_{\alpha} }{ z_{i} - z_{\alpha} }\right) \mathbf{A}_{i}\mathbf{b}_{\alpha,j},\\
	\bar{\mathbf{b}}_{\alpha,1}(\bar{\mathbf{c}}_{\alpha}^{1\dag} \mathbf{b}_{\alpha,j}) &= 
	\frac{ \bar{\theta}_{\alpha}^{1} }{ \bar{\theta}_{\alpha}^{1} - \theta_{\alpha}^{j} }
	\mathcal{P}_{1}
	\sum_{i\neq \alpha} \left(\frac{ z_{\beta} - z_{\alpha} }{ z_{i} - z_{\alpha} }\right) \mathbf{A}_{i}\mathbf{b}_{\alpha,j}.
\end{align*}
Repeating the same steps for $\bar{\mathbf{b}}_{\alpha,2}(\bar{\mathbf{c}}_{\alpha}^{2\dag} \mathbf{b}_{\alpha,j})$, substituting the result
into~(\ref{eq:baj-rank2}), solving for $\bar{\mathbf{b}}_{\alpha,j}$ and simplifying gives~(\ref{eq:bb-ak-rank2});~(\ref{eq:cb-bk-rank2}) is
proved in a similar fashion.

Finally, to get expressions for $\bar{\mathbf{b}}_{\beta,1}$ and $\bar{\mathbf{b}}_{\beta,2}$, use all of the previously obtained expressions to write
\begin{align*}
	\bar{\mathbf{A}}_{\beta} &= \bar{\mathbf{b}}_{\beta,1}\bar{\mathbf{c}}_{\beta}^{1\dag} + \bar{\mathbf{b}}_{\beta,2}\bar{\mathbf{c}}_{\beta}^{2\dag}
	+ \sum_{j>2} \bar{\mathbf{b}}_{\beta,j}\bar{\mathbf{c}}_{\beta}^{j\dag}\\
	&= c_{\beta}^{1} \bar{\mathbf{b}}_{\beta,1}\mathbf{c}_{\alpha}^{1\dag}\mathbf{Q}_{2} + 
	c_{\beta}^{2} \bar{\mathbf{b}}_{\beta,2}\mathbf{c}_{\alpha}^{2\dag}\mathbf{Q}_{1}\\ 
	&\quad + 
	\sum_{j>2}\mathcal{Q} \mathbf{b}_{\beta,j} \mathbf{c}_{\beta}^{j\dag}\left(
	\mathbf{I} - \left(\sum_{i\neq \beta}
	\frac{ z_{\alpha} - z_{\beta} }{ z_{i} - z_{\beta} } \mathbf{A}_{i}\right)
	\left(\frac{ \tilde{\mathcal{P}}_{1}}{ \theta_{\beta}^{1} - \theta_{\beta}^{j} + 1 } + 
	\frac{ \tilde{\mathcal{P}}_{2}}{ \theta_{\beta}^{2} - \theta_{\beta}^{j} + 1 }\right)\right)\\
	\intertext{which, in view of~(\ref{eq:Abeta}), also can be written as}
	&= \mathbf{A}_{\beta} - \mathbf{P}\mathbf{A}_{\beta}\mathcal{Q} + \mathcal{P}
	+ \sum_{i\neq \beta} \left(\frac{ z_{\alpha} - z_{\beta} }{ z_{i} - z_{\beta} }\right)\mathcal{Q}\mathbf{A}_{i}\mathcal{P}.
\end{align*}
Multiplying on the right by $\mathbf{b}_{\beta,1}$ we get
\begin{align*}
	\bar{\mathbf{A}}_{\beta}\mathbf{b}_{\beta,1} &= c_{\beta}^{1} (\mathbf{c}_{\alpha}^{1\dag}\mathbf{Q}_{2}\mathbf{b}_{\beta,1})\bar{\mathbf{b}}_{\beta,1}
	+ \sum_{j>2} \mathcal{Q} \mathbf{b}_{\beta,j} \mathbf{c}_{\beta}^{j\dag} \left(\mathbf{I} - \sum_{i\neq \beta}\frac{ z_{\alpha} - z_{\beta}  }{ 
	z_i - z_{\beta}  } \frac{ \mathbf{A}_{i} }{ \theta_{\beta}^{1} - \theta_{\beta}^{j}+ 1 }\right)\mathbf{b}_{\beta,1}\\
	&= c_{\beta}^{1} (\mathbf{c}_{\alpha}^{1\dag}\mathbf{Q}_{2}\mathbf{b}_{\beta,1})\bar{\mathbf{b}}_{\beta,1}
	- \mathcal{Q}\left(\sum_{i\neq \beta}\frac{ z_{\alpha} - z_{\beta}  }{ 
	z_i - z_{\beta}  } \mathbf{A}_{i}\right)\left(\sum_{j>2} 
	\frac{ \mathbf{b}_{\beta,j} \mathbf{c}_{\beta}^{j\dag} }{ \theta_{\beta}^{1} - \theta_{\beta}^{j}+ 1 }\right) \mathbf{b}_{\beta,1}\\
	&= \theta_{\beta}^{1} \mathbf{b}_{\beta,1} + \mathbf{b}_{\beta,1} + \mathcal{Q}\left(\sum_{i\neq \beta}\frac{ z_{\alpha} - z_{\beta}  }{ 
	z_i - z_{\beta}  } \mathbf{A}_{i}\right)\mathbf{b}_{\beta,1}.
\end{align*}
Solving for $\bar{\mathbf{b}}_{\beta,1}$ gives~(\ref{eq:bb-bn-rank2}) for $k=1$, and the expression for $\mathbf{b}_{\beta,2}$ is obtained by right-multiplying
by $\mathbf{b}_{\beta,2}$ instead. Equations~(\ref{eq:cb-am-rank2}) are obtained along the same lines. 
\end{proof}


\section{Reductions from Schlesinger Transformations to Difference Painlev\'e Equations} 
\label{sec:reductions}

In this section, which is the central section of the paper, we consider two examples of 
reductions from the Schlesinger dynamic on the decomposition space to difference 
Painlev\'e equations. First we consider Schlesinger transformations of a Fuchsian
system of spectral type $111,111,111$. Resulting difference Painlev\'e equation is of type 
d-$P({A}_{2}^{(1)*})$ and has the symmetry group ${E}^{(1)}_{6}$. We have previously considered 
this example in \cite{DzhSakTak:2013:DSTTHFADPE}, but the exposition there was very brief and 
it relied on a nontrivial observation on how to choose good coordinates parameterizing our 
Fuchsian system. Here we not only provide more details but also show how geometric considerations 
\emph{lead us} to the appropriate coordinate choice. In the second example we consider Schlesinger 
transformations of a Fuchsian system of spectral type $22,1111,1111$, which gives 
difference Painlev\'e equation of type d-$P({A}_{1}^{(1)*})$ with the symmetry group ${E}_{7}^{(1)}$. 
This example is completely new and here, in addition to elementary Schlesinger transformations of rank one,
we also, for the first time, consider elementary Schlesinger 
transformations of rank two --- we need such transformations to represent the standard example of 
a difference Painlev\'e equation of type d-$P({A}_{1}^{(1)*})$, as written in \cite{GraRamOht:2003:AUDOTAQVADIEATST}, 
\cite{Sak:2007:PDPEATLF}, as a composition of elementary Schlesinger transformations.

\subsection{Reduction to difference Painlev\'e equation of type d-$P({A}_{2}^{(1)*})$ with the
symmetry group ${E}^{(1)}_{6}$.} 
\label{sub:reduction_to_difference_painlev'e_equation_of_type_d_p_a__2_with_the_symmetry_group_e_6_}

\subsubsection{Model Example} 
\label{ssub:model_exampleA2}
For our model example of type d-$P({A}_{2}^{(1)*})$ we take the equation that was first written by 
Grammaticos, Ramani, and Ohta, \cite{GraRamOht:2003:AUDOTAQVADIEATST}, see also
Murata \cite{Mur:2004:NEFDPE} and Sakai \cite{Sak:2007:PDPEATLF}. Following Sakai's geometric approach, we view this equation as a
birational map $\varphi: \mathbb{P}^{1}\times \mathbb{P}^{1} \dashrightarrow \mathbb{P}^{1}\times \mathbb{P}^{1}$ 
with parameters $b_{1},\dots, b_{8}$ 
\begin{equation}
	\left(\begin{matrix}
		b_{1} & b_{2} & b_{3} & b_{4}\\
		b_{5} & b_{6} & b_{7} & b_{8}
	\end{matrix}; f,g\right) \mapsto 
	\left(\begin{matrix}
		\bar{b}_{1} & \bar{b}_{2} & \bar{b}_{3} & \bar{b}_{4}\\
		\bar{b}_{5} & \bar{b}_{6} & \bar{b}_{7} & \bar{b}_{8}
	\end{matrix}; \bar{f},\bar{g}\right),
\end{equation} 
where $\bar{b}_{1} = b_{1}$, $\bar{b}_{2} = b_{2}$, $\bar{b}_{3} = b_{3}$, $\bar{b}_{4} = b_{4}$,
$\bar{b}_{5} = b_{5} + \delta$, $\bar{b}_{6} = b_{6} + \delta$, $\bar{b}_{7} = b_{7} - \delta$,
$\bar{b}_{8} = b_{8} - \delta$, 
$\delta = b_{1} + \cdots + b_{8}$, and   $\bar{f}$ and $\bar{g}$ are given by the equation
\begin{equation}
	\left\{
	\begin{aligned}
		(f + g)(\bar{f}+g) & =\frac{(g+b_1)(g+b_2)(g+b_3)(g+b_4)}{(g-b_5)(g-b_6)}\\
		(\bar{f}+g)(\bar{f}+\bar{g})& =\frac{(\bar{f}-b_1)(\bar{f}-b_2)(\bar{f}-b_3)(\bar{f}-b_4)}{(\bar{f}+b_7-\delta)(\bar{f}+b_8-\delta)}
	\end{aligned}
	\right..\label{eq:dpA2-st}	
\end{equation}
This map has the following eight indeterminate points:
\begin{alignat*}{4}
	&p_{1}(b_{1},-b_{1}),&\quad	&p_{3}(b_{3},-b_{3}),&\quad	&p_{5}(\infty,b_{5}),&\quad	p_{7}(-b_{7},\infty),\\
	&p_{2}(b_{2},-b_{2}),&\quad	&p_{4}(b_{4},-b_{4}),&\quad	&p_{6}(\infty,b_{6}),&\quad	p_{8}(-b_{8},\infty),
\end{alignat*}
resolving which by the blow-up procedure then gives us a rational surface $\mathcal{X}_{\mathbf{b}}$, known as the 
\emph{Okamoto space of initial conditions} for this difference Painlev\'e equation, that is
described by the blow-up diagram on Figure~\ref{fig:dpa2-standard-blowup}.

\begin{figure}[h]
	\centering
		\includegraphics{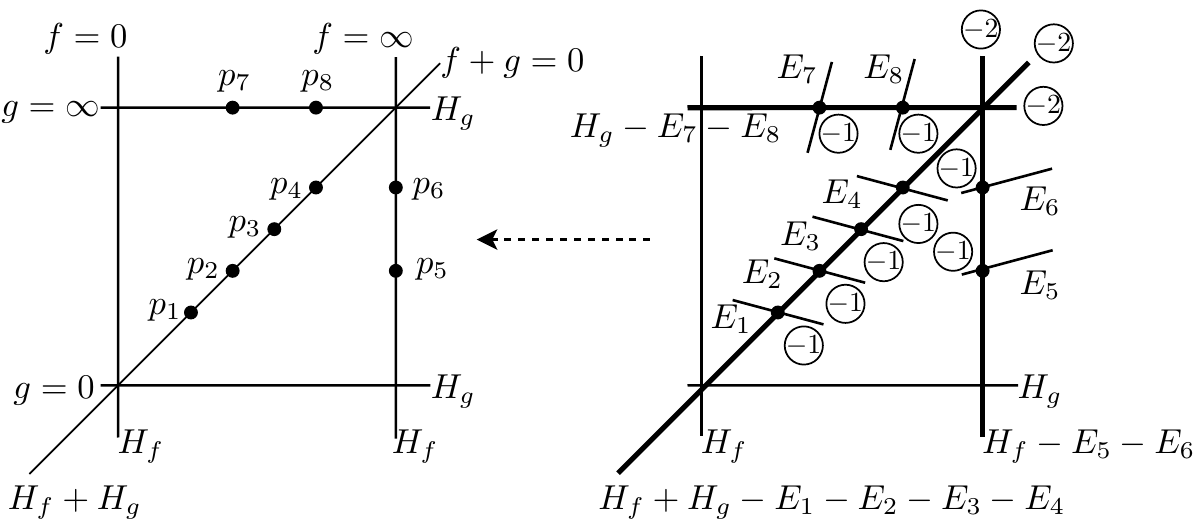}
	\caption{Okamoto surface $\mathcal{X}_{\mathbf{b}}$ for the model form of d-$P({A}_{2}^{(1)*})$.}
	\label{fig:dpa2-standard-blowup}
\end{figure}	

The Picard lattice of $\mathcal{X}_{\mathbf{b}}$ is generated by the total transforms 
$H_{f}$ and $H_{g}$ of the coordinate lines and the classes of the exceptional divisors $E_{i}$,
\begin{equation*}
	\operatorname{Pic}(\mathcal{X}) = \mathbb{Z} H_{f} \oplus \mathbb{Z} H_{g} \oplus \bigoplus_{i=1}^{8} \mathbb{Z}E_{i}.
\end{equation*}
The anti-canonical divisor $-K_{\mathcal{X}}=2 H_{f} + 2 H_{g} - \sum_{i=1}^{8} E_{i}$ uniquely decomposes as 
a positive linear combination of $-2$-curves $D_{i}$, $-K_{\mathcal{X}} = D_{0} + D_{1} + D_{2}$,
where the irreducible components $D_{i}$, in bold on Figure~\ref{fig:dpa2-standard-blowup}, are given by
\begin{equation*}
	D_{0} = H_{f} + H_{g} - E_{1} - E_{2} - E_{3} - E_{4},\quad D_{1} = H_{f} - E_{5} - E_{6}, \quad D_{2} = H_{g} - E_{7} - E_{8}. 
\end{equation*}
The configuration of components $D_{i}$ is described by the 
Dynkin diagram of type ${A}^{(1)}_{2}$ (with nodes corresponding to classes of self-intersection $-2$ and edges connecting
classes of intersection index $1$). To this diagram
correspond two different types of surfaces, the generic one corresponds to the multiplicative
system of type ${A}^{(1)}_{2}$, and the degenerate configuration, where all three components $D_{i}$ 
intersect at one point,
corresponds to the additive system denoted by ${A}_{2}^{(1)*}$, which is clearly our case, see Figure~\ref{fig:dpa2-configs}.

\begin{figure}[h]
	\begin{center}
		\begin{tabular}{ccc}
			\includegraphics{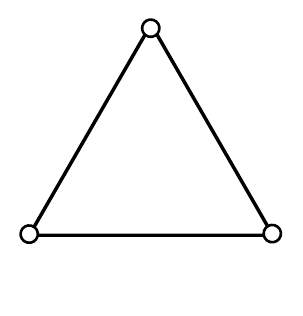} & \quad
			\includegraphics{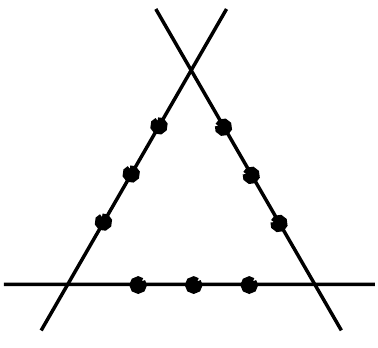} \quad & 
			\includegraphics{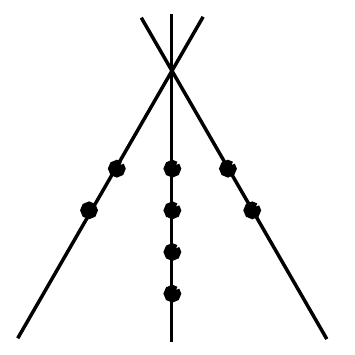} \\
			Dynkin diagram ${A}_{2}^{(1)}$ & ${A}^{(1)}_{2}$-surface & ${A}_{2}^{(1)*}$-surface.
		\end{tabular}
	\end{center}
	\caption{Configurations of type ${A}_{2}^{(1)}$}
	\label{fig:dpa2-configs}
\end{figure}

Components $D_{i}$ of $-K_{\mathcal{X}}$ span the sub-lattice $R = \operatorname{Span}_{\mathbb{Z}}\{D_{1}, D_{2}, D_{3}\}$, and its
orthogonal complement $R^{\perp}$ is called the \emph{symmetry sub-lattice}. In our case, it is easy to see that 
$R^{\perp}=\operatorname{Span}_{\mathbb{Z}}\{\alpha_{0},\dots,\alpha_{6}\}$ is of type ${E}^{(1)}_{6}$,
see Figure~\ref{fig:dpa2-symm}. 

\begin{figure}[h]
\begin{equation*}
	\begin{aligned}
		\alpha_{0}&= E_{3} - E_{4},& \quad \alpha_{1}&= E_{2} - E_{3}, \\  
		\alpha_{2}&= E_{1} - E_{2},& \quad  \alpha_{3}&= H_{f} - E_{1} - E_{7},\\
		\alpha_{4}&= E_{7} - E_{8},& \quad  \alpha_{5}&= H_{g} - E_{1} - E_{5}, \\
		\alpha_{6}&= E_{5} - E_{6}
	\end{aligned}\qquad\qquad
		\raisebox{-0.5in}{\includegraphics{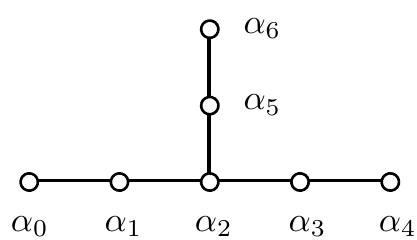}}
\end{equation*}
	\caption{Symmetry sub-lattice for d-$P(\tilde{A}_{2}^{*})$}
	\label{fig:dpa2-symm}
\end{figure}	

Finally, we compute the action of $\varphi_{*}$ on $\operatorname{Pic}(\mathcal{X})$ to be
\begin{align}
	 H_{f}&\mapsto 6 H_{f} + 3 H_{g} -2 E_{1} - 2 E_{2} - 2 E_{3} - 2 E_{4} - E_{5} - E_{6} - 3 E_{7} - 3 E_{8},\label{eq:dpa-tr-std}\\
	 H_{g}&\mapsto 3 H_{f} + H_{g}  - E_{1} - E_{2} - E_{3} - E_{4} - E_{7} - E_{8},\notag\\
	 E_{1}&\mapsto 2 H_{f} + H_{g} - E_{2} - E_{3} - E_{4} - E_{7} - E_{8},\notag\\
	 E_{2}&\mapsto 2 H_{f} + H_{g} - E_{1} - E_{3} - E_{4} - E_{7} - E_{8},\notag\\
	 E_{3}&\mapsto 2 H_{f} + H_{g} - E_{1} - E_{2} - E_{4} - E_{7} - E_{8},\notag\\
	 E_{4}&\mapsto 2 H_{f} + H_{g} - E_{1} - E_{2} - E_{3} - E_{7} - E_{8},\notag\\
	 E_{5}&\mapsto 3 H_{f} + H_{g} - E_{1} - E_{2} - E_{3} - E_{4} - E_{6} - E_{7} - E_{8},\notag\\
	 E_{6}&\mapsto 3 H_{f} + H_{g} - E_{1} - E_{2} - E_{3} - E_{4} - E_{5} - E_{7} - E_{8},\notag\\
	 E_{7}&\mapsto H_{f} - E_{8},\notag\\
	 E_{8}&\mapsto H_{f} - E_{7},\notag
\end{align}
and so the induced action $\varphi_{*}$ on the sub-lattice $R^{\perp}$	is given by the following \emph{translation}:
\begin{equation*}
	(\alpha_{0}, \alpha_{1}, \alpha_{2}, \alpha_{3}, \alpha_{4}, \alpha_{5}, \alpha_{6})\mapsto	
	(\alpha_{0}, \alpha_{1}, \alpha_{2}, \alpha_{3}, \alpha_{4}, \alpha_{5}, \alpha_{6}) + 
	(0,0,0,1,0,-1,0)(-K_{\mathcal{X}}),\label{eq:dpa-trans-std}
\end{equation*}
as well as the permutation $(D_{0}D_{1}D_{2})$ of the irreducible components of $-K_{\mathcal{X}}$. 
We now want to compare this standard picture with the one that is obtained from Schlesinger 
transformations.


\subsubsection{Schelsinger Transformations} 
\label{ssub:schelsinger_transformationsA2}
Consider a $3\times3$ Fuchsian system of the spectral type $111,111,111$. This system has three poles and it is 
convenient to assume that one of them is at $z_{3}=\infty$, since our elementary Schlesinger 
transformations preserve $\mathbf{A}_{\infty}$. Also, in view of scalar gauge transformations we can assume that
$\operatorname{rank}(\mathbf{A}_{i}) = 2$ at finite poles (and using 
M\"obius transformation preserving $z=\infty$, we can in principle map those poles to $z_{1} = 0$ and $z_{2}= 1$). Thus,
\begin{equation*}
	\mathbf{A}_{i} = \mathbf{B}_{i} \mathbf{C}_{i}^{\dag} = 
	\begin{bmatrix}
	\mathbf{b}_{i,1} & \mathbf{b}_{i,2}	
	\end{bmatrix} \begin{bmatrix}
		\mathbf{c}_{i}^{1\dag}\\[2pt] \mathbf{c}_{i}^{2\dag}
	\end{bmatrix},\qquad i=1,2.
\end{equation*}

So the \emph{Riemann scheme} and the \emph{Fuchs relation} for our system are
\begin{equation*}
	\left\{
	\begin{tabular}{cccc}
		$z_{1}$ 		& $	z_{2}$		& 	$z_{3}$	\\
		$\theta_{1}^{1}$	& $\theta_{2}^{1}$	& 	$\theta_{3}^{1}$ \\
		$\theta_{1}^{2}$	& $\theta_{2}^{2}$	& 	$\theta_{3}^{2}$ \\
		$ 0 $			& 	$ 0$		& 	$\theta_{3}^{3}$
	\end{tabular}
	\right\},\qquad 
	\theta_{1}^{1} + \theta_{1}^{2} + \theta_{2}^{1} + \theta_{2}^{2}  + \sum_{j=1}^{3} \theta_{3}^{j}= 0.
\end{equation*}
This example does not have 
any continuous deformation parameters but it admits non-trivial Schlesinger transformation. Consider
an elementary Schlesinger transformation
$\left\{\begin{smallmatrix} 1&2\\1&1\end{smallmatrix}\right\}$ that changes 
$\bar{\theta}_{1}^{1} = \theta_{1}^{1} - 1$, $\bar{\theta}_{2}^{1} = \theta_{2}^{1} + 1$, and 
fixes the remaining characteristic indices. The projector matrices for this map are
\begin{equation*}
	\mathbf{P} = \frac{ \mathbf{b}_{2,1}\mathbf{c}_{1}^{1\dag} }{ \mathbf{c}_{1}^{1\dag} \mathbf{b}_{2,1} },\qquad
	\mathbf{Q} = \mathbf{I} - \mathbf{P},
\end{equation*}
and the evolution equations~(\ref{eq:bb-cb-generators}--\ref{eq:cb-am}) take the form
\begin{alignat*}{2}
	\bar{\mathbf{b}}_{1,1} &= \frac{ 1 }{ c_{1}^{1} } \mathbf{b}_{2,1}, &\quad 
	\bar{\mathbf{b}}_{1,2} &= \frac{ 1 }{ c_{1}^{2} }
	\left(\mathbf{I} - \frac{ \mathbf{P}\mathbf{A}_{2} }{ \theta_{1}^{1} - \theta_{1}^{2} - 1}\right) \mathbf{b}_{1,2},\\
	\bar{\mathbf{b}}_{2,2} &= \frac{ 1 }{ c_{2}^{2} }\mathbf{Q}\mathbf{b}_{2,2}, &\quad 
	\bar{\mathbf{b}}_{2,1}&= \frac{ 1 }{ c_{2}^{1} } \left( (\theta_{2}^{1} + 1) \mathbf{I}
	 + \mathbf{Q}\left(\mathbf{I} + \frac{ \mathbf{b}_{2,2}\mathbf{c}_{2}^{2\dag} }{ \theta_{2}^{1} - \theta_{2}^{2} + 1 }\right)\mathbf{A}_{1}
	\right) \frac{ \mathbf{b}_{2,1} }{ \mathbf{c}_{1}^{1\dag}\mathbf{b}_{2,1} }, \\
	\bar{\mathbf{c}}_{1}^{2\dag} &= c_{1}^{2} \mathbf{c}_{1}^{2\dag}\mathbf{Q}, &\quad
	\bar{\mathbf{c}}_{1}^{1\dag} &= c_{1}^{1} \frac{ \mathbf{c}_{1}^{1\dag} }{ \mathbf{c}_{1}^{1\dag}\mathbf{b}_{2,1} }
	\left((\theta_{1}^{1} - 1)\mathbf{I} + \mathbf{A}_{2} \left(\mathbf{I} + 
	\frac{ \mathbf{b}_{1,2}\mathbf{c}_{1}^{2\dag} }{ \theta_{1}^{1} - \theta_{1}^{2} - 1}\right)\mathbf{Q}\right), \\
	\mathbf{c}_{2}^{1\dag} &= c_{2}^{1}\mathbf{c}_{1}^{1\dag}, &\quad 
	\mathbf{c}_{2}^{2\dag} &= c_{2}^{2} \mathbf{c}_{2}^{2\dag}\left(\mathbf{I} - \frac{ \mathbf{A}_{1}\mathbf{P} }{ \theta_{2}^{1}- \theta_{2}^{2} + 1 }\right),
\end{alignat*}
where $c_{i}^{j}$ are arbitrary non-zero constants (corresponding to trivial gauge transformations).

We now explicitly show that the space of accessory parameters for Fuchsian systems of this type
is two-dimensional by using various gauge transformations to put vectors $\mathbf{b}_{i,j}$
and $\mathbf{c}_{i}^{j\dag}$ in some normal form, and then introduce a coordinate system on this 
phase space. First, assuming that we are in a generic situation, we use a global similarity 
transformation to map the vectors $\mathbf{b}_{1,1}$, $\mathbf{b}_{1,2}$, and $\mathbf{b}_{2,1}$ 
to the standard basis, and then use trivial gauge transformations (i.e., choose appropriate constants $c_{i}^{j}$) 
to make all components of $\mathbf{b}_{2,2}$ equal to $1$. Then the orthogonality conditions 
$\mathbf{C}_{i}^{\dag} \mathbf{B}_{i} = \mathbf{\Theta}_{i}$ give us the following 
parameterization:
\begin{equation*}
	\mathbf{B}_{1} = \begin{bmatrix}
		1 & 0 \\ 0 & 1 \\ 0 & 0
	\end{bmatrix},\quad \mathbf{C}_{1}^{\dag} = \begin{bmatrix}
		\theta_{1}^{1} & 0 & \alpha \\ 0 & \theta_{1}^{2} & \beta
	\end{bmatrix},\quad
	\mathbf{B}_{2} = \begin{bmatrix}
		0 & 1 \\ 0 & 1 \\ 1 & 1
	\end{bmatrix}, \quad
	\mathbf{C}_{2}^{\dag} = \begin{bmatrix}
		-x - \theta_{2}^{1} & x & \theta_{2}^{1} \\ \theta_{2}^{2}-y & y  & 0
	\end{bmatrix}.
\end{equation*}
Here we choose $x$ and $y$ as our coordinates, and we can express $\alpha = \alpha(x,y)$
and $\beta = \beta(x,y)$ from the condition that the eigenvalues of 
$\mathbf{A}_{\infty} = - \mathbf{B}_{1} \mathbf{C}_{1}^{\dag} - \mathbf{B}_{2} \mathbf{C}_{2}^{\dag}$
are $\kappa_{1}$, $\kappa_{2}$, and $\kappa_{3}$ (the resulting 
expressions, although easy to obtain, are quite large and we omit them).
We then get the following dynamic in the coordinates $(x,y)$: 
\begin{equation*}
	\left\{
	\begin{aligned}
		\bar{x} &= \frac{ \alpha - \beta }{ \alpha(\theta_{1}^{2} - \theta_{1}^{1} + 1) }\left( \alpha (x + y) + \theta_{1}^{1} y\right)\\
		\bar{y} &= \frac{ \alpha - \beta }{ \alpha(\theta_{1}^{2} - \theta_{1}^{1} + 1) }
		\left( \frac{ \alpha(\alpha(x + y) + y(\theta_{1}^{2} + 1))(\theta_{1}^{1} - \theta_{2}^{2} + 1) }{ 
		\alpha(\theta_{2}^{1} + 1) - (\alpha - \beta)y } - \alpha (x + y) - \theta_{1}^{1} y\right)\label{eq:dpa-21*-sch}
	\end{aligned}
	\right.\,,		
\end{equation*}
where we still need to substitute $\alpha=\alpha(x,y)$ and $\beta = \beta(x,y)$. So this map is quite complicated and it reflects the fact that 
our choice of the coordinates was rather arbitrary. To better understand the map we again go back to geometry.

The indeterminate points of the map $\psi: (x,y)\to (\bar{x},\bar{y})$ are 
\begin{alignat*}{2}
	& p_{1}\left(\frac{ (\theta_{1}^{1} + \theta_{2}^{1} + \theta_{3}^{1})(\theta_{1}^{2} + \theta_{3}^{1}) }{ \theta_{1}^{1} - \theta_{1}^{2} },
	-\frac{ (\theta_{1}^{1} + \theta_{2}^{2} + \theta_{3}^{1})(\theta_{1}^{2} + \theta_{3}^{1}) }{ \theta_{1}^{1} - \theta_{1}^{2} }\right),&\quad 
	&p_{4}(0,0),\\
	&p_{2}\left(\frac{ (\theta_{1}^{1} + \theta_{2}^{1} + \theta_{3}^{2})(\theta_{1}^{2} + \theta_{3}^{2}) }{ \theta_{1}^{1} - \theta_{1}^{2} },
	-\frac{ (\theta_{1}^{1} + \theta_{2}^{2} + \theta_{3}^{2})(\theta_{1}^{2} + \theta_{3}^{2}) }{ \theta_{1}^{1} - \theta_{1}^{2} }\right),&\quad 
	&p_{5}(-\theta_{2}^{1},\theta_{2}^{2}),\\
	&p_{3}\left(\frac{ (\theta_{1}^{1} + \theta_{2}^{1} + \theta_{3}^{3})(\theta_{1}^{2} + \theta_{3}^{3}) }{ \theta_{1}^{1} - \theta_{1}^{2} },
	-\frac{ (\theta_{1}^{1} + \theta_{2}^{2} + \theta_{3}^{3})(\theta_{1}^{2} + \theta_{3}^{3}) }{ \theta_{1}^{1} - \theta_{2}^{2} }\right),
\end{alignat*}
as well as the sequence of infinitely close points
\begin{align*}
	p_{6}\left(\frac{ 1 }{ x } = 0,\frac{ 1 }{ y } = 0\right)&\longleftarrow p_{7}\left(\frac{ 1 }{ x } = 0,\frac{ x }{ y } = -1\right)\\
	&\longleftarrow p_{8}\left(\frac{ 1 }{ x } = 0,\frac{ x }{ y }=-1,
	\frac{ x(x+y) }{ y } = \frac{ (\theta_{1}^{2} +1) (\theta_{2}^{1} - \theta_{2}^{2})}{ \theta_{1}^{2} - \theta_{1}^{1} }\right).
\end{align*}
Note also that the points $p_{1},\dots,p_{6}$ (and, after blowing up, the point $p_{7}$ as well) all lie on a $(2,2)$-curve $Q$ given by the equation
\begin{equation}
	(\theta_{1}^{1} - \theta_{1}^{2})(x + y)(x + y + \theta_{2}^{1} - \theta_{2}^{2}) + 
	(\theta_{2}^{1} - \theta_{2}^{2})(\theta_{2}^{2} x + \theta_{2}^{1}y) = 0.
\end{equation}

\begin{figure}[h]
	\centering
		\includegraphics{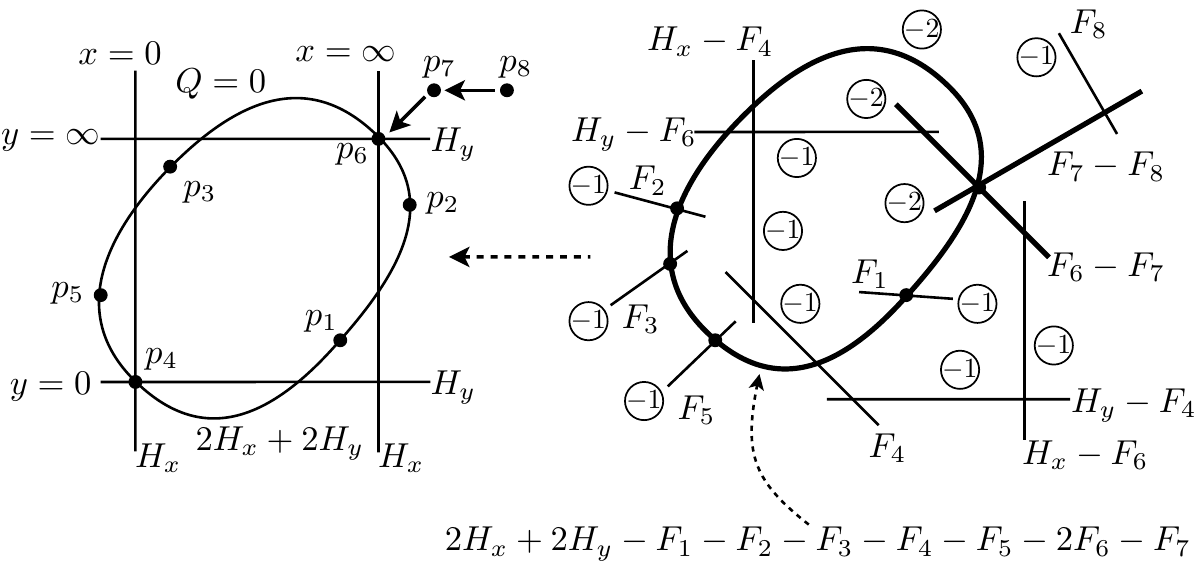}
	\caption{Okamoto surface $\mathcal{X}_{\mathbf{\theta}}$ for the Schlesinger transformations reduction to d-$P({A}_{2}^{(1)*})$ 
	using $\mathbb{P}^{1}\times \mathbb{P}^{1}$ compactification of $\mathbb{C}^{2}$.}
		\label{fig:dpa2-schlesinger-blowup-p1p1}
\end{figure}

Resolving indeterminate points of this map using blow-ups gives us the Okamoto surface $\mathcal{X}_{\mathbf{\theta}}$ 
pictured on Figure~\ref{fig:dpa2-schlesinger-blowup-p1p1}. We can immediately see that in this case
\begin{equation*}
	-K_{\mathcal{X}_{\mathbf{\theta}}} = (2 H_{x} + 2 H_{y}-F_{1}-F_{2}-F_{3}-F_{4}-F_{5}-2F_{6}-F_{7}) + (F_{6} - F_{7}) + (F_{7} - F_{8}),
\end{equation*}
where $F_{i}$ stand for classes of exceptional divisors, and, since all of the $-2$-curves intersect at one point, 
$\mathcal{X}$ indeed has the type ${A}_{2}^{(1)*}$.
Unfortunately, two of the three irreducible components of $-K_{\mathcal{X}}$ are now completely in the blow-up region. This makes
identification with the standard example more difficult since we have to go through a sequence of coordinate charts to 
do the computation. A better approach is to use $\mathbb{P}^{2}$ compactification of $\mathbb{C}^{2}$ (recall that in this case,
according to general theory, we expect to have nine blow-up points instead of eight).

In this compactification we still have the same finite points that, in the homogeneous coordinates, are
\begin{alignat*}{2}
	&p_{1}\left(\frac{ (\theta_{1}^{1} + \theta_{2}^{1} + \theta_{3}^{1})(\theta_{1}^{2} + \theta_{3}^{1}) }{ \theta_{1}^{1} - \theta_{1}^{2} }:
	-\frac{ (\theta_{1}^{1} + \theta_{2}^{2} + \theta_{3}^{1})(\theta_{1}^{2} + \theta_{3}^{1}) }{ \theta_{1}^{1} - \theta_{1}^{2} }:1\right),&\quad 
	&p_{4}(0:0:1),\\
	&p_{4}\left(\frac{ (\theta_{1}^{1} + \theta_{2}^{1} + \theta_{3}^{2})(\theta_{1}^{2} + \theta_{3}^{2}) }{ \theta_{1}^{1} - \theta_{1}^{2} }:
	-\frac{ (\theta_{1}^{1} + \theta_{2}^{2} + \theta_{3}^{2})(\theta_{1}^{2} + \theta_{3}^{2}) }{ \theta_{1}^{1} - \theta_{1}^{2} }:1\right),
	&\quad &p_{5}(-\theta_{2}^{1}:\theta_{2}^{2}:1),\\
	&p_{3}\left(\frac{ (\theta_{1}^{1} + \theta_{2}^{1} + \theta_{3}^{3})(\theta_{1}^{2} + \theta_{3}^{3}) }{ \theta_{1}^{1} - \theta_{1}^{2} }:
	-\frac{ (\theta_{1}^{1} + \theta_{2}^{2} + \theta_{3}^{3})(\theta_{1}^{2} + \theta_{3}^{3}) }{ \theta_{1}^{1} - \theta_{1}^{2} }:1\right).
\end{alignat*}
There are also three more points on the line at infinity, and one infinitely close point $p_{9}$:	
\begin{equation*}
	p_{6}(1:-1:0)\leftarrow 
	p_{9}\left(0,\frac{ \theta_{1}^{1} - \theta_{1}^{2} }{ (\theta_{2}^{1} - \theta_{2}^{2})(\theta_{1}^{2} + 1) }\right),\quad 
	p_{7}(0:1:0),\quad p_{8}(1:0:0),
\end{equation*}
where coordinates of $p_{9}$ are w.r.t the coordinate system $u = \frac{ X + Y }{ X }$, $v = \frac{ Z }{ X+Y }$ in the chart $X\neq 0$.
Points $p_{1},...,p_{6}$ lie on the projectivization of the $(2,2)$-curve $Q$ whose homogeneous equation in $\mathbb{P}^{2}$
is
\begin{equation}
	(\theta_{1}^{1} - \theta_{1}^{2})(X + Y)(X + Y + (\theta_{2}^{1} - \theta_{2}^{2})Z) + 
	(\theta_{2}^{1} - \theta_{2}^{2})(\theta_{2}^{2} X + \theta_{2}^{1}Y) Z= 0.
	\label{eq:dpa2-quadric-proj}
\end{equation}
The resulting blow-up diagram is depicted on Figure~\ref{fig:dpa2-schlesinger-blowup-p2}.

\begin{figure}[h]
	\centering
		\includegraphics{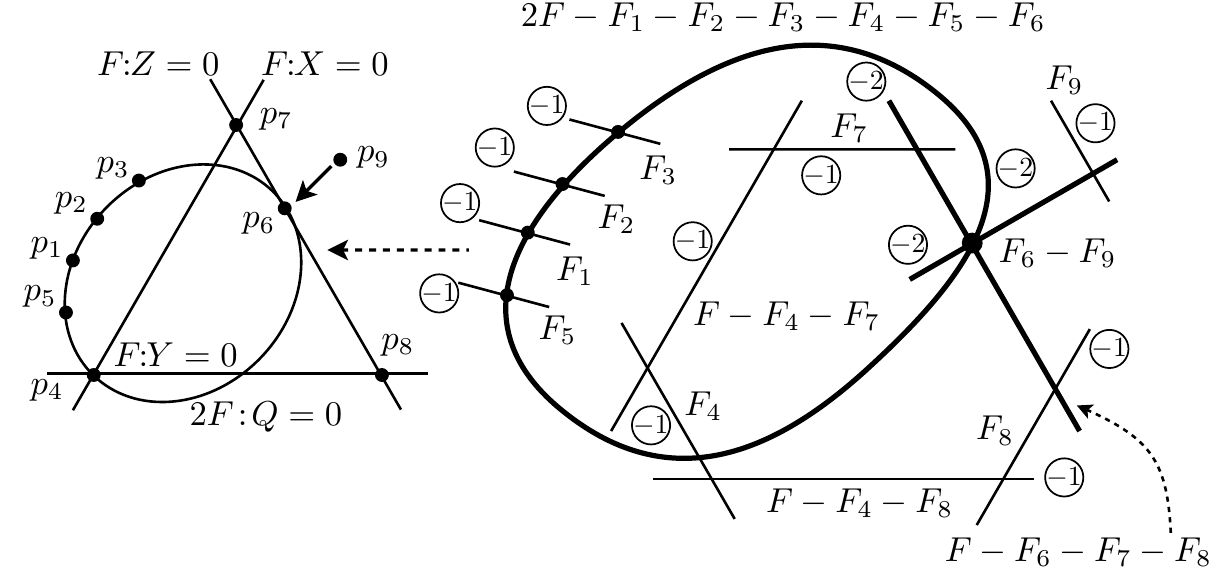}
	\caption{Okamoto surface $\mathcal{X}_{\mathbf{\theta}}$ for the Schlesinger transformations reduction to d-$P({A}_{2}^{(1)*})$ 
	using $\mathbb{P}^{2}$ compactification of $\mathbb{C}^{2}$.}
		\label{fig:dpa2-schlesinger-blowup-p2}
\end{figure}

As before, we see that the anti-canonical divisor $-K_{\mathcal{X}}$ uniquely decomposes as a positive linear combination of $-2$-curves $D_{i}$,
	\begin{equation*}
		-K_{\mathcal{X}} = 3F - \sum_{i=1}^{9} F_{i} = D_{0} + D_{1} + D_{2},
	\end{equation*}
	where
	\begin{equation*}
		D_{0} = 2F - F_{1}-F_{2}-F_{3}-F_{4}-F_{5}-F_{6},\quad  
		D_{1} = F - F_{6} - F_{7} - F_{8},\quad
		D_{2} = F_{6} - F_{9}.
	\end{equation*}
	The configuration of components $D_{i}$ is again described by the 
	Dynkin diagram of type ${A}^{(1)}_{2}$, and since all three $-2$-curves intersect at one point,
	this is a surface of type ${A}_{2}^{(1)*}$. To compare this dynamic with the model example 
	considered earlier we need to find an explicit isomorphism between the corresponding Okamoto
	surfaces, choose the same bases in the Picard lattice, and then compute the translation directions
	in the symmetry sub-lattice. This is what we do next.


\subsubsection{Reduction to the standard form} 
\label{ssub:reduction_to_the_standard_formA2}
To match the surface $\mathcal{X}_{\mathbf{\theta}}$ described by the blow-up diagram on Figure~\ref{fig:dpa2-schlesinger-blowup-p2} 
with the surface $\mathcal{X}_{\mathbf{b}}$ 
described by diagram on Figure~\ref{fig:dpa2-standard-blowup}, we look for the blow-down structure
describing $\mathcal{X}_{\mathbf{b}}$ in $\operatorname{Pic}(\mathcal{X}_{\mathbf{\theta}})$,
i.e., we look for rational classes $\mathcal{H}_{f}$, $\mathcal{H}_{g}$, $\mathcal{E}_{1},\dots, \mathcal{E}_{8}$ 
in $\operatorname{Pic}(\mathcal{X}_{\mathbf{\theta}})$ such that 
\begin{equation*}
	\mathcal{H}_{f}\bullet \mathcal{H}_{g} = 1, \  \mathcal{E}_{i}^{2} = -1, \  
	\mathcal{H}_{f}^{2} = \mathcal{H}_{g}^{2}  = \mathcal{H}_{f}\bullet \mathcal{E}_{i} = \mathcal{H}_{g}\bullet \mathcal{E}_{i}
	 = \mathcal{E}_{i}\bullet \mathcal{E}_{j} = 0,\  1\leq i\neq j\leq 8,
\end{equation*}
and the resulting configuration matches diagram on Figure~\ref{fig:dpa2-standard-blowup}. By the (virtual) genus formula
$g(C) = (C^{2} + K_{\mathcal{X}}\bullet C)/2 + 1$, we see that we should look for classes of rational curves of self-intersection zero among
$F - F_{i}$ and for classes of rational curves of self-intersection $-1$ among $F_{i}$ or $F - F_{i} - F_{j}$.

Comparing the $-2$-curves on both diagrams, 
\begin{align*}
	D_{0} &= 2F - F_{1} - F_{2} - F_{3} - F_{4} - F_{5} - F_{6} = \mathcal{H}_{f} + \mathcal{H}_{g} - \mathcal{E}_{1} - \mathcal{E}_{2}
	- \mathcal{E}_{3} - \mathcal{E}_{4},\\
	D_{1} &= F - F_{6} - F_{7} - F_{8} = \mathcal{H}_{f} - \mathcal{E}_{5} - \mathcal{E}_{6},\\
	D_{2} &= F_{6} - F_{9} = \mathcal{H}_{g} - \mathcal{E}_{7} - \mathcal{E}_{8},
\end{align*}
we see that it makes sense to choose $\mathcal{E}_{i} = F_{i}$ for $i=1,\dots,4$. Then $\mathcal{H}_{f} + \mathcal{H}_{g} = F - F_{5} - F_{6}$,
and looking at $D_{1}$ we put $\mathcal{H}_{f} = F - F_{6}$, $\mathcal{E}_{5} = F_{7}$, $\mathcal{E}_{6} = F_{8}$. This then requires that 
$\mathcal{H}_{g} = F- F_{5}$, and looking at $D_{2}$ we get $\mathcal{E}_{7} + \mathcal{E}_{8} = F - F_{5} - F_{6} + F_{9}$. We put 
$\mathcal{E}_{7} = F - F_{5} - F_{6}$ (to ensure that $\mathcal{H}_{f}\bullet \mathcal{E}_{7} = \mathcal{H}_{g} \bullet \mathcal{E}_{7} = 0$), and 
then $\mathcal{E}_{8} = F_{9}$. To summarize, we get the following identification, which clearly satisfies all of the required conditions
\begin{alignat*}{5}
	\mathcal{H}_{f}&= F - F_{6}, & \quad  \mathcal{E}_{1} &= F_{1}, & \quad  \mathcal{E}_{3} &= F_{3}, & \quad  
	\mathcal{E}_{5} &= F_{7}, & \quad  \mathcal{E}_{7} &= F - F_{5} - F_{6}, \\ 
	\mathcal{H}_{g}&= F - F_{5}, & \quad \mathcal{E}_{2} &= F_{2},  & \quad  \mathcal{E}_{4} &= F_{4}, & \quad  
	\mathcal{E}_{6} &= F_{8}, & \quad  \mathcal{E}_{8} &= F_{9}.
\end{alignat*}

To complete the correspondence it remains to define the base coordinates $f$ and $g$ 
of the linear systems $|\mathcal{H}_{f}|$ and $|\mathcal{H}_{g}|$ that will map the exceptional fibers of the divisors
$\mathcal{E}_{i}$ to the points $\pi_{i}$ such that $\pi_{5}$ and $\pi_{6}$ are on the line $f=\infty$, 
$\pi_{7}$ and $\pi_{8}$ are on $g=\infty$, and $\pi_{1},\dots \pi_{4}$ are on the line $f+g=0$. Since the 
pencil $|\mathcal{H}_{f}|$ consists of all curves on $\mathbb{P}^{2}$ passing through $p_{6}(1:-1:0)$,
\begin{align*}
	|\mathcal{H}_{f}| = |F - F_{6}| &= \{a X + b Y + c Z = 0 \mid a - b =0\} = \{a (X + Y)  + c Z = 0\},
\intertext{we can define the projective base coordinate as $f_{1} = [X + Y  : Z]$.
Similarly,}
	|\mathcal{H}_{g}| = |F - F_{5}| &= \{a X + b Y + c Z = 0 \mid -\theta_{2}^{1} a  + \theta_{2}^{2} b+ c =0\} \\
	&= \{a (X + \theta_{2}^{1} Z) + b (Y - \theta_{2}^{2}Z) = 0\},
\end{align*}
and $g_{1} = [X + \theta_{2}^{1} Z : Y - \theta_{2}^{2} Z]$. Then 
\begin{alignat*}{2}
	f_{1}(\pi_{5}) &= f_{1}(p_{7}) = [-1:0],\qquad  &
	g_{1}(\pi_{7}) &= g_{1}(p_{6}) = [1:-1],\\
	f_{1}(\pi_{6}) &= f_{1}(p_{8}) = [1:0],\qquad &
	g_{1}(\pi_{8}) &= g_{1}(p_{6}) = [1:-1].
\end{alignat*}
In order to have $g_{1}(\pi_{7}) = g_{1}(\pi_{8}) = \infty$ we first make an affine change of coordinates
$\tilde{g}_{1} = [(X + \theta_{2}^{1} Z)  + (Y - \theta_{2}^{2} Z): Y - \theta_{2}^{2} Z])]$ to get 
$\tilde{g}_{1}(\pi_{7}) = \tilde{g}_{1}(\pi_{8}) = [0:-1]$ and then put
\begin{equation*}
	f_{2} = \frac{ X + Y }{ Z },\qquad g_{2} = \frac{ Y - \theta_{2}^{2} Z }{ (X + Y) + (\theta_{2}^{1} - \theta_{2}^{2})Z }.
\end{equation*}
Equation~(\ref{eq:dpa2-quadric-proj}) of the curve $Q$ in these coordinates becomes
\begin{equation}
	Z^{2}(f + \theta_{2}^{1} - \theta_{2}^{2})((\theta_{1}^{1} - \theta_{1}^{2})f_{2} + 
	(\theta_{2}^{1} - \theta_{2}^{2})((\theta_{2}^{1} - \theta_{2}^{2})g_{2} + \theta_{2}^{2})) = 0,
\end{equation}
and the points $\pi_{1},\dots,\pi_{4}$ lie on the line $(\theta_{1}^{1} - \theta_{1}^{2})f_{2} + 
(\theta_{2}^{1} - \theta_{2}^{2})((\theta_{2}^{1} - \theta_{2}^{2})g_{2} + \theta_{2}^{2})) = 0$.
Thus, if we finally put
\begin{align}
	f &= \frac{ (\theta_{1}^{1} - \theta_{1}^{2})}{(\theta_{2}^{1} - \theta_{2}^{2}) } f_{2} 
	= \frac{ (\theta_{1}^{1} - \theta_{1}^{2})(X+Y)  }{ (\theta_{2}^{1} - \theta_{2}^{2}) Z } 
	= \frac{ (\theta_{1}^{1} - \theta_{1}^{2})(x + y)  }{ (\theta_{2}^{1} - \theta_{2}^{2})},\\
	g &= (\theta_{2}^{1} - \theta_{2}^{2})g_{2} + \theta_{2}^{2} = 
	\frac{ \theta_{2}^{2} X + \theta_{2}^{1} Y }{ (X + Y) + (\theta_{2}^{1} - \theta_{2}^{2})Z }
	= \frac{ \theta_{2}^{2} x + \theta_{2}^{1} y }{ (x + y) + (\theta_{2}^{1} - \theta_{2}^{2}) },
\end{align}
points $\pi_{1},\dots,\pi_{4}$ will be on the line $f + g = 0$, points $\pi_{5}$ and $\pi_{6}$ will be on 
the line $f = \infty$, and points $\pi_{7}$ and $\pi_{8}$ will be on the line $g = \infty$, as requires.
Specifically, we get
\begin{alignat*}{4}
	\pi_{1}&(\theta_{1}^{2} + \theta_{3}^{1}, -\theta_{1}^{2} - \theta_{3}^{1}), &\quad 	
	\pi_{3}&(\theta_{1}^{2} + \theta_{3}^{3}, -\theta_{1}^{2} - \theta_{3}^{3}), &\quad 	
	\pi_{5}&(\infty,\theta_{2}^{1}), &\quad 	
	\pi_{7}&(\theta_{1}^{2} - \theta_{1}^{1},\infty),  \\
	\pi_{2}&(\theta_{1}^{2} + \theta_{3}^{2}, - \theta_{1}^{2} - \theta_{3}^{2}), &\quad 	
	\pi_{4}&(0,0), &\quad 	
	\pi_{6}&(\infty,\theta_{2}^{2}), &\quad 	
	\pi_{8}&(\theta_{1}^{2} + 1, \infty).
\end{alignat*}	
Thus, we immediately get the identification between the parameters in the Riemann scheme of our Fuchsian system and the 
parameters $b_{i}$ in the model equation:
\begin{alignat*}{4}
	b_{1} &= \theta_{1}^{2} + \theta_{3}^{1},  &\quad 	
	b_{3} &= \theta_{1}^{2} + \theta_{3}^{3},  &\quad 	
	b_{5} &= \theta_{2}^{1}, &\quad 	
	b_{7} &= \theta_{1}^{1} - \theta_{1}^{2},  \\
	b_{2} &= \theta_{1}^{2} + \theta_{3}^{2}, &\quad 	
	b_{4} &= 0, &\quad 	
	b_{6} &= \theta_{2}^{2}, &\quad 	
	b_{8} &= -\theta_{1}^{2} - 1.
\end{alignat*}	
This, in turn, allows us to see the effect of the standard Painlev\'e dynamic on 
the  Riemann scheme. Indeed, $\delta = b_{1} + \cdots + b_{8} = -1$, and, for example,
$\bar{k}_{1} = \bar{b}_{1} + \bar{b}_{8} + 1 = b_{1} + b_{8} - \delta + 1 = k_{1} +1$, 
and so on. So for the model equation we get 
\begin{align*}
	\left\{
	\begin{tabular}{cccc}
		$z_{1}$ 		& $	z_{2}$		& 	$z_{3}$	\\
		$\theta_{1}^{1}$	& $\theta_{2}^{1}$	& 	$\theta_{3}^{1}$ \\
		$\theta_{1}^{2}$	& $\theta_{2}^{2}$	& 	$\theta_{3}^{2}$ \\
		$ 0 $			& 	$ 0$		& 	$\theta_{3}^{3}$ 
	\end{tabular}
	\right\}		&	\overset{\text{d-$P(A_2^{(1)*})$}}{\strut\longmapsto}
	\left\{
	\begin{tabular}{cccc}
		$z_{1}$ 		& $	z_{2}$		& 	$z_{3}$	\\
		$\theta_{1}^{1}$	& $\theta_{2}^{1}-1$	& 	$\theta_{3}^{1}+1$ \\
		$\theta_{1}^{2}-1$	& $\theta_{2}^{2}-1$	& 	$\theta_{3}^{2}+1$ \\
		$ 0 $			& 	$ 0$		& 	$\theta_{3}^{3}+1$ 
	\end{tabular}
	\right\},
	\intertext{whereas our elementary Schlesinger transformation acts as }
	\left\{
	\begin{tabular}{cccc}
		$ z_{1}$ 		& $	z_{2} $			& 	$ z_{3}$	\\
		$\theta_{1}^{1}$	& $\theta_{2}^{1}$		& 	$\theta_{3}^{1}$ \\
		$\theta_{1}^{2}$	& $\theta_{2}^{2}$		& 	$\theta_{3}^{2}$ \\
		$0$			& 	$0$					& 	$\theta_{3}^{3}$ 
	\end{tabular}
	\right\} &\overset{\left\{\begin{smallmatrix} 1&2\\1&1\end{smallmatrix}\right\}}{\strut\longmapsto}
	\left\{
	\begin{tabular}{cccc}
		$ z_{1}$ 		& $	z_{2} $			& 	$ z_{3}$	\\
		$\theta_{1}^{1} - 1$	& $\theta_{2}^{1} + 1$		& 	$\theta_{3}^{1}$ \\
		$\theta_{1}^{2}$	& $\theta_{2}^{2}$		& 	$\theta_{3}^{2}$ \\
		$0$			& 	$0$			 		& 	$\theta_{3}^{3}$ 
	\end{tabular}
	\right\}. 
\end{align*}
Thus, these two transformations correspond to the different translation directions in the symmetry root sub-lattice
of the surface $\tilde{X}$ and so are not equivalent. 
Indeed, we compute the action of $\psi_{*}$ of an elementary Schlesinger transformation 
$\left\{\begin{smallmatrix} 1&2\\1&1\end{smallmatrix}\right\}$ 
on the classes $\mathcal{H}_{f}$, $\mathcal{H}_{g}$, and $\mathcal{E}_{i}$ to be
\begin{align*}
	 \mathcal{H}_{f}&\mapsto 2 \mathcal{H}_{f} + 3 \mathcal{H}_{g} - \mathcal{E}_{1} - \mathcal{E}_{2} - \mathcal{E}_{3} - 
	\mathcal{E}_{4} - 2 \mathcal{E}_{5} - 2\mathcal{E}_{8},
	\\
	 \mathcal{H}_{g}&\mapsto 3 \mathcal{H}_{f} + 5 \mathcal{H}_{g} - 2\mathcal{E}_{1} - 2\mathcal{E}_{2} - 2\mathcal{E}_{3} - 
	2\mathcal{E}_{4} - 3 \mathcal{E}_{5} - \mathcal{E}_{6} - 2\mathcal{E}_{8},\label{eq:dpa-hg-sch}\\
	 \mathcal{E}_{1}&\mapsto \mathcal{H}_{f} + 2\mathcal{H}_{g} - \mathcal{E}_{2} - \mathcal{E}_{3} - \mathcal{E}_{4} - 
	\mathcal{E}_{5} - \mathcal{E}_{8},
	\\
	 \mathcal{E}_{2}&\mapsto \mathcal{H}_{f} + 2\mathcal{H}_{g} - \mathcal{E}_{1} - \mathcal{E}_{3} - \mathcal{E}_{4} - 
	\mathcal{E}_{5} - \mathcal{E}_{8},
	\\
	 \mathcal{E}_{3}&\mapsto \mathcal{H}_{f} + 2\mathcal{H}_{g} - \mathcal{E}_{1} - \mathcal{E}_{2} - \mathcal{E}_{4} - 
	\mathcal{E}_{5} - \mathcal{E}_{8},
	\\
	 \mathcal{E}_{4}&\mapsto \mathcal{H}_{f} + 2\mathcal{H}_{g} - \mathcal{E}_{1} - \mathcal{E}_{2} - \mathcal{E}_{3} - 
	\mathcal{E}_{5} - \mathcal{E}_{8},
	\\
	 \mathcal{E}_{5}&\mapsto \mathcal{E}_{7},
	\\
	 \mathcal{E}_{6}&\mapsto 2\mathcal{H}_{f} + 2\mathcal{H}_{g} - \mathcal{E}_{1} - \mathcal{E}_{2} - \mathcal{E}_{3} - 
	\mathcal{E}_{4} - 2\mathcal{E}_{5} - \mathcal{E}_{8},
	\\
	 \mathcal{E}_{7}&\mapsto 2\mathcal{H}_{f} + 3\mathcal{H}_{g} - \mathcal{E}_{1} - \mathcal{E}_{2} - \mathcal{E}_{3} - 
	\mathcal{E}_{4} - 2\mathcal{E}_{5} - \mathcal{E}_{6} - 2\mathcal{E}_{8},
	\\
	 \mathcal{E}_{8}&\mapsto \mathcal{H}_{g} - \mathcal{E}_{5},
\end{align*}
and compare with the standard dynamic $\varphi_{*}$ given by~(\ref{eq:dpa-tr-std}) to see this explicitly:
\begin{align*}
	\psi_{*}: (\alpha_{0}, \alpha_{1}, \alpha_{2}, \alpha_{3}, \alpha_{4}, \alpha_{5}, \alpha_{6})&\mapsto	
	(\alpha_{0}, \alpha_{1}, \alpha_{2}, \alpha_{3}, \alpha_{4}, \alpha_{5}, \alpha_{6}) + \\
	&\qquad(0,0,0,-1,1,1,-1)\left(-K_{\mathcal{X}}\right),\\
	\varphi_{*}: (\alpha_{0}, \alpha_{1}, \alpha_{2}, \alpha_{3}, \alpha_{4}, \alpha_{5}, \alpha_{6})&\mapsto	
	(\alpha_{0}, \alpha_{1}, \alpha_{2}, \alpha_{3}, \alpha_{4}, \alpha_{5}, \alpha_{6}) + \\
	&\qquad(0,0,0,1,0,-1,0)\left(-K_{\mathcal{X}}\right).
\end{align*}

It is possible to represent the standard Painlev\'e dynamic as a composition of two elementary 
Schlesinger transformations, combined with some automorphisms of our Fuchsian system. We first demonstrate this
by looking at a sequence of actions on the Riemann scheme:
\begin{align*}
	\left\{
	\begin{tabular}{cccc}
		$ z_{1}$ 		& $	z_{2} $			& 	$ z_{3}$	\\
		$\theta_{1}^{1}$	& $\theta_{2}^{1}$		& 	$\theta_{3}^{1}$ \\
		$\theta_{1}^{2}$	& $\theta_{2}^{2}$		& 	$\theta_{3}^{2}$ \\
		$0$			& 	$0$					& 	$\theta_{3}^{3}$ 
	\end{tabular}
	\right\}
	&\overset{\left\{\begin{smallmatrix} 2&1\\1&1\end{smallmatrix}\right\}}{\strut\longmapsto}
	\left\{
	\begin{tabular}{cccc}
		$ z_{1}$ 		& $	z_{2} $			& 	$ z_{3}$	\\
		$\theta_{1}^{1}+1$	& $\theta_{2}^{1}-1$		& 	$\theta_{3}^{1}$ \\
		$\theta_{1}^{2}$	& $\theta_{2}^{2}$		& 	$\theta_{3}^{2}$ \\
		$0$			& 	$0$					& 	$\theta_{3}^{3}$ 
	\end{tabular}
	\right\} \\
	&\overset{\sigma_{1}(1,3)}{\strut\longmapsto}
	\left\{
	\begin{tabular}{cccc}
		$ z_{1}$ 		& $	z_{2} $			& 	$ z_{3}$	\\
		$0$	& $\theta_{2}^{1}-1$		& 	$\theta_{3}^{1}$ \\
		$\theta_{1}^{2}$	& $\theta_{2}^{2}$		& 	$\theta_{3}^{2}$ \\
		$\theta_{1}^{1}+1$			& 	$0$					& 	$\theta_{3}^{3}$ 
	\end{tabular}
	\right\}
	\\
	&\overset{\rho_{1}(-\theta_{1}^{1} - 1)}{\strut\longmapsto}
	\left\{
	\begin{tabular}{cccc}
		$ z_{1}$ 		& $	z_{2} $			& 	$ z_{3}$	\\
		$-\theta_{1}^{1} - 1$	& $\theta_{2}^{1}-1$		& 	$\theta_{3}^{1} +\theta_{1}^{1} + 1$ \\
		$\theta_{1}^{2}-\theta_{1}^{1} - 1$	& $\theta_{2}^{2}$		& 	$\theta_{3}^{2} +\theta_{1}^{1} + 1 $ \\
		$0$			& 	$0$					& 	$\theta_{3}^{3}+\theta_{1}^{1} + 1$ 
	\end{tabular}
	\right\}\\
	&\overset{\left\{\begin{smallmatrix} 2&1\\2&1\end{smallmatrix}\right\}}{\strut\longmapsto}
	\left\{
	\begin{tabular}{cccc}
		$ z_{1}$ 		& $	z_{2} $			& 	$ z_{3}$	\\
		$-\theta_{1}^{1} $	& $\theta_{2}^{1}-1$		& 	$\theta_{3}^{1} +\theta_{1}^{1} + 1$ \\
		$\theta_{1}^{2}-\theta_{1}^{1} - 1$	& $\theta_{2}^{2}-1$		& 	$\theta_{3}^{2} +\theta_{2}^{1} + 1 $ \\
		$0$			& 	$0$					& 	$\theta_{3}^{3}+\theta_{1}^{1} + 1$ 
	\end{tabular}
	\right\}\\
	&\overset{\sigma_{1}(1,3)}{\strut\longmapsto}
	\left\{
	\begin{tabular}{cccc}
		$ z_{1}$ 		& $	z_{2} $			& 	$ z_{3}$	\\
		$0$	& $\theta_{1}^{1}-1$		& 	$\theta_{3}^{1} +\theta_{1}^{1} + 1$ \\
		$\theta_{1}^{2}-\theta_{1}^{1} - 1$	& $\theta_{2}^{2}-1$		& 	$\theta_{3}^{2} +\theta_{1}^{1} + 1 $ \\
		$\theta_{1}^{1} $			& 	$0$					& 	$\theta_{3}^{3}+\theta_{1}^{1} + 1$ 
	\end{tabular}
	\right\}\\	
	&\overset{\rho_{1}(\theta_{1}^{1})}{\strut\longmapsto}	
	\left\{
	\begin{tabular}{cccc}
		$ z_{1}$ 		& $	z_{2} $			& 	$ z_{3}$	\\
		$\theta_{1}^{1} $	& $\theta_{2}^{1}-1$		& 	$\theta_{3}^{1} + 1$ \\
		$\theta_{1}^{2} - 1$	& $\theta_{2}^{2}-1$		& 	$\theta_{3}^{2} + 1 $ \\
		$0$			& 	$0$					& 	$\theta_{3}^{3}+ 1$ 
	\end{tabular}
	\right\}.
\end{align*}
Here $\left\{\begin{smallmatrix} 2&1\\1&1\end{smallmatrix}\right\}$ and $\left\{\begin{smallmatrix} 2&1\\2&1\end{smallmatrix}\right\}$
are the usual elementary Schlesinger transformations, the map $\rho_{i}(s):\mathbf{A}(z)\mapsto (z - z_{i})^{s}\mathbf{A}(z)$ is a scalar gauge
transformation, and $\sigma_{i}(j,k)$ is a map that exchanges the $j$-th and the $k$-th eigenvectors (and eigenvalues) of $\mathbf{A}_{i}$. Note that, 
if the eigenvalues $\theta_{i}^{j}$ and $\theta_{i}^{k}$ are non-zero, this map is just a permutation 
on the decomposition space $\mathcal{B} \times \mathcal{C}$. And even though for the map $\sigma_{1}(1,3)$ that we use above one of the eigenvectors
has the eigenvalue zero, this map is still well-defines as a map on $\mathcal{B} \times \mathcal{C}$, since 
$\mathbf{b}_{1}^{3} \in \operatorname{Ker}(\mathbf{C}_{1}^{\dag})$
and $\mathbf{c}_{1}^{3\dag}\in \operatorname{Ker}(\mathbf{B}_{1})$. In fact, if we combine $\sigma_{1}(1,3)$ with $\rho_{1}(-\theta_{1}^{1})$
to define a transformation $\Sigma_{1}(1,3) = \rho_{1}(-\theta_{1}^{1}) \circ \sigma_{1}(1,3)$, 
\begin{equation*}
		\Sigma_{1}(1,3) : \left\{
		\begin{tabular}{cccc}
			$ z_{1}$ 		& $	z_{2} $			& 	$ z_{3}$	\\
			$\theta_{1}^{1}$	& $\theta_{2}^{1}$		& 	$\theta_{3}^{1}$ \\
			$\theta_{1}^{2}$	& $\theta_{2}^{2}$		& 	$\theta_{3}^{2}$ \\
			$0$			& 	$0$					& 	$\theta_{3}^{3}$ 
		\end{tabular}
		\right\} {\strut\longmapsto}
		\left\{
		\begin{tabular}{cccc}
			$ z_{1}$ 						& $	z_{2} $				& 	$ z_{3}$	\\
			$-\theta_{1}^{1} $				& $\theta_{2}^{1} $		& 	$\theta_{3}^{1} + \theta_{1}^{1}$ \\
			$\theta_{1}^{2}-\theta_{1}^{1}$	& $\theta_{2}^{2}$		& 	$\theta_{3}^{2} + \theta_{1}^{1}$ \\
			$0$			& 	$0$			 		& 	$\theta_{3}^{3} + \theta_{1}^{1}$ 
		\end{tabular}
		\right\},	
\end{equation*}
the action of $\Sigma_{1}(1,3)$ on the decomposition space is explicitly given by 
\begin{multline*}
	\Sigma_{1}(1,3) : \left(\mathbf{b}_{1}^{1},\mathbf{b}_{1}^{2}; \mathbf{c}_{1}^{1\dag},\mathbf{c}_{1}^{2\dag};
	\mathbf{b}_{2}^{1},\mathbf{b}_{2}^{2}; \mathbf{c}_{2}^{1\dag},\mathbf{c}_{2}^{2\dag}\right) \mapsto \\
	\left((\mathbf{c}_{1}^{1\dag}\times \mathbf{c}_{1}^{2\dag})^{t},\mathbf{b}_{1}^{2}; 
	\frac{ - \theta_{1}^{1} (\mathbf{b}_{1}^{1}\times\mathbf{b}_{1}^{2})^{t} }{ 
	(\mathbf{c}_{1}^{1\dag}\times \mathbf{c}_{1}^{2\dag}) (\mathbf{b}_{1}^{1}\times\mathbf{b}_{1}^{2}) },\mathbf{c}_{1}^{2\dag};
	\mathbf{b}_{2}^{1},\mathbf{b}_{2}^{2}; \mathbf{c}_{2}^{1\dag},\mathbf{c}_{2}^{2\dag}\right),
\end{multline*}
where $\times$ is the usual cross-product and $t$ denotes transposition.
We also had to use the normalization condition $\mathbf{C}_{1}^{\dag} \mathbf{B}_{1} = \mathbf{\Theta}_{1}$.

It is also possible to show, by a direct computation, that
\begin{equation*}
	\text{d-$P(A_2^{(1)*})$} = 
	\Sigma_{1}(1,3)\circ\left\{\begin{smallmatrix} 2&1\\2&1\end{smallmatrix}\right\}\circ
	\Sigma_{1}(1,3)\circ\left\{\begin{smallmatrix} 2&1\\1&1\end{smallmatrix}\right\}
\end{equation*}
holds on the level of equations as well.

\subsection{Reductions to difference Painlev\'e equation of type d-$P\left(A_{1}^{(1)*}\right)$ with the symmetry group $E^{(1)}_{7}$.} 
\label{ssub:reductions_to_difference_painlev'e_equation_2}

\subsubsection{Model Example} 
\label{ssub:model_exampleA1}

For our model example of d-$P(A_{1}^{(1)*})$ equation we take the equation that first was appeared in 
\cite{GraRamOht:2003:AUDOTAQVADIEATST} as an asymmetric q-$P_{\text{IV}}$ equation, and
we use the variables as in Sakai's paper
\cite{Sak:2007:PDPEATLF}. We again consider d-$P(A_{1}^{(1)*})$ to be a
birational map $\varphi: \mathbb{P}^{1}\times \mathbb{P}^{1} \dashrightarrow \mathbb{P}^{1}\times \mathbb{P}^{1}$ 
with parameters $b, b_{1},\dots, b_{8}$, 
\begin{equation*}
	\left(\begin{matrix}
		b & b_{1} & b_{2} & b_{3} & b_{4}\\
		 & b_{5} & b_{6} & b_{7} & b_{8}
	\end{matrix}; f,g\right) \mapsto 
	\left(\begin{matrix}
		\bar{b} & \bar{b}_{1} & \bar{b}_{2} & \bar{b}_{3} & \bar{b}_{4}\\
		& \bar{b}_{5} & \bar{b}_{6} & \bar{b}_{7} & \bar{b}_{8}
	\end{matrix}; \bar{f},\bar{g}\right),
	\quad
	\begin{aligned}
		\bar{b}	 &= b - \delta,\\
		 \bar{b}_{i} &= b_{i},\quad i = 1,\dots,8,\\		
	\end{aligned}
\end{equation*} 
$\delta = b_{1} + \cdots + b_{8}$, and   $\bar{f}$ and $\bar{g}$ are given by the equations
\begin{equation}
	\left\{
	\begin{aligned}
		\frac{ (g + f - 2b) (g + \bar{f} - b - \bar{b})}{ (g + f) (g + \bar{f}) } &= 
		\frac{ \prod_{i=1}^{4} (g  - b + b_{i}) }{ \prod_{i=5}^{8} (g - b_{i}) }\\
		\frac{ (g + \bar{f} - b - \bar{b}) (\bar{g} + \bar{f} - 2\bar{b})}{ (g + \bar{f}) (\bar{g} + \bar{f}) } &= 
		\frac{ \prod_{i=1}^{4} (\bar{f} - \bar{b} - b_{i} ) }{ \prod_{i=5}^{8} (\bar{f} + b_{i}) }
	\end{aligned}
	\right.\,.\label{eq:dpA1-st}	
\end{equation}
It is convenient to introduce the notation
\begin{equation*}
	G_{14} = G_{14}(g) = \prod_{i=1}^{4} (g - b + b_{i} ),\quad 
	G_{14}^{i} = G_{14}^{i}(g) = \prod_{j=1,j\neq i}^{4} (g - b + b_{j}),
\end{equation*}
and similarly for $G_{58}$, $F_{14}$, and $F_{58}$. The maps $\varphi$ is then given by 
the sequence $(f,g)\to(\bar{f},g)\to(\bar{f},\bar{g})$ described by the equations
\begin{align}
	\bar{f} &= - \frac{ (g - b - \bar{b}) (f + g - 2b) G_{58} - g(f+g) G_{14}}{ (f + g - 2b) G_{58} - (f+g) G_{14}  }, \label{eq:dpA1-phi1f}\\
	\bar{g} &= - \frac{ (\bar{f} - 2\bar{b}) (\bar{f} + g - b - \bar{b}) F_{58} - \bar{f} (\bar{f} + g) F_{14} }{ 
		(\bar{f} + g - b - \bar{b}) F_{58} -  (\bar{f} + g) F_{14}  }, \label{eq:dpA1-phi1g}
		\intertext{and $\varphi^{-1}$ is given by $(\bar{f},\bar{g})\to(\bar{f},g)\to(f,g)$ given by }
	g &= - \frac{ (\bar{f} - b - \bar{b}) (\bar{f} + \bar{g} - 2\bar{b}) F_{58} - \bar{f}(\bar{f}+\bar{g}) F_{14}}{ 
	(\bar{f} + \bar{g} - 2\bar{b}) F_{58} - (\bar{f}+\bar{g}) F_{14} }, \notag\\
	f &= - \frac{ (g - 2b) (\bar{f} + g - b - \bar{b}) G_{58} - g (\bar{f} + g) G_{14} }{ 
		(\bar{f} + g - b - \bar{b}) G_{58} - (\bar{f} + g) G_{14}  }.\notag
\end{align}

It is easy to see that the indeterminate points of the first map $\bar{f} = \bar{f}(f,g)$ are either given by the conditions 
$f+g = 2b$ and $G_{14} = 0$, or by the conditions $f+g  =0$ and $G_{58} = 0$. Thus, we get 8 indeterminate points
lying on two curves of bi-degree $(1,1)$, $C_{b}: f+g = 2b$ and $C_{0}: f + g = 0$:
\begin{equation*}
	p_{i}(b+b_{i},b-b_{i}),\quad i=1,\dots,4\qquad\text{and}\qquad p_{i}(-b_{i},b_{i}),\quad i=5,\dots,8
\end{equation*}
on the $(f,g)$-plane. It is also easy to see that these are also the indeterminate points of all of the other maps
(with $b$ changed to $\bar{b}$ for $(\bar{f},\bar{g})$-coordinates). We then get the following blowup diagram
describing the Okamoto space of initial conditions $\mathcal{X}_{\mathbf{b}}$  
on Figure~\ref{fig:dpa1-standard-blowup}.
\begin{figure}[h]
	\centering
		\includegraphics{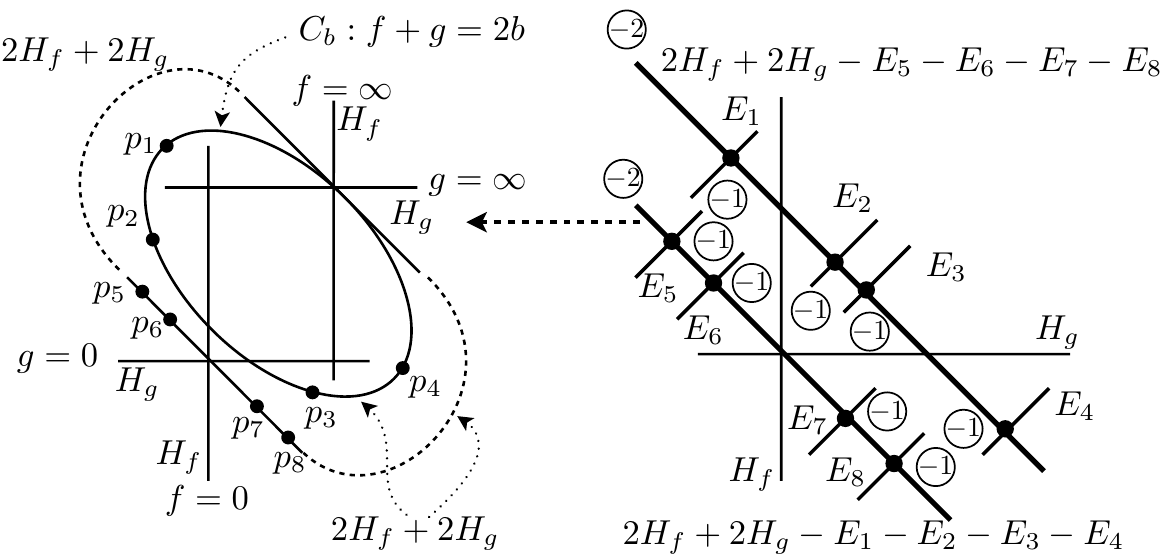}
	\caption{Okamoto surface $\mathcal{X}_{\mathbf{b}}$ for the model form of d-$P({A}_{1}^{(1)*})$.}
	\label{fig:dpa1-standard-blowup}
\end{figure}	

In $\operatorname{Pic}(\mathcal{X}_{\mathbf{b}}) = \mathbb{Z} H_{f}\oplus \mathbb{Z} H_{g} \oplus \bigoplus_{i=1}^{8} \mathbb{Z} E_{i}$,
the anti-canonical divisor again  decomposes uniquely as the sum of two connected components,
 \begin{align*}
	-K_{\mathcal{X}}&=2 H_{f} + 2 H_{g} - \sum_{i=1}^{8} E_{i} = D_{0} + D_{1},\qquad\text{where}\\
	D_{0} &= H_{f} + H_{g} - E_{1} - E_{2} - E_{3} - E_{4},\\
	D_{1} &= H_{f} + H_{g} - E_{5} - E_{6} - E_{7} - E_{8},	
 \end{align*}
and $D_{1}^{2} = D_{2}^{2} = - D_{1}\bullet D_{2} = -2$. Thus, the
configuration of components $D_{i}$ is described by the 
Dynkin diagram of type ${A}^{(1)}_{1}$. To this diagram again
correspond two different types of surfaces, the generic one corresponding to divisors $D_{1}$ and $D_{2}$
intersecting at two points gives a multiplicative
system of type ${A}^{(1)}_{1}$, and the degenerate configuration corresponding to two components touch at one point 
gives an additive system denoted by 
${A}_{1}^{(1)*}$, which 
is our case, see Figure~\ref{fig:dpa1-configs}.

\begin{figure}[h]
	\begin{center}
		\begin{tabular}{ccc}
			\includegraphics{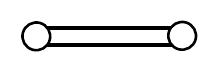} & \quad
			\includegraphics{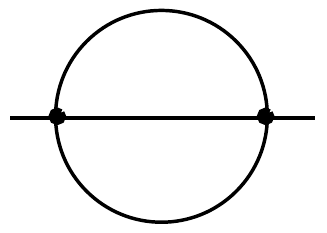} \quad & 
			\includegraphics{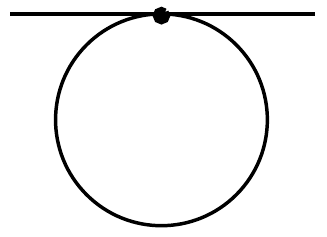} \\
			Dynkin diagram ${A}_{1}^{(1)}$ & ${A}^{(1)}_{1}$-surface & ${A}_{1}^{(1)*}$-surface.
		\end{tabular}
	\end{center}
	\caption{Configurations of type ${A}_{1}^{(1)}$}
	\label{fig:dpa1-configs}
\end{figure}

The symmetry sub-lattice
$R^{\perp}=\operatorname{Span}_{\mathbb{Z}}\{\alpha_{0},\dots,\alpha_{7}\}$ is of type ${E}^{(1)}_{7}$,
where the basis of $\alpha_{i}$ is given on 
Figure~\ref{fig:dpa1-symm}. 

\begin{figure}[h]
\begin{equation*}
	\begin{aligned}
		\alpha_{1}&= E_{3} - E_{4},& \quad \alpha_{5}&= E_{5} - E_{6}, \\  
		\alpha_{2}&= E_{2} - E_{3},& \quad  \alpha_{6}&= E_{6} - E_{7},\\
		\alpha_{3}&= E_{1} - E_{2},& \quad  \alpha_{0}&= E_{7} - E_{8}, \\
		\alpha_{4}&= H_{f} - E_{1} - E_{5},& \quad  \alpha_{7}&= H_{g} - H_{f}, 
	\end{aligned}\qquad\qquad
		\raisebox{-0.2in}{\includegraphics{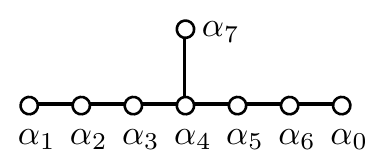}}.
\end{equation*}
	\caption{Symmetry sub-lattice for d-$P(\tilde{A}_{1}^{*})$}
	\label{fig:dpa1-symm}
\end{figure}	

To compute the action of $\varphi_{*}$ on $\operatorname{Pic}(\mathcal{X}_{\mathbf{b}})$,
we decompose $\varphi = \varphi_{2} \circ \varphi_{1}$, where $\varphi_{1}:(f,g)\to (\bar{f},g)$ is 
given by~(\ref{eq:dpA1-phi1f}), and $\varphi_{2}:(\bar{f},g)\to (\bar{f},\bar{g})$ is 
given by~(\ref{eq:dpA1-phi1g}). Then $(\varphi_{1})_{*}(H_{g}) = H_{g}$, 
it is straightforward to see that $(\varphi_{1})_{*}(E_{i}) = H_{g} - E_{i}$ and that 
$(\varphi_{1})_{*}(D_{0}) = D_{1}$, $(\varphi_{1})_{*}(D_{1}) = D_{0}$, which gives 
$(\varphi_{1})_{*} (H_{f}) = H_{f} + 4 H_{g} - E$, where $E = \sum_{i=1}^{8} E_{i}$.
The situation with $\varphi_{2}$ is completely symmetric, 
$(\varphi_{2})_{*}(H_{f}) = H_{f}$, $(\varphi_{2})_{*}(E_{i}) = H_{f} - E_{i}$, and
$(\varphi_{2})_{*} (H_{g}) = H_{g} + 4 H_{f} - E$. Composing these two linear maps, 
we get the action of $\varphi_{*}$:
\begin{align*}
	H_{f} &\mapsto 9 H_{f} + 4 H_{g} - 3 E\\
	H_{g} &\mapsto 4 H_{f} +  H_{g} -  E\\
	E_{i} &\mapsto 3 H_{f} +  H_{g} -  E + E_{i},\qquad i=1,\dots,8,
\end{align*}
and so the induced action $\varphi_{*}$ on the sub-lattice $R^{\perp}$	is given by the following {translation}:
\begin{align}
	(\alpha_{0}, \alpha_{1}, \alpha_{2}, \alpha_{3}, \alpha_{4}, \alpha_{5}, \alpha_{6},\alpha_{7})&\mapsto	
	(\alpha_{0}, \alpha_{1}, \alpha_{2}, \alpha_{3}, \alpha_{4}, \alpha_{5}, \alpha_{6},\alpha_{7}) + \label{eq:dpa1-trans-std}\\
	&\qquad
	(0,0,0,0,1,0,0,-2)(-K_{\mathcal{X}}). \notag
\end{align}


\subsubsection{Schlesinger Transformations} 
\label{ssub:schlesinger_transformationsA2}
We start with a $4\times4$ Fuchsian system of the spectral type $1111,1111,22$. In this case it is convenient
to have all singular points to be finite, since we need to consider two different kinds of elementary Schlesinger 
transformations --- one between two points with non-repeating eigenvalues, and the other when at one point we
have an eigenvalue of multiplicity two. As before, we can use scalar gauge transformations to make some of the 
eigenvalues to vanish, and so, putting $\theta_{3}=\theta_{3}^{1}=\theta_{3}^{2}$ we take our Riemann scheme to be 
\begin{equation*}
	\left\{
	\begin{tabular}{cccc}
		$z_{1}$ 		& $z_{2}$		& 	$z_{3}$	\\
		$\theta_{1}^{1}$	& $\theta_{2}^{1}$	& 	$\theta_{3}$ \\
		$\theta_{1}^{2}$	& $\theta_{2}^{2}$	& 	$\theta_{3}$ \\
		$\theta_{1}^{3}$	& $\theta_{2}^{3}$	& 	$0$ \\
		$0$	& $\theta_{2}^{4}$	& 	$0$ \\
	\end{tabular}
	\right\},\qquad 
	\theta_{1}^{1} + \theta_{1}^{2} + \theta_{1}^{3} + \theta_{2}^{1} + 
	\theta_{2}^{2} + \theta_{2}^{3} + \theta_{2}^{4}  + 2 \theta_{3}= 0.
\end{equation*}

We first consider an elementary Schlesinger transformation $\left\{\begin{smallmatrix} 1&2\\1&1\end{smallmatrix}\right\}$
for which $\bar{\theta}_{1}^{1} = \theta_{1}^{1}-1$ and $\bar{\theta}_{2}^{1} = \theta_{2}^{1}+1$. The multiplier matrix for 
this transformation is 
\begin{equation*}
	\mathbf{R}(z) = \mathbf{I} + \frac{ z_{1} - z_{2} }{ z - z_{1} }\mathbf{P},\qquad\text{where }
	\mathbf{P} = \frac{ \mathbf{b}_{2,1}\mathbf{c}_{1}^{1\dag} }{ \mathbf{c}_{1}^{1\dag} \mathbf{b}_{2,1} } \text{ and we put }
	\mathbf{Q} = \mathbf{I} - \mathbf{P}.
\end{equation*}
Since this transformation does not involve the point $z_{3}$ with multiple eigenvalues, the dynamic is 
again given by equations~(\ref{eq:bb-cb-generators}--\ref{eq:cb-am}) that now take the form
\begin{align}
	\bar{\mathbf{b}}_{1,1} &= \frac{ 1 }{ c_{1}^{1} }\mathbf{b}_{2,1}, \quad
	\bar{\mathbf{b}}_{1,j} = \frac{ 1 }{ c_{1}^{j} } \left(\mathbf{I} - \frac{ \mathbf{P} }{ \theta_{1}^{1} - \theta_{1}^{j}-1 }
	\left(\mathbf{A}_{2} + \frac{ z_{2} - z_{1} }{ z_{3} - z_{1} }\mathbf{A}_{1}\right)
	\right) \mathbf{b}_{1,j}\quad (j=2,3);\\	
	\bar{\mathbf{b}}_{2,1} &= \frac{ 1 }{ c_{2}^{1} }\Bigg(
	(\theta_{2}^{1} + 1)\mathbf{I} + \mathbf{Q}\left(
	\mathbf{I} + \frac{ \mathbf{b}_{2,2} \mathbf{c}_{2}^{2\dag}}{ \theta_{2}^{1} - \theta_{2}^{2} + 1 } + 
	\frac{ \mathbf{b}_{2,3} \mathbf{c}_{2}^{3\dag}}{ \theta_{2}^{1} - \theta_{2}^{3} + 1 } + 
	\frac{ \mathbf{b}_{2,4} \mathbf{c}_{2}^{4\dag}}{ \theta_{2}^{1} - \theta_{2}^{4} + 1 }
	\right)\times\notag\\
	&\hskip2.8in\left(\mathbf{A}_{1} + \frac{ z_{1}- z_{2} }{ z_{3} - z_{2} }\mathbf{A}_{3}\right)
	\Bigg) \frac{ \mathbf{b}_{2,1} }{ \mathbf{c}_{1}^{1\dag}\mathbf{b}_{2,1} }; \notag\\
	\bar{\mathbf{b}}_{2,j} &= \frac{ 1 }{ c_{2}^{j} }\mathbf{Q}\mathbf{b}_{2,j}\quad(j=2,3,4); \qquad
	\bar{\mathbf{b}}_{3,j} = \frac{ 1 }{ c_{3}^{j} }\mathbf{R}(z_{3})\mathbf{b}_{3,j}\quad(j=1,2);\notag\\
	\bar{\mathbf{c}}_{1}^{1\dag} &= c_{1}^{1} \frac{ \mathbf{c}_{1}^{1\dag} }{ \mathbf{c}_{1}^{1\dag} \mathbf{b}_{2,1} }
	\Bigg((\theta_{1}^{1} - 1)\mathbf{I} + \left(\mathbf{A}_{2} + \frac{ z_{2} - z_{1} }{ z_{3} - z_{1} }\mathbf{A}_{3}\right)\times \notag\\
	&\hskip0.7in \left(\mathbf{I} + \frac{ \mathbf{b}_{1,2} \mathbf{c}_{1}^{2\dag} }{ \theta_{1}^{1} - \theta_{1}^{2} - 1 } + 
	\frac{ \mathbf{b}_{1,3} \mathbf{c}_{1}^{3\dag} }{ \theta_{1}^{1} - \theta_{1}^{3} - 1 }\right)\mathbf{Q}
	\Bigg),\quad 
	\bar{\mathbf{c}}_{1}^{j\dag} = c_{1}^{j} \mathbf{c}_{1}^{j\dag} \mathbf{Q}\quad (j=2,3);\notag\\
	\bar{\mathbf{c}}_{2}^{1\dag} &= c_{2}^{1}\mathbf{c}_{1}^{1\dag},\quad
	\bar{\mathbf{c}}_{2}^{j\dag} = c_{2}^{j}\mathbf{c}_{2}^{j\dag}\left(\mathbf{I} - 
	\left(\mathbf{A}_{1} + \frac{ z_{1} - z_{2} }{ z_{3} - z_{2} }\mathbf{A}_{3}\right)\frac{ \mathbf{P} }{ \theta_{2}^{1} - \theta_{2}^{j} + 1 }
	\right)\quad (j=2,3,4); \notag\\
	\bar{\mathbf{c}}_{3}^{j\dag} &= c_{3}^{j} \mathbf{c}_{3}^{j\dag}\mathbf{R}^{-1}(z_{3}),\notag
\end{align}
where $c_{i}^{j}$ are again arbitrary non-zero constants.

Similarly to the previous example, we parameterize the matrices as 
\begin{alignat*}{2}
	\mathbf{B}_{1} &= \begin{bmatrix}
		1 & 0 & 0  \\ 0 & 1 & 0 \\ 0 & 0 & 1 
	\end{bmatrix},&\qquad 
	\mathbf{C}_{1}^{\dag} &= \begin{bmatrix}
		\theta_{1}^{1} & 0 & 0 & \alpha \\ 0 & \theta_{1}^{2} & 0 & \beta \\ 0 & 0 & \theta_{1}^{3} & \gamma
	\end{bmatrix},\\
	\mathbf{B}_{3} &= \begin{bmatrix}
		0 & 1 \\ 0 & 1 \\ 0 & 1 \\ 1 & 1
	\end{bmatrix},&\qquad 
	\mathbf{C}_{3}^{\dag} &= \begin{bmatrix}
		-(x + \theta_{3}) & 0 & x & \theta_{3} \\ 0 & \theta_{3} - y & y & 0
	\end{bmatrix},\\
	\mathbf{A}_{2} &= - (\mathbf{A}_{1} + \mathbf{A}_{3}).
\end{alignat*}
Using the condition that the eigenvalues of $\mathbf{A}_{2}$ are $\theta_{2}^{1},\dots,\theta_{2}^{4}$ we get a system of three linear equations on 
$\alpha$, $\beta$, and $\gamma$ with coefficients depending on $x$ and $y$, which gives us rational functions $\alpha(x,y)$, $\beta(x,y)$, and
$\gamma(x,y)$ (again, the resulting expressions are quite large and we omit them). Thus, the space of accessory parameters for Fuchsian systems of 
this type is two-dimensional and $x$ and $y$ are some coordinates on this space.

The resulting mapping $\psi: (x,y)\to (\bar{x},\bar{y})$ becomes
very complicated (and computing it requires a Computer Algebra System, in our work we have used \textbf{Mathematica}) and so we omit equations describing the map.
Nevertheless, it is possible to do a complete geometric analysis of the mapping. 

We find that the indeterminate points of $\psi$ are 
\begin{alignat*}{2}
	& p_{1}\left(\frac{ (\theta_{1}^{1} + \theta_{2}^{1} + \theta_{3})(\theta_{1}^{3} + \theta_{2}^{1}) }{ \theta_{1}^{1} - \theta_{1}^{3} },
	-\frac{ (\theta_{1}^{2} + \theta_{2}^{1} + \theta_{3})(\theta_{1}^{3} + \theta_{2}^{1}) }{ \theta_{1}^{2} - \theta_{1}^{3} }\right),&\quad 
	&p_{5}(0,0),\\
	&p_{2}\left(\frac{ (\theta_{1}^{1} + \theta_{2}^{2} + \theta_{3})(\theta_{1}^{3} + \theta_{2}^{2}) }{ \theta_{1}^{1} - \theta_{1}^{3} },
	-\frac{ (\theta_{1}^{2} + \theta_{2}^{2} + \theta_{3})(\theta_{1}^{3} + \theta_{2}^{2}) }{ \theta_{1}^{2} - \theta_{1}^{3} }\right),&\quad 
	&p_{6}(-\theta_{3},\theta_{3}),\\
	&p_{3}\left(\frac{ (\theta_{1}^{1} + \theta_{2}^{3} + \theta_{3})(\theta_{1}^{3} + \theta_{2}^{3}) }{ \theta_{1}^{1} - \theta_{1}^{3} },
	-\frac{ (\theta_{1}^{2} + \theta_{2}^{3} + \theta_{3})(\theta_{1}^{3} + \theta_{2}^{3}) }{ \theta_{1}^{2} - \theta_{1}^{3} }\right), &\\
	&p_{4}\left(\frac{ (\theta_{1}^{1} + \theta_{2}^{4} + \theta_{3})(\theta_{1}^{3} + \theta_{2}^{4}) }{ \theta_{1}^{1} - \theta_{1}^{3} },
	-\frac{ (\theta_{1}^{2} + \theta_{2}^{4} + \theta_{3})(\theta_{1}^{3} + \theta_{2}^{4}) }{ \theta_{1}^{2} - \theta_{1}^{3} }\right), 
\end{alignat*}
as well as the sequence of infinitely close points
\begin{equation*}
	p_{7}\left(\frac{ 1 }{ x } = 0,\frac{ 1 }{ y } = 0\right)\longleftarrow 
	p_{8}\left(\frac{ 1 }{ x } = 0,\frac{ x }{ y } = 
	- \frac{ (\theta_{1}^{1} + 1)(\theta_{1}^{2} - \theta_{1}^{3}) }{ (\theta_{1}^{2} + 1)(\theta_{1}^{1} - \theta_{1}^{3}) }\right).
\end{equation*}
Note also that the points $p_{1},\dots,p_{7}$  all lie on a $(2,2)$-curve $Q$ given by the equation
\begin{multline}
	\left((\theta_{1}^{3}-\theta_{1}^{1})x + (\theta_{1}^{3} - \theta_{1}^{2})y\right)^2 \\
	+(\theta_{1}^{1} - \theta_{1}^{2})\left((\theta_{1}^{3}- \theta_{1}^{2} - \theta_{3})(\theta_{1}^{3}-\theta_{1}^{1})x + 
	(\theta_{1}^{3}- \theta_{1}^{1} - \theta_{3})(\theta_{1}^{3} - \theta_{1}^{2})y\right)
	=0.
\end{multline}

Resolving indeterminate points of this map using blow-ups gives us the Okamoto surface $\mathcal{X}_{\theta}$ 
pictured on Figure~\ref{fig:dpa1-schlesinger-blowup}. Note that 
the $-2$ curves $D_{0} = 2H_{x} + 2H_{y} - F_{1} - F_{2} - F_{3} - F_{4} - F_{5} - F_{6} - 2F_{7}$
and $D_{1} = F_{7} - F_{8}$ touch at the point with coordinates 
$\left(\frac{ 1 }{ x } = 0, \frac{ x }{ y } = - \frac{\strut \theta_{1}^{2} - \theta_{1}^{3} }{\strut \theta_{1}^{1} - \theta_{1}^{3} }\right)$. 
Thus, we immediately see that this is indeed a surface of type $A_{1}^{(1)*}$.

\begin{figure}[h]
	\centering
		\includegraphics{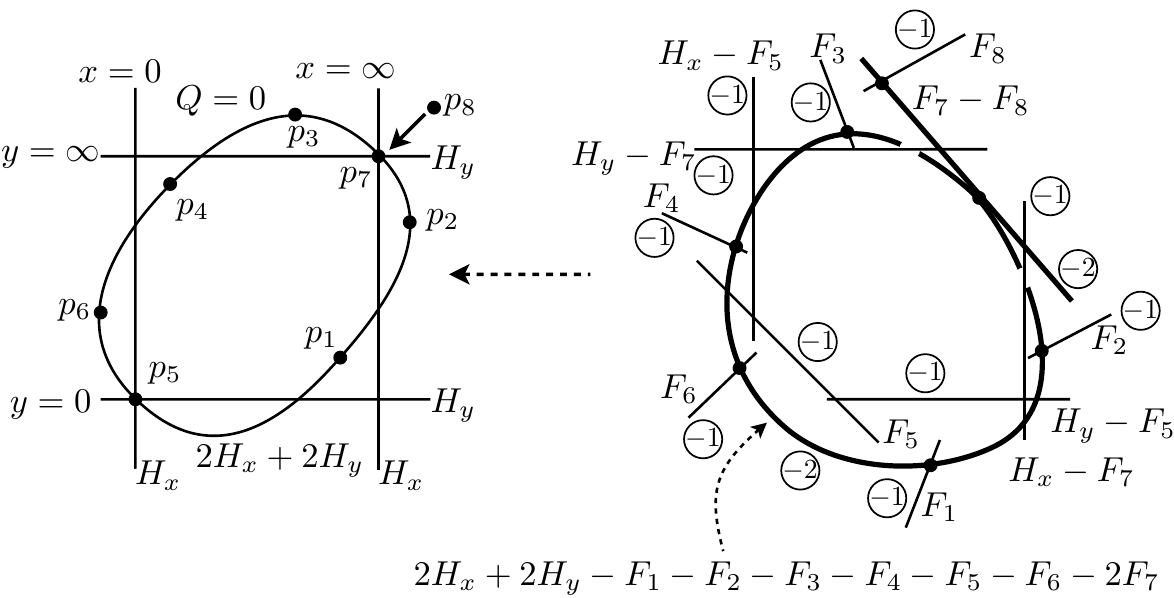}
	\caption{Okamoto surface $\mathcal{X}_{\mathbf{\theta}}$ for the Schlesinger transformations reduction to d-$P({A}_{1}^{(1)*})$.}
		\label{fig:dpa1-schlesinger-blowup}
\end{figure}

\subsubsection{Reduction to the standard form} 
\label{ssub:reduction_to_the_standard_formA1}

We now proceed to match the surface $\mathcal{X}_{\mathbf{\theta}}$ described by the blow-up diagram on Figure~\ref{fig:dpa1-schlesinger-blowup} 
with the surface $\mathcal{X}_{\mathbf{b}}$ 
described by diagram on Figure~\ref{fig:dpa1-standard-blowup}. As in the previous example,  
we look for rational classes $\mathcal{H}_{f}$, $\mathcal{H}_{g}$, $\mathcal{E}_{1},\dots, \mathcal{E}_{8}$ 
in $\operatorname{Pic}(\mathcal{X}_{\mathbf{\theta}})$ such that 
\begin{equation*}
	\mathcal{H}_{f}\bullet \mathcal{H}_{g} = 1, \  \mathcal{E}_{i}^{2} = -1, \  
	\mathcal{H}_{f}^{2} = \mathcal{H}_{g}^{2}  = \mathcal{H}_{f}\bullet \mathcal{E}_{i} = \mathcal{H}_{g}\bullet \mathcal{E}_{i}
	 = \mathcal{E}_{i}\bullet \mathcal{E}_{j} = 0,\  1\leq i\neq j\leq 8,
\end{equation*}
and the resulting configuration matches diagram on Figure~\ref{fig:dpa1-standard-blowup}. This time the (virtual) genus formula
$g(C) = (C^{2} + K_{\mathcal{X}}\bullet C)/2 + 1$ suggests we see that we should look for classes of rational curves of self-intersection zero among
$H_{x}$, $H_{y}$, or $H_{x} + H_{y} - F_{i} - F_{j}$ and for classes of rational curves of self-intersection $-1$ among $F_{i}$,
$H_{x} - F_{i}$, $H_{y} - F_{i}$, or $H_{x} + H_{y} - F_{i}-F_{j}-F_{k}$.

It is again convenient to start by comparing the $-2$-curves on both diagrams, 
\begin{align*}
	D_{0} &= 2H_{x} + 2 H_{y} - F_{1} - F_{2} - F_{3} - F_{4} - F_{5} - F_{6} - 2 F_{7} \\
	&= \mathcal{H}_{f} + \mathcal{H}_{g} - \mathcal{E}_{1} - \mathcal{E}_{2}
	- \mathcal{E}_{3} - \mathcal{E}_{4},\\
	D_{1} &= F_{7} - F_{8}= \mathcal{H}_{f} + \mathcal{H}_{g} - \mathcal{E}_{5} - \mathcal{E}_{6} - \mathcal{E}_{7} - \mathcal{E}_{8}.
\end{align*}
Given the uniformity of the coordinates of $p_{i}$, $i=1,\dots,4$, we see that 
it makes sense to choose $\mathcal{E}_{i} = F_{i}$ for $i=1,\dots,4$, and also we can put $\mathcal{E}_{8} = F_{8}$. 
This results in $\mathcal{H}_{f} + \mathcal{H}_{g} = 2 H_{x} + 2H_{y} - F_{5} - F_{6} - 2F_{7}$, which suggests taking
$\mathcal{H}_{f} = H_{x} + H_{y} - F_{5} - F_{7}$ and $\mathcal{H}_{y} = H_{x} + H_{y} - F_{6} - F_{7}$ (and so 
$\mathcal{H}_{f}\bullet \mathcal{H}_{g} = 1$ and $\mathcal{H}_{f}^{2} = \mathcal{H}_{g}^{2} = 0$). Then 
$\mathcal{E}_{5} + \mathcal{E}_{6} + \mathcal{E}_{7} = 2 H_{x} + 2 H_{y} - F_{5} - F_{6} - 3 E_{7}$, so we take
$\mathcal{E}_{5} = H_{y} - F_{7}$, $\mathcal{E}_{6} = H_{x} - F_{7}$, and $\mathcal{E}_{7} = H_{x} + H_{y} - F_{5} - F_{6} - F_{7}$.
It is not very hard to show that such choice satisfies all of our requirements and moreover, 
it is essentially unique (up to a permutation of the indices of exceptional divisors).
To summarize, we get the following identification:
\begin{alignat*}{4}
	\mathcal{H}_{f}&= H_{x} + H_{y} -F_{5} - F_{7}, & \quad  
	\mathcal{E}_{1} &= F_{1}, & \quad  \mathcal{E}_{3} &= F_{3}, & \quad  
	\mathcal{E}_{5} &= H_{y} - F_{7}, \\
	\mathcal{H}_{g}&= H_{x} + H_{y} -F_{6} - F_{7}, & \quad \mathcal{E}_{2} &= F_{2},  & \quad  
	\mathcal{E}_{4} &= F_{4}, & \quad  
	\mathcal{E}_{6} &= H_{x} - F_{7},\\ 
	\mathcal{E}_{7} &= H_{x} + H_{y} - F_{5} - F_{6} - F_{7}, & \quad  \mathcal{E}_{8} &= F_{8}.
\end{alignat*}

Let us now define the base coordinates $f$ and $g$ 
of the linear systems $|\mathcal{H}_{f}|$ and $|\mathcal{H}_{g}|$ that will map the exceptional fibers of the divisors
$\mathcal{E}_{i}$ to the points $\pi_{i}$ such that $\pi_{1},\dots,\pi_{4}$ are on a line $f + g = \text{const}$ and 
$\pi_{5},\dots,\pi_{8}$ are on the line $f+g=0$. 
Since the 
pencil $|\mathcal{H}_{f}|$ consists of $(1,1)$ curves on $\mathbb{P}^{1}\times \mathbb{P}^{1}$ passing through $p_{5}(0,0)$ and
$p_{7}(\infty,\infty)$,
\begin{align*}
	|\mathcal{H}_{f}| = |H_{x} + H_{y} - F_{5} - F_{7}| &= \{a xy + b x  + c y + d = 0 \mid a = d =0\} \\ &= \{b x + c y = 0\},
\intertext{and so we can initially define the base coordinate as $f_{1} = y/x$. Similarly, the 
pencil $|\mathcal{H}_{g}|$ consists of $(1,1)$ curves on $\mathbb{P}^{1}\times \mathbb{P}^{1}$ passing through $p_{6}(-\theta_{3},\theta_{3})$ and
$p_{7}(\infty,\infty)$, so}
	|\mathcal{H}_{g}| = |H_{x} + H_{y }- F_{6} - F_{7}| &= \{a xy + b x  + c y + d = 0 \mid a = 0, (c-b)\theta_{3} + d =0\} \\
	&= \{b(x + \theta_{3}) + c(y-\theta_{3}) = 0\},
\end{align*}
and $g_{1} = (y-\theta_{3})/(x + \theta_{3})$. Next we will do a series of affine change of variables to arrange that the points 
$p_{i}$, $i=1,\dots,4$, are on a line $f + g = \text{const}$. We have (below $i=1,\dots,4$)
\begin{alignat*}{2}
	f_{1}(\pi_{i}) &=- \frac{ (\theta_{1}^{1} - \theta_{1}^{3})(\theta_{1}^{2} + \theta_{2}^{i} + \theta_{3}) }{ 
	(\theta_{1}^{2} - \theta_{1}^{3})(\theta_{1}^{1} + \theta_{2}^{i} + \theta_{3})  },\qquad  &
	g_{1}(\pi_{i}) &=- \frac{ (\theta_{1}^{1} - \theta_{1}^{3})(\theta_{1}^{2} + \theta_{2}^{i}) }{ 
	(\theta_{1}^{2} - \theta_{1}^{3})(\theta_{1}^{1} + \theta_{2}^{i}) },\\
	\intertext{and therefore, it makes sense to put}
	f_{2} &= \frac{ (\theta_{1}^{2} - \theta_{1}^{3})}{(\theta_{1}^{1} - \theta_{1}^{3})}f_{1} + 1,\qquad & 
	g_{2} &= \frac{ (\theta_{1}^{2} - \theta_{1}^{3})}{(\theta_{1}^{1} - \theta_{1}^{3})}g_{1} + 1.\\
	\intertext{We then get}
	f_{2}(\pi_{i}) &=- \frac{ (\theta_{1}^{1} - \theta_{1}^{2})}{(\theta_{1}^{1} + \theta_{2}^{i} + \theta_{3})  },\qquad  &
	g_{2}(\pi_{i}) &=- \frac{ (\theta_{1}^{1} - \theta_{1}^{2})}{ (\theta_{1}^{1} + \theta_{2}^{i}) },\\	
	\intertext{and so we put}
	f_{3} &= \frac{ (\theta_{1}^{1} - \theta_{1}^{2})}{f_{2}},\qquad & 
	g_{3} &= \frac{ (\theta_{1}^{1} - \theta_{1}^{2})}{g_{2}}\\
	\intertext{to get}
	f_{3}(\pi_{i}) &= \theta_{1}^{1} + \theta_{2}^{i} + \theta_{3} ,\qquad  &
	g_{3}(\pi_{i}) &= \theta_{1}^{1} + \theta_{2}^{i}.
\end{alignat*}
Thus, our final change of coordinates is
\begin{align*}
	f&= f_{3} - \theta_{1}^{1} = - \frac{ \theta_{1}^{2}(\theta_{1}^{3} - \theta_{1}^{1})x - \theta_{1}^{1}(\theta_{1}^{2} - \theta_{1}^{3})y }{ 
	(\theta_{1}^{3} - \theta_{1}^{1})x - (\theta_{1}^{2} - \theta_{1}^{3})y  },\\
	g&= \theta_{1}^{1} - g_{3} = \frac{ \theta_{1}^{2}(\theta_{1}^{3} - \theta_{1}^{1})(x+\theta_{3}) - 
	\theta_{1}^{1}(\theta_{1}^{2} - \theta_{1}^{3})(y-\theta_{3}) }{ 
	(\theta_{1}^{3} - \theta_{1}^{1})(x+\theta_{3}) - (\theta_{1}^{2} - \theta_{1}^{3})(y-\theta_{3})  },
\end{align*}
and for $i=1,\dots,4$, $f(\pi_{i}) = f(p_{i}) = \theta_{2}^{i} + \theta_{3}$, $g(\pi_{i}) = g(p_{i}) = - \theta_{2}^{i}$ and so these 
points lie on the line $f + g = \theta_{3}$. We also get the identification of some of the parameters, $\theta_{2}^{i} = b_{i} - b$ for $i=1,\dots,4$, and 
$\theta_{3} = 2b$.
It remains to verify that this change of variables puts points $\pi_{5},\dots,\pi_{8}$ on the line $f + g=0$ and identify the remaining parameters.
The exceptional divisor $\mathcal{E}_{5}$ in the $(x,y)$-coordinates corresponds to the line $y=\infty$, and so $(f,g)(\pi_{5}) = (-\theta_{1}^{1},\theta_{1}^{1})$.
Similarly, $(f,g)(\pi_{6}) = (-\theta_{1}^{2},\theta_{1}^{2})$. The exceptional divisor $\mathcal{E}_{7}$ corresponds to the line $x + y = 0$, and 
so $(f,g)(\pi_{7}) = (-\theta_{1}^{3},\theta_{1}^{3})$. Finally, $\mathcal{E}_{8}$ is given by 
$ x = y = \infty, \frac{ x }{ y } = - \frac{\strut (\theta_{1}^{1} + 1)(\theta_{1}^{2} - \theta_{1}^{3}) }{\strut (\theta_{1}^{2} + 1)(\theta_{1}^{1} - \theta_{1}^{3}) }$,
and so  $(f,g)(\pi_{8}) = (1,-1)$. Thus, we see that indeed $\pi_{5},\dots,\pi_{8}$ lie on the line $f + g = 0$ and the remaining identification between
the parameters is $\theta_{1}^{1} = b_{5}$, $\theta_{1}^{2} = b_{6}$, $\theta_{1}^{3} = b_{7}$, and $b_{8} = -1$.

We are now in the position to compare the dynamic given by an elementary Schlesinger transformation with the dynamic of our model example
of d-$P(A_{1}^{(1)*})$. As in the previous example, there are two different ways to do so. First, we can compute the corresponding translation vector.
It is not very difficult to show that the action of $\psi_{*}$ of an elementary Schlesinger transformation 
$\left\{\begin{smallmatrix} 1&2\\1&1\end{smallmatrix}\right\}$ 
on the classes $\mathcal{H}_{f}$, $\mathcal{H}_{g}$, and $\mathcal{E}_{i}$ is
\begin{align*}
	 \mathcal{H}_{f}&\mapsto 4 \mathcal{H}_{f} + 3 \mathcal{H}_{g} - 3\mathcal{E}_{1} - \mathcal{E}_{2} - \mathcal{E}_{3} - 
	\mathcal{E}_{4} - 2 \mathcal{E}_{6} - 2\mathcal{E}_{7} -  2\mathcal{E}_{8},
	\\
	 \mathcal{H}_{g}&\mapsto 3 \mathcal{H}_{f} + 4 \mathcal{H}_{g} - 3\mathcal{E}_{1} - \mathcal{E}_{2} - \mathcal{E}_{3} - 
	\mathcal{E}_{4} - 2 \mathcal{E}_{6} - 2 \mathcal{E}_{7} - 2\mathcal{E}_{8},\\
	 \mathcal{E}_{1}&\mapsto \mathcal{E}_{5},
	\\
	 \mathcal{E}_{2}&\mapsto 2\mathcal{H}_{f} + 2\mathcal{H}_{g} - 2\mathcal{E}_{1} - \mathcal{E}_{3} - \mathcal{E}_{4} - 
	\mathcal{E}_{6} - \mathcal{E}_{7} - \mathcal{E}_{8},
	\\
	 \mathcal{E}_{3}&\mapsto 2\mathcal{H}_{f} + 2\mathcal{H}_{g} - 2\mathcal{E}_{1} - \mathcal{E}_{2} - \mathcal{E}_{4} - 
	\mathcal{E}_{6} - \mathcal{E}_{7} - \mathcal{E}_{8},
	\\
	 \mathcal{E}_{4}&\mapsto 2\mathcal{H}_{f} + 2\mathcal{H}_{g} - 2\mathcal{E}_{1} - \mathcal{E}_{2} - \mathcal{E}_{3} - 
	\mathcal{E}_{6} - \mathcal{E}_{7} - \mathcal{E}_{8},
	\\
	 \mathcal{E}_{5}&\mapsto 3\mathcal{H}_{f} + 3\mathcal{H}_{g} - 2\mathcal{E}_{1} - \mathcal{E}_{2} - \mathcal{E}_{3} - \mathcal{E}_{4} - 
	2\mathcal{E}_{6} - 2\mathcal{E}_{7} - 2\mathcal{E}_{8},
	\\
	 \mathcal{E}_{6}&\mapsto \mathcal{H}_{f} + \mathcal{H}_{g} - \mathcal{E}_{1} - \mathcal{E}_{7} - \mathcal{E}_{8},
	\\
	 \mathcal{E}_{7}&\mapsto \mathcal{H}_{f} + \mathcal{H}_{g} - \mathcal{E}_{1} - \mathcal{E}_{6} - \mathcal{E}_{8},
	\\
	 \mathcal{E}_{8}&\mapsto \mathcal{H}_{f} + \mathcal{H}_{g} - \mathcal{E}_{1} - \mathcal{E}_{6} - \mathcal{E}_{7}.
\end{align*}
Comparing the action of $\psi_{*}$ with the action of the standard dynamic $\varphi_{*}$ given by~(\ref{eq:dpa1-trans-std}) 
on the symmetry sub-lattice, we see that the translation vectors are different:
\begin{align*}
	\psi_{*}: (\alpha_{0}, \alpha_{1}, \alpha_{2}, \alpha_{3}, \alpha_{4}, \alpha_{5}, \alpha_{6},\alpha_{7})&\mapsto	
	(\alpha_{0}, \alpha_{1}, \alpha_{2}, \alpha_{3}, \alpha_{4}, \alpha_{5}, \alpha_{6},\alpha_{7}) + \\
	&\qquad
	(0,0,0,-1,0,1,0,0)(-K_{\mathcal{X}}),\\
	\varphi_{*}: (\alpha_{0}, \alpha_{1}, \alpha_{2}, \alpha_{3}, \alpha_{4}, \alpha_{5}, \alpha_{6},\alpha_{7})&\mapsto	
	(\alpha_{0}, \alpha_{1}, \alpha_{2}, \alpha_{3}, \alpha_{4}, \alpha_{5}, \alpha_{6},\alpha_{7}) + \\
	&\qquad
	(0,0,0,0,1,0,0,-2)(-K_{\mathcal{X}}). 
\end{align*}
To get more insight into the relationship between Schlesinger transformations and the standard d-$P(A_{1}^{(1)*})$ dynamic,
it is better to compute the action of d-$P(A_{1}^{(1)*})$ on the Riemann scheme of our Fuchsian system using the 
above identification of parameters:
\begin{align*}
	\left\{
	\begin{tabular}{cccc}
		$z_{1}$ 		& $z_{2}$		& 	$z_{3}$	\\
		$\theta_{1}^{1}$	& $\theta_{2}^{1}$	& 	$\theta_{3}$ \\
		$\theta_{1}^{2}$	& $\theta_{2}^{2}$	& 	$\theta_{3}$ \\
		$\theta_{1}^{3}$	& $\theta_{2}^{3}$	& 	$0$ \\
		$0$	& $\theta_{2}^{4}$	& 	$0$ \\
	\end{tabular}
	\right\}&\overset{\left\{\begin{smallmatrix} 1&2\\1&1\end{smallmatrix}\right\}}{\strut\longmapsto}
	\left\{
	\begin{tabular}{cccc}
		$z_{1}$ 		& $z_{2}$		& 	$z_{3}$	\\
		$\theta_{1}^{1}-1$	& $\theta_{2}^{1}+1$	& 	$\theta_{3}$ \\
		$\theta_{1}^{2}$	& $\theta_{2}^{2}$	& 	$\theta_{3}$ \\
		$\theta_{1}^{3}$	& $\theta_{2}^{3}$	& 	$0$ \\
		$0$	& $\theta_{2}^{4}$	& 	$0$ \\
	\end{tabular}
	\right\},	\\
	\left\{
	\begin{tabular}{cccc}
		$z_{1}$ 		& $z_{2}$		& 	$z_{3}$	\\
		$\theta_{1}^{1}$	& $\theta_{2}^{1}$	& 	$\theta_{3}$ \\
		$\theta_{1}^{2}$	& $\theta_{2}^{2}$	& 	$\theta_{3}$ \\
		$\theta_{1}^{3}$	& $\theta_{2}^{3}$	& 	$0$ \\
		$0$	& $\theta_{2}^{4}$	& 	$0$ \\
	\end{tabular}
	\right\}&\overset{\text{d-$P(A_{1}^{(1)*})$}}{\strut\longmapsto}
	\left\{
	\begin{tabular}{cccc}
		$z_{1}$ 		& $z_{2}$		& 	$z_{3}$	\\
		$\theta_{1}^{1}$	& $\theta_{2}^{1}-1$	& 	$\theta_{3}+2$ \\
		$\theta_{1}^{2}$	& $\theta_{2}^{2}-1$	& 	$\theta_{3}+2$ \\
		$\theta_{1}^{3}$	& $\theta_{2}^{3}-1$	& 	$0$ \\
		$0$					& $\theta_{2}^{4}-1$	& 	$0$ \\
	\end{tabular}
	\right\}.
\end{align*}
Thus, we see that the standard d-$P(A_{1}^{(1)*})$ dynamic changes the multiple eigenvalue and so 
it requires the use of rank-two elementary Schlesinger transformations. In fact, the action on the
Riemann scheme suggests that 
\begin{equation*}
	\text{d-$P(A_{1}^{(1)*})$} = \left\{\begin{smallmatrix} 2&3\\3&1\\4&2\end{smallmatrix}\right\} \circ 
	\left\{\begin{smallmatrix} 2&3\\1&1\\2&2\end{smallmatrix}\right\}.
\end{equation*}

Thus, consider the transformation $\left\{\begin{smallmatrix} 2&3\\1&1\\2&2\end{smallmatrix}\right\}$ changing the characteristic 
indices by $\bar{\theta}_{2}^{1} = \theta_{2}^{1} - 1$, $\bar{\theta}_{2}^{1} = \theta_{2}^{1} - 1$, and $\bar{\theta}_{3} = \theta_{3} - 1$.
This transformation is given by equations~(\ref{eq:bb-cb-generators-rank2}--\ref{eq:cb-am-rank2}) that now take the form
\begin{align}
	\bar{\mathbf{b}}_{1,j} &= \frac{ 1 }{ c_{1}^{j} }\mathbf{R}(z_{1})\mathbf{b}_{1,j}\quad(j=1,2,3);\qquad
	\bar{\mathbf{c}}_{1}^{j\dag} = c_{1}^{j} \mathbf{c}_{1}^{j\dag}\mathbf{R}^{-1}(z_{i})\quad(j=1,2,3);\\
	\bar{\mathbf{b}}_{2,1}& = \frac{ 1 }{ c_{2}^{1} }\mathbf{Q}_{2}\mathbf{b}_{3,1},\quad
	\bar{\mathbf{b}}_{2,2} = \frac{ 1 }{ c_{2}^{2} }\mathbf{Q}_{1}\mathbf{b}_{3,2},\notag\\
	\bar{\mathbf{b}}_{2,j}& = \frac{ 1 }{ c_{2}^{j} }\left(
	\mathbf{I} - \left(\frac{ \mathcal{P}_{1} }{ \theta_{2}^{1} - \theta_{2}^{j} - 1} + 
	\frac{ \mathcal{P}_{2} }{ \theta_{2}^{2} - \theta_{2}^{j} - 1}\right)\left(\frac{ z_{3} - z_{2} }{ z_{1} - z_{2} }\mathbf{A}_{1} + \mathbf{A}_{3}\right)
	\right)	\mathbf{b}_{2,j}\quad(j=3,4);\notag\\
	\bar{\mathbf{c}}_{2}^{1\dag} &= c_{2}^{1} \frac{ \mathbf{c}_{2}^{1\dag} }{ \mathbf{c}_{2}^{1\dag}\mathbf{Q}_{2}\mathbf{b}_{3,1} }
	\Bigg(
		(\theta_{2}^{1} - 1)\mathbf{I} + \left(\frac{ z_{3} - z_{2} }{ z_{1} - z_{2} }\mathbf{A}_{1} + \mathbf{A}_{3}\right)\times \notag\\
		&\hskip2.5in 
		\left(\mathbf{I} + \frac{ \mathbf{b}_{2,3}\mathbf{c}_{2}^{3\dag} }{ \theta_{2}^{1} - \theta_{2}^{3} - 1 } + 
		\frac{ \mathbf{b}_{2,4}\mathbf{c}_{2}^{4\dag} }{ \theta_{2}^{1} - \theta_{2}^{4} - 1 }\right)\mathcal{Q}
	\Bigg),\notag\\
	\bar{\mathbf{c}}_{2}^{2\dag} &= c_{2}^{2} \frac{ \mathbf{c}_{2}^{2\dag} }{ \mathbf{c}_{2}^{2\dag}\mathbf{Q}_{1}\mathbf{b}_{3,2} }
	\Bigg(
		(\theta_{2}^{2} - 1)\mathbf{I} + \left(\frac{ z_{3} - z_{2} }{ z_{1} - z_{2} }\mathbf{A}_{1} + \mathbf{A}_{3}\right)\times \notag\\
		&\hskip2.5in 
		\left(\mathbf{I} + \frac{ \mathbf{b}_{2,3}\mathbf{c}_{2}^{3\dag} }{ \theta_{2}^{2} - \theta_{2}^{3} - 1 } + 
		\frac{ \mathbf{b}_{2,4}\mathbf{c}_{2}^{4\dag} }{ \theta_{2}^{2} - \theta_{2}^{4} - 1 }\right)\mathcal{Q}
	\Bigg),\notag\\	
	\bar{\mathbf{c}}_{2}^{j\dag} &= c_{2}^{j}\mathbf{c}_{2}^{j\dag}\mathcal{Q}\quad(j=2,3);\notag\\
	\bar{\mathbf{b}}_{3,1} &= \frac{ 1 }{ c_{3}^{1} }\left((\theta_{3} + 1)\mathbf{I} + \mathcal{Q}
	\left(\frac{ z_{2} - z_{3} }{ z_{1} - z_{3} }\mathbf{A}_{1} + \mathbf{A}_{2}\right)
	\right) \frac{ \mathbf{b}_{3,1} }{ \mathbf{c}_{2}^{1\dag} \mathbf{Q}_{2} \mathbf{b}_{3,1} },\notag\\
	\bar{\mathbf{b}}_{3,2} &= \frac{ 1 }{ c_{3}^{2} }\left((\theta_{3} + 1)\mathbf{I} + \mathcal{Q}
	\left(\frac{ z_{2} - z_{3} }{ z_{1} - z_{3} }\mathbf{A}_{1} + \mathbf{A}_{2}\right)
	\right) \frac{ \mathbf{b}_{3,2} }{ \mathbf{c}_{2}^{2\dag} \mathbf{Q}_{1} \mathbf{b}_{3,2} };\notag\\
	\bar{\mathbf{c}}_{3}^{1\dag} &= c_{3}^{1}\mathbf{c}_{2}^{1\dag}\mathbf{Q}_{2},\quad
	\bar{\mathbf{c}}_{3}^{2\dag} = c_{3}^{2}\mathbf{c}_{2}^{2\dag}\mathbf{Q}_{1},
	\notag
\end{align}
where 
\begin{align*}
	\mathbf{P}_{1} &= \frac{ \mathbf{b}_{3,1}\mathbf{c}_{2}^{1\dag} }{ \mathbf{c}_{2}^{1\dag} \mathbf{b}_{3,1}},\quad \mathbf{Q}_{1} = \mathbf{I} - \mathbf{P}_{1},
	\qquad 
	\mathbf{P}_{2} = \frac{ \mathbf{b}_{3,2}\mathbf{c}_{2}^{2\dag} }{ \mathbf{c}_{2}^{2\dag} \mathbf{b}_{3,2}},\quad \mathbf{Q}_{2} = \mathbf{I} - \mathbf{P}_{2};\\
	\mathcal{P}_{1} &= \frac{ \mathbf{Q}_{2}\mathbf{P}_{1} }{ \operatorname{Tr}(\mathbf{Q}_{2}\mathbf{P}_{1}) },\quad
	\mathcal{P}_{2} = \frac{ \mathbf{Q}_{1}\mathbf{P}_{2} }{ \operatorname{Tr}(\mathbf{Q}_{1}\mathbf{P}_{2}) },\qquad
	\mathcal{P} = \mathcal{P}_{1} + \mathcal{P}_{2},\quad \mathcal{Q} = \mathbf{I} - \mathcal{P};\\
	\mathbf{R}(z) &= \mathbf{I} + \frac{ z_{2} - z_{3} }{ z - z_{2} }\mathcal{P}.
\end{align*}

The Okamoto surface for this dynamic is the same as before (it depends only on the Fuchsian system rather than a particular transformation),
and its action $\psi^{12}_{*}$ on $\operatorname{Pic}(\mathcal{X}_{\mathbf{\theta}})$ is given by
\begin{align*}
	 \mathcal{H}_{f}&\mapsto 6 \mathcal{H}_{f} + 3 \mathcal{H}_{g} - \mathcal{E}_{1} - \mathcal{E}_{2} - 3\mathcal{E}_{3} - 
	3\mathcal{E}_{4} - 2 \mathcal{E}_{5} - 2 \mathcal{E}_{6} - 2\mathcal{E}_{7} -  2\mathcal{E}_{8},
	\\
	 \mathcal{H}_{g}&\mapsto 3 \mathcal{H}_{f} + 2 \mathcal{H}_{g} - 2\mathcal{E}_{3} - 2\mathcal{E}_{4} - \mathcal{E}_{5} - 
	\mathcal{E}_{6} - \mathcal{E}_{7}  -2\mathcal{E}_{8},\\
	 \mathcal{E}_{1}&\mapsto 3\mathcal{H}_{f} + 2\mathcal{H}_{g} - \mathcal{E}_{2} - 2\mathcal{E}_{3} - 2\mathcal{E}_{4} - 
	\mathcal{E}_{5} - \mathcal{E}_{6} - \mathcal{E}_{7} - \mathcal{E}_{8},
	\\
	 \mathcal{E}_{2}&\mapsto 3\mathcal{H}_{f} + 2\mathcal{H}_{g} - \mathcal{E}_{1} - 2\mathcal{E}_{3} - 2\mathcal{E}_{4} - 
	\mathcal{E}_{5} - \mathcal{E}_{6} - \mathcal{E}_{7} - \mathcal{E}_{8},
	\\
	 \mathcal{E}_{3}&\mapsto \mathcal{H}_{f} - \mathcal{E}_{4},
	\\
	 \mathcal{E}_{4}&\mapsto \mathcal{H}_{f} - \mathcal{E}_{3},
	\\
	 \mathcal{E}_{5}&\mapsto 2\mathcal{H}_{f} + \mathcal{H}_{g} - \mathcal{E}_{3} - \mathcal{E}_{4} -
	\mathcal{E}_{6} - \mathcal{E}_{7} - \mathcal{E}_{8},
	\\
	 \mathcal{E}_{6}&\mapsto 2\mathcal{H}_{f} + \mathcal{H}_{g} - \mathcal{E}_{3} - \mathcal{E}_{4} -
	\mathcal{E}_{5} - \mathcal{E}_{7} - \mathcal{E}_{8},
	\\
	 \mathcal{E}_{7}&\mapsto 2\mathcal{H}_{f} + \mathcal{H}_{g} - \mathcal{E}_{3} - \mathcal{E}_{4} -
	\mathcal{E}_{5} - \mathcal{E}_{6} - \mathcal{E}_{8},
	\\
	 \mathcal{E}_{8}&\mapsto 2\mathcal{H}_{f} + \mathcal{H}_{g} - \mathcal{E}_{3} - \mathcal{E}_{4} -
	\mathcal{E}_{5} - \mathcal{E}_{6} - \mathcal{E}_{7},
\end{align*}
and the action on the symmetry sub-lattice is
\begin{align*}
	\psi^{12}_{*}: (\alpha_{0}, \alpha_{1}, \alpha_{2}, \alpha_{3}, \alpha_{4}, \alpha_{5}, \alpha_{6},\alpha_{7})&\mapsto	
	(\alpha_{0}, \alpha_{1}, \alpha_{2}, \alpha_{3}, \alpha_{4}, \alpha_{5}, \alpha_{6},\alpha_{7}) + \\
	&\qquad
	(0,0,1,0,0,0,0,-1)(-K_{\mathcal{X}}).
\end{align*}

Similarly, the action $\psi^{34}_{*}$ of the rank-two transformation 
$\left\{\begin{smallmatrix} 2&3\\3&1\\4&2\end{smallmatrix}\right\}$
on $\operatorname{Pic}(\mathcal{X}_{\mathbf{\theta}})$ is given by
\begin{align*}
	 \mathcal{H}_{f}&\mapsto 6 \mathcal{H}_{f} + 3 \mathcal{H}_{g} - 3\mathcal{E}_{1} - 3\mathcal{E}_{2} - \mathcal{E}_{3} - 
	\mathcal{E}_{4} - 2 \mathcal{E}_{5} - 2 \mathcal{E}_{6} - 2\mathcal{E}_{7} -  2\mathcal{E}_{8},
	\\
	 \mathcal{H}_{g}&\mapsto 3 \mathcal{H}_{f} + 2 \mathcal{H}_{g} - 2\mathcal{E}_{1} - 2\mathcal{E}_{2} - \mathcal{E}_{5} - 
	\mathcal{E}_{6} - \mathcal{E}_{7}  -2\mathcal{E}_{8},\\
	 \mathcal{E}_{1}&\mapsto \mathcal{H}_{f} - \mathcal{E}_{2},
	\\
	 \mathcal{E}_{2}&\mapsto \mathcal{H}_{f} - \mathcal{E}_{1},
	\\
	 \mathcal{E}_{3}&\mapsto 3\mathcal{H}_{f} + 2\mathcal{H}_{g} - 2\mathcal{E}_{1} - 2\mathcal{E}_{2} - \mathcal{E}_{4} - 
	\mathcal{E}_{5} - \mathcal{E}_{6} - \mathcal{E}_{7} - \mathcal{E}_{8},
	\\
	 \mathcal{E}_{4}&\mapsto 3\mathcal{H}_{f} + 2\mathcal{H}_{g} - 2\mathcal{E}_{1} - 2\mathcal{E}_{3} - \mathcal{E}_{3} - 
	\mathcal{E}_{5} - \mathcal{E}_{6} - \mathcal{E}_{7} - \mathcal{E}_{8},
	\\
	 \mathcal{E}_{5}&\mapsto 2\mathcal{H}_{f} + \mathcal{H}_{g} - \mathcal{E}_{1} - \mathcal{E}_{2} -
	\mathcal{E}_{6} - \mathcal{E}_{7} - \mathcal{E}_{8},
	\\
	 \mathcal{E}_{6}&\mapsto 2\mathcal{H}_{f} + \mathcal{H}_{g} - \mathcal{E}_{1} - \mathcal{E}_{2} -
	\mathcal{E}_{5} - \mathcal{E}_{7} - \mathcal{E}_{8},
	\\
	 \mathcal{E}_{7}&\mapsto 2\mathcal{H}_{f} + \mathcal{H}_{g} - \mathcal{E}_{1} - \mathcal{E}_{2} -
	\mathcal{E}_{5} - \mathcal{E}_{6} - \mathcal{E}_{8},
	\\
	 \mathcal{E}_{8}&\mapsto 2\mathcal{H}_{f} + \mathcal{H}_{g} - \mathcal{E}_{1} - \mathcal{E}_{2} -
	\mathcal{E}_{5} - \mathcal{E}_{6} - \mathcal{E}_{7},
\end{align*}
and its action on the symmetry sub-lattice is
\begin{align*}
	\psi^{34}_{*}: (\alpha_{0}, \alpha_{1}, \alpha_{2}, \alpha_{3}, \alpha_{4}, \alpha_{5}, \alpha_{6},\alpha_{7})&\mapsto	
	(\alpha_{0}, \alpha_{1}, \alpha_{2}, \alpha_{3}, \alpha_{4}, \alpha_{5}, \alpha_{6},\alpha_{7}) + \\
	&\qquad
	(0,0,-1,0,1,0,0,-1)(-K_{\mathcal{X}}).
\end{align*}

Thus,
\begin{align*}
	\psi^{34}_{*}\circ\psi^{12}_{*}: (\alpha_{0}, \alpha_{1}, \alpha_{2}, \alpha_{3}, \alpha_{4}, \alpha_{5}, \alpha_{6},\alpha_{7})&\mapsto	
	(\alpha_{0}, \alpha_{1}, \alpha_{2}, \alpha_{3}, \alpha_{4}, \alpha_{5}, \alpha_{6},\alpha_{7}) + \\
	&\qquad (0,0,1,0,0,0,0,-1)(-K_{\mathcal{X}}) + \\
	&\qquad (0,0,-1,0,1,0,0,-1)(-K_{\mathcal{X}})\\
	&=(\alpha_{0}, \alpha_{1}, \alpha_{2}, \alpha_{3}, \alpha_{4}, \alpha_{5}, \alpha_{6},\alpha_{7}) + \\
	&\qquad(0,0,0,0,1,0,0,-2)(-K_{\mathcal{X}}) = 	\varphi_{*}.
\end{align*}

Finally, using a Computer Algebra System we can verify by a direct calculation that 
	\begin{equation*}
		\text{d-$P(A_{1}^{(1)*})$} = \left\{\begin{smallmatrix} 2&3\\3&1\\4&2\end{smallmatrix}\right\} \circ 
		\left\{\begin{smallmatrix} 2&3\\1&1\\2&2\end{smallmatrix}\right\}
	\end{equation*}
holds on the level of equations as well.

\section{Conclusion} 
\label{sec:conclusion}
In this work we further develop a theory of discrete Schlesinger evolution equations that correspond to 
elementary Schlesinger transformations of ranks one and two of Fuchsian systems. We showed how to obtain 
difference Painlev\'e equations of types d-$P\left({A}_{2}^{(1)*}\right)$ and 
d-$P\left({A}_{1}^{(1)*}\right)$ as reductions of elementary Schlesinger transformations. We also 
tried to make our computations very detailed in order to illustrate general techniques on how to study 
discrete Painlev\'e equations geometrically.

One interesting observation is that standard examples of difference Painlev\'e equations of these types in 
both cases can be represented as compositions of elementary Schlesinger transformations. Thus, Schlesinger
dynamic should in principle be simpler and it would be interesting to find a nice and simple form of equations
giving this dynamic. 

\bibliographystyle{amsalpha}

\providecommand{\bysame}{\leavevmode\hbox to3em{\hrulefill}\thinspace}
\providecommand{\MR}{\relax\ifhmode\unskip\space\fi MR }
\providecommand{\MRhref}[2]{%
  \href{http://www.ams.org/mathscinet-getitem?mr=#1}{#2}
}
\providecommand{\href}[2]{#2}

\end{document}